\newtheorem{theorem}{Theorem}[section]
\newtheorem{lemma}[theorem]{Lemma}
\newtheorem{proposition}[theorem]{Proposition}
\newtheorem{definition}[theorem]{Definition}
\newtheorem{corollary}[theorem]{Corollary}
\newtheorem{remark}[theorem]{Remark}
\newtheorem{question}[theorem]{Question}
\newtheorem{assumption}[theorem]{Assumption}
\def\R{{\mathbb R}}
\def\Z{{\mathbb Z}}
\def\T{{\mathbb T}}
\def\Rd{{\R}^d}
\numberwithin{equation}{section}
\def\BState{\State\hskip-\ALG@thistlm}
\begin{document}

\title{Stable Phaseless Sampling and Reconstruction of Real-Valued
 Signals with Finite Rate of Innovations} 
\author{Cheng Cheng}

\address{Cheng: Department of Mathematic, Duke University, Durham, NC, 27708, USA, and The Statistical and Applied Mathematical Sciences Institute (SAMSI), Durhma, NC, 27709, USA}
\email{cheng87@math.duke.edu}

\author{Qiyu Sun}
\thanks{This work is partially supported by
SAMSI under the National Science Foundation (DMS-1638521), and the  National Science Foundation (DMS-1412413).}

\address{Sun: Department of Mathematics,
University of Central Florida,
Orlando, Florida, 32816, USA}
\email{qiyu.sun@ucf.edu}
\maketitle

\begin{abstract}
A spatial signal  is defined by its evaluations on the whole domain.  In this paper, we consider stable reconstruction of
real-valued signals with finite rate of innovations (FRI), up to a sign, from
their magnitude measurements on the whole domain or  their phaseless samples on a discrete subset. 
FRI signals appear
    in many engineering applications such as  magnetic resonance spectrum,  ultra wide-band communication and
    electrocardiogram.
For an FRI signal, we introduce an undirected  graph to  describe its topological structure.
We establish the equivalence between the graph connectivity and phase retrievability of FRI signals,
and we apply the graph connected component decomposition  to find all FRI signals that have  the same magnitude  measurements
  as the original FRI signal has.
We construct  discrete sets with finite density  explicitly so that
 magnitude measurements of FRI signals on the whole domain are determined by
their samples taken on those  discrete subsets.
In this paper, we also propose a stable algorithm with linear complexity to reconstruct FRI signals from their phaseless samples  on the above phaseless sampling set.  The  proposed algorithm  is demonstrated theoretically and numerically  to  provide a suboptimal approximation to the original  FRI signal in magnitude measurements.
\end{abstract}

\section{Introduction}

A spatial signal  $f$ on a domain $D$ is defined by its evaluations $f(x), x\in D$.
In this paper, we consider the problem whether and how  a real-valued signal $f$  can be reconstructed,  up to a global sign,  from
 magnitude information $|f(x)|, x\in D$, or  from its phaseless samples
$|f(\gamma)|, \gamma\in \Gamma$, taken on a discrete set  $\Gamma\subset D$ in a stable way.
The above problem has been discussed for  bandlimited signals \cite{T11}
 and  wavelet signals residing in a  principal shift-invariant space \cite{YC16, CJS17, sunw17}.
 It is a  nonlinear ill-posed  problem which can be solved only if  we
have some extra information  about the signal $f$.
 In this paper,
we always assume  that the signal  $f$  has a  parametric representation,
 \begin{equation}\label{representation}
f(x)=\sum_{\lambda\in \Lambda} c_\lambda \phi_\lambda(x), \ x\in D,
\end{equation}
where
$c=(c_\lambda)_{\lambda\in \Lambda}$ is an unknown real-valued parameter vector, $\Lambda\subset D$ is a discrete set with finite  density,
and  $\Phi=(\phi_\lambda)_{\lambda\in \Lambda}$ is a  vector of nonzero basis signals
$\phi_\lambda, \lambda\in \Lambda$,  essentially supported in a neighborhood of the innovative position $\lambda\in \Lambda$.
  Those  signals  appear
    in many engineering applications such as  magnetic resonance spectrum,  ultra wide-band communication and
    electrocardiogram.
       Our representing signals of the form \eqref{representation}
     are bandlimited signals, signals in a shift-invariant space and spline signals on triangulations.
 Following the terminology in \cite{vetterli02},
      signals of the form \eqref{representation}  have finite rate of innovations  (FRI)
      and their rate of innovations is the  density of  the set $\Lambda$ \cite{donoho06, dragotti07, Sunaicm10, vetterli02}.

\smallskip
Given a signal $f$ with the parametric representation \eqref{representation},  let
 ${\mathcal M}_{f}$ contain all signals $g$ of the form   \eqref{representation} such that
 \begin{equation} \label{mf.def} |g(x)|=|f(x)|, \ x\in D.\end{equation}
  As $-f$ and $f$ have the same magnitude measurements on the whole domain, we have that  ${\mathcal M}_f\supset \{\pm f\}$.
 A natural question is  whether the above inclusion is an equality.

  \begin{question}\label{question0} Can we characterize all signals $f$ of the form \eqref{representation}
  so that ${\mathcal M}_f=\{\pm f\}$?
   \end{question}

An  equivalent statement to the above  question  is whether a  signal $f$ is determined, up to a sign, from the magnitude
information $|f(x)|, x\in D$. 
 The above question is an infinite-dimensional phase retrieval problem, which has been discussed  for bandlimited signals \cite{T11},
  wavelet signals  in a  principal shift-invariant space \cite{YC16, CJS17, sunw17}, and spatial signals in a linear space \cite{YC16}.
  The reader may refer to  \cite{Rima16, alaifari16, cahill15,   grohs17, mallat14, 
 volker14,    shenoy16}
    for historical remarks and additional references on phase retrieval in an infinite-dimensional linear space.
    In Section \ref{connectivity.section}, we introduce  an undirected graph
${\mathcal G}_f$ for a signal $f$ of the form \eqref{representation}, and
we provide an answer to  Question \ref{question0} by  showing that ${\mathcal M}_f=\{\pm f\}$ if and only if  ${\mathcal G}_f$  is connected, see Theorem \ref{pr.thm}.

\smallskip
 For a signal $f$ with a parametric representation \eqref{representation}, the  graph ${\mathcal G}_f$
is not necessarily to be connected. This leads to our next question. 

  \begin{question}\label{question1} Can we find the set ${\mathcal M}_f$ for any signal $f$ of the form \eqref{representation}?
      \end{question}
      \smallskip

 \smallskip

For a signal $f$  of the form \eqref{representation},
we can decompose its  graph ${\mathcal G}_f$
 uniquely to a union of  connected components  ${\mathcal G}_i, i\in I$, 
  \begin{equation}\label{graphdecomp.eq1}
{\mathcal G}_f= \cup_{i\in I}  {\mathcal G}_i. 
\end{equation}
Then  we can construct signals $f_i, i\in I$, of the form \eqref{representation} with ${\mathcal G}_{f_i}={\mathcal G}_i, i\in I$,
such that
\begin{equation}  \label{decomposition.def1}
 f_if_{i'}=0 \ {\rm for \ all \ distinct}\ i, i'\in  I,
\end{equation}
\begin{equation} \label{decomposition.def2} 
 {\mathcal M}_{f_i}=\{\pm f_i\}, \ i\in I,\end{equation}
and
\begin{equation} \label{decomposition.def3} f=\sum_{i\in I} f_i,
\end{equation}
see Theorem \ref{decomposition.thm00}.
Due to the mutually disjoint support property \eqref{decomposition.def1}  for  signals $f_i, i\in I$, and the connectivity
for the graphs ${\mathcal G}_{f_i}, i\in I$,
we can interprete  the above adaptive  decomposition   
 as that landscape of the original signal $f$ is composed by islands of signals $f_i, i\in I$, see Figure \ref{spline_separable.fig} and also \cite{Rima16, grohs17} for bandlimited signals.
 \begin{figure}[t] 
\begin{center}
\includegraphics[width=55mm, height=38mm]{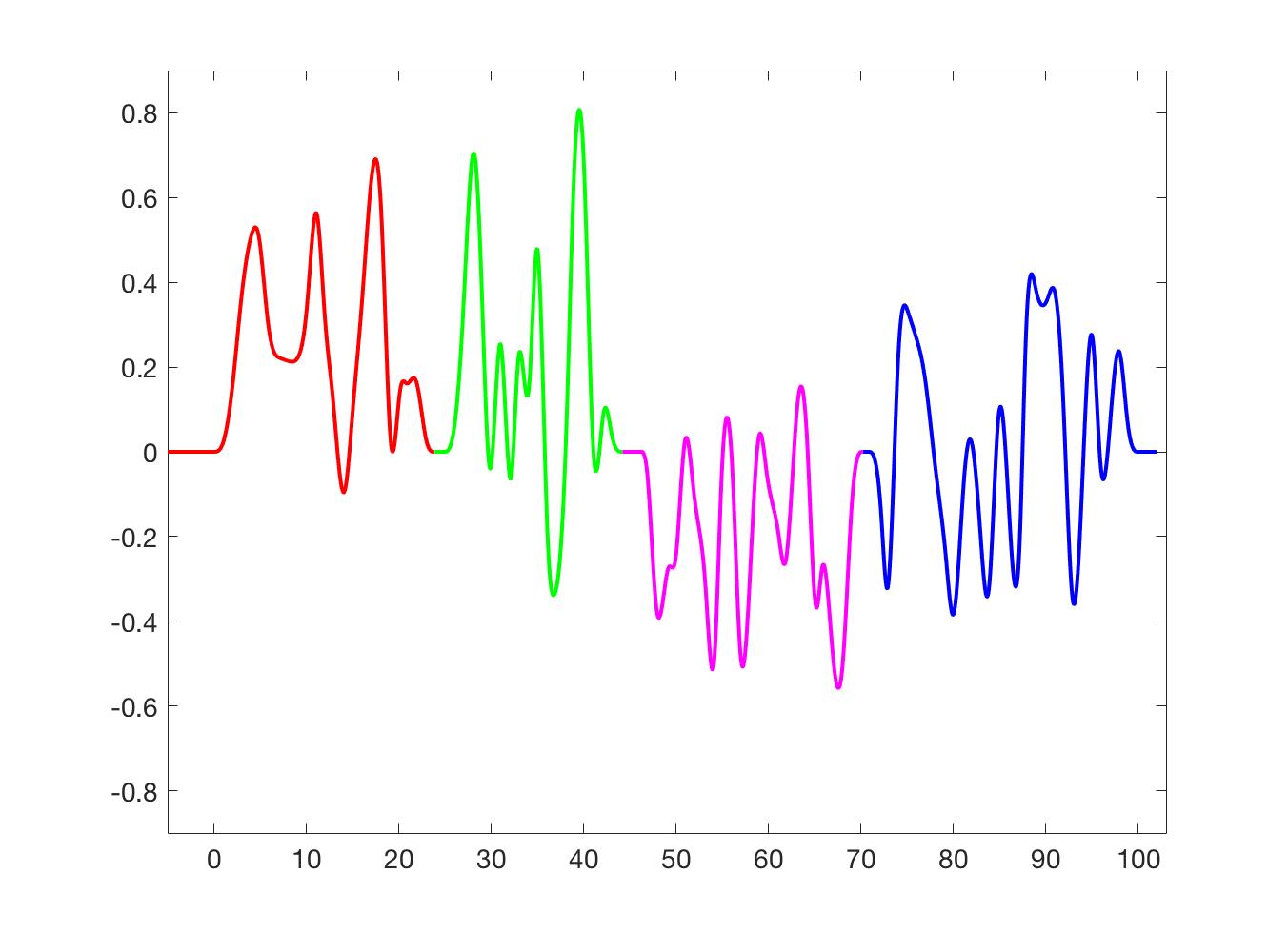}
\includegraphics[width=65mm, height=38mm]{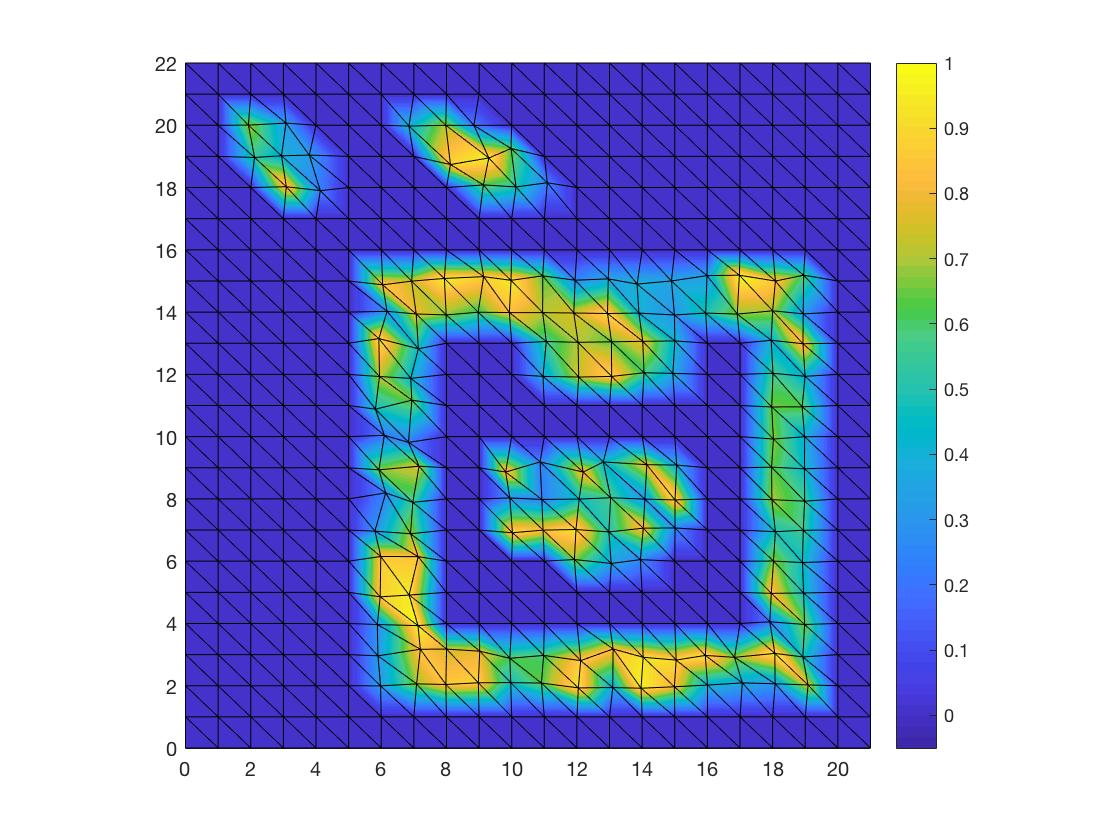}
\caption{Plotted on the left is a non-uniform cubic spline signal, while on the right is a  piecewise affine signal on a triangulation. They both have four ``islands"
 in the decomposition
 \eqref{decomposition.def1}, \eqref {decomposition.def2} and \eqref{decomposition.def3}.   }
 \label{spline_separable.fig}
\end{center}
\end{figure}

By \eqref{decomposition.def1} and \eqref{decomposition.def3}, we have
 $${\mathcal M}_f\supset \Big\{\sum_{i\in I} \delta_i f_i,\ \delta_i\in \{-1, 1\}, i\in I\Big\}.$$
 In  Section \ref{decomposition.section}, we provide an 
  answer to Question \ref{question1} by showing
 in Theorem \ref{generalphaseretrieval.thm}  that the above inclusion is in fact an equality
 for any signal $f$ of the form \eqref{representation}.
Therefore landscapes of signals $g\in {\mathcal M}_f$ are combination of
 islands of the original  signal $f$ and their  reflections.

\smallskip
Let $f$ be  a signal of the form \eqref{representation}.
To consider phaseless sampling and reconstruction on a discrete set $\Gamma\subset D$, we let
 ${\mathcal M}_{f, \Gamma}$ contain all signals $g$ of the form   \eqref{representation} such that
 \begin{equation}\label{mfgamma.def} |g(\gamma)|=|f(\gamma)|, \ \gamma\in \Gamma,\end{equation}
 and  ${\mathcal N}_\Gamma$ contain all signals $h$
of the form  \eqref{representation} such that
 \begin{equation}\label{ngamma.def} h(\gamma)=0, \ \gamma\in \Gamma.\end{equation}
 By \eqref{mf.def}, \eqref{mfgamma.def} and \eqref {ngamma.def},
 we have
 \begin{equation}\label{mfgamma.eq1}
  {\mathcal M}_f=  {\mathcal M}_{f, D},\   {\mathcal N}_D=\{0\},
 \end{equation}
 and
 \begin{equation}\label{mfgamma.eq2}
 {\mathcal M}_f +{\mathcal N}_\Gamma\subset {\mathcal M}_{f, \Gamma} \ {\rm for\ all} \ \Gamma\subset D.
 \end{equation}
This leads to the third question. 

  \begin{question}\label{question2} Can we find all discrete sets $\Gamma$ such that
   ${\mathcal M}_{f, \Gamma}= {\mathcal M}_f$
  for all signals $f$ of the form \eqref{representation}?
      \end{question}

By \eqref{mfgamma.eq2},  a necessary condition for the equality ${\mathcal M}_{f, \Gamma}= {\mathcal M}_f$ to hold for some signal $f$ of the form   \eqref{representation} is that ${\mathcal N}_\Gamma=\{0\}$, which means that all signals
of the form   \eqref{representation} are determined from their samples taken on $\Gamma$.
The reader may refer to \cite{dragotti07, sunaicm14, sunsiam06, vetterli02} and references therein for stable sampling and reconstruction of
 FRI signals. 

In Section \ref{phaseless.section}, we show the existence of  a discrete set $\Gamma$ with finite density such that ${\mathcal M}_{f, \Gamma}= {\mathcal M}_f$ for all signals  $f$ of the form \eqref{representation}.
In  Theorem \ref{samplingset2.thm}, we construct such a discrete set  $\Gamma$ explicitly    under the assumption that
the linear space for signals of the form \eqref{representation} to reside in has local complement property on a family of open sets.
The  local complement property, see Definition \ref{lcp.def0}, is introduced in \cite{CJS17} and it is closely related to
 the  complement property
  for ideal sampling functionals in \cite{YC16} and the complement property for frames in Hilbert/Banach spaces \cite{alaifari16, BCE06, Bandeira14, cahill15}.  The  local complement property on a bounded open set can be characterized by phase retrievable frames associated with
  the generator $\Phi$ and the sampling set $\Gamma$ on a finite-dimensional space, see Proposition \ref{lcpandprf.pr}.
  The reader may refer to  \cite{BBCE09, BCE06,   candes13, CSV12, casazza17,   bswx18, han2017,  jaganathany15, schechtman15,  wangxu14}
  and references therein for historical remarks and recent advances on
  finite-dimensional   phase retrievable frames.

An equivalent statement to the equality ${\mathcal M}_{f, \Gamma}={\mathcal M}_f$ is that  magnitude measurements $|f(x)|, x\in D$, on the whole domain $D$  are determined by their samples $|f(\gamma)|, \gamma\in \Gamma$, taken on a discrete set $\Gamma$.
 In practical applications,
phaseless samples are usually corrupted by some bounded deterministic/random noises $\eta(\gamma), \gamma\in \Gamma$, and the  available noisy phaseless samples are
 \begin{equation*} 
 z_\eta(\gamma)= |f(\gamma)|+\eta(\gamma),\  \gamma\in \Gamma.
 \end{equation*}
 Set  $\eta=(\eta(\gamma))_{\gamma\in \Gamma}$  and $z_\eta=(z_\eta(\gamma))_{\gamma\in \Gamma}$.
This leads to the  fourth question  to be discussed in this paper.  

\begin{question}
\label{question2}  Can we find an algorithm $\Delta$   such that the reconstructed signal $g_\eta=\Delta (z_\eta)$
is an  approximation to the original signal $f$ in  magnitude measurements?
 \end{question}

\smallskip

In Section \ref{stablereconstruction.section}, we propose an algorithm with linear complexity,
which provides an answer to Question \ref{question2}. Under the assumption that the generator $\Phi$ is well localized and uniformly bounded, we show in Theorem \ref{stability.thm} that the original signal $f$ and the reconstructed signal $g_\eta$ are well approximated by
some signals $f_\eta$ and $h_\eta$ of the form \eqref{representation} that have the same magnitude  measurements on the domain $D$. Therefore
the reconstructed signal $g_\eta$ provides  a suboptimal approximation
 to the original signal $f$ in   magnitude measurements, i.e.,  there exists an
 absolute constant  $C$ independent on the original signal $f$ and the noise $\eta$ such that
\begin{equation}\label{suboptimal.def2}
\sup_{x\in D}\big||g_\eta(x)|-|f(x)|\big |\le C \sup_{\gamma\in \Gamma} |\eta(\gamma)|.
\end{equation}
As an application of the above  estimate, we conclude that the phaseless sampling operator $S: f\longmapsto (|f(\gamma)|)_{\gamma\in \Gamma}$ is bi-Lipschitz
in magnitude measurements,
 see Corollary \ref{phaselesssamplingstability.cor}.

The paper is organized as follows.  In Section \ref{preliminaries.section},  we present some preliminaries on
the linear space $V(\Phi)$ for signals of the form \eqref{representation} to reside in.
In Section \ref{connectivity.section}, we introduce a graph structure for any signal in $V(\Phi)$ and use  its connectivity to provide an 
answer to
   Question  \ref{question0}.
 In Section \ref{decomposition.section}, we introduce a landscape decomposition for a signal $f\in V(\Phi)$ and use it to find all signals in ${\mathcal M}_f$.  In Section \ref{phaseless.section}, we construct a discrete set $\Gamma$ with finite density such that
 ${\mathcal M}_{f, \Gamma}={\mathcal M}_f$ for all $f\in V(\Phi)$.
In Section \ref{stablereconstruction.section}, we introduce a stable algorithm  $\Delta$ with linear complexity
to reconstruct signals in $V(\Phi)$ from their noisy phaseless samples taken on a discrete set $\Gamma$.
In Section \ref{simulation.section}, we demonstrate the stable reconstruction of our proposed algorithm $\Delta$
to reconstruct one-dimensional non-uniform spline signals and  two-dimensional piecewise affine signals on triangulations from their noisy phaseless samples.
In Appendix \ref{phaseless.appendix},
we show  that the  density of a discrete set $\Gamma$ with ${\mathcal M}_{f, \Gamma}={\mathcal M}_f, f\in V(\Phi)$,
 must be no less than the innovation rate of signals in $V(\Phi)$.

\section{Preliminaries} 
\label{preliminaries.section}  

In this section,  we present some preliminaries on the domain $D$ for signals to define and
the linear space $V(\Phi)$ for signals with the parametric expression \eqref{representation} to reside in. 
%

\smallskip

Spatial signals  in this paper are defined on a domain $D$.
  Our representing models
 are the $d$-dimensional  Euclidean space $\Rd$, the $d$-dimensional torus $\T^d$ and  the  simple graph to describe a spatially distributed network \cite{CJSacha18}. In this paper, we always assume the following for the domain $D$ \cite{CJSacha18,  macias1979, Yangbook2013}.

\begin{assumption}  \label{domain.assumption}
  The domain $D$ is equipped with a distance $\rho$ and a Borel measure $\mu$  so that
\begin{equation} \label {domain.assumption.eq2}
\sup_{x\in D} \mu\big(B(x, r)\big)<\infty
\end{equation}
and
 \begin{equation}\label{density.thm.eq1}
\liminf_{s\to \infty} \inf_{x\in D} \frac{\mu(B(x, s-r))}{\mu(B(x, s))}=1,\ r\ge 0,
\end{equation}
 where $B(x, r)=\{y\in D: \ \rho(x,y)\le r\}$ is the closed ball with center $x$ and  radius $r$.
\end{assumption}

\smallskip


 Spatial signals  $f$  with the  parametric representation
\eqref{representation}
 reside in the linear space
\begin{equation} \label{Vphi.def}
V(\Phi):=\Big\{\sum_{\lambda\in \Lambda}  c_\lambda \phi_\lambda: \ c_\lambda\in \R\ {\rm for \ all}\ \lambda\in \Lambda\Big\}\end{equation}
generated by $\Phi=(\phi_\lambda)_{\lambda\in \Lambda}$. 
Denote  the cardinality of a set $E$ by $\# E$.
In this paper, we always assume the following to basis signals $\phi_\lambda, \lambda\in \Lambda$.

\begin{assumption}
\label{generator.assumption}
The discrete set $\Lambda$
 has finite density
  \begin{equation}\label{generator.assumption.eq1} 
D_+(\Lambda):=\limsup\limits_{{r\to \infty}}\ \sup_{x\in D} \frac{\sharp\big(\Lambda\cap B(x, r)\big)}{\mu\big(B(x, r)\big)}<\infty,\end{equation}
the nonzero basis signals $\phi_\lambda, \lambda\in \Lambda$, are  continuous and supported in   balls with center $\lambda$ and  fixed radius $r_0>0$ independent of $\lambda$,
\begin{equation} \label{generator.assumption.eq2}\phi_\lambda (x)=0 \ {\rm for \ all} \ x\not\in B(\lambda, r_0),\  \lambda\in \Lambda;
\end{equation}
 and any signal in $V(\Phi)$ has
 a  unique parametric representation \eqref{representation}.

\end{assumption}

The prototypical forms of the space $V(\Phi)$  
are principal shift-invariant spaces generated by
the shifts of a compactly supported function $\phi$,
 twisted shift-invariant spaces generated by (non-)uniform Gabor frame system (or Wilson basis) in the time-frequency analysis (see \cite{BCHL06, pete, G_book, Janssen95, Ron97} and references  therein), and nonuniform spline signals \cite{BTU04, HA78, Schumaker}.
  The linear space $V(\Phi)$ was introduced in \cite{Sun08, sunsiam06} to model signals with finite rate of innovation (FRI).
Following the terminology in \cite{vetterli02},  signals in the linear space $V(\Phi)$ have rate   of innovation  $D_+(\Lambda)$
and
 innovative positions  $\lambda\in \Lambda$.

An equivalent statement to the  unique parametric representation \eqref{representation}
of signals in $V(\Phi)$ is that
 the generator $\Phi$ has {\em global linear independence}, i.e.,
  the map
\begin{equation} \label{generator.assumption.eq3}
  c:= (c_\lambda)_{\lambda\in \Lambda}\longmapsto   c^T\Phi:=\sum_{\lambda\in \Lambda} c_\lambda \phi_\lambda\end{equation}
  is one-to-one  from the  space $\ell(\Lambda)$ of all sequences on $\Lambda$ to the linear space  $V(\Phi)$ \cite{jia92, ron89}.
For any open set $A$, define
\begin{equation}\label{KA.def}
K_A=\{\lambda\in \Lambda:  \  \phi_\lambda \not\equiv  0\ {\rm  on} \ A\}.
 \end{equation}
 A strong version  of  the global linear independence  \eqref{generator.assumption.eq3}
is {\em local linear independence} 
 on an open set $A\subset D$, i.e.,
 \begin{equation}\label{locallinearindependence.def}  \dim V(\Phi)|_A= \# K_A,\end{equation}
 where  for a linear space $V$ we denote its dimension and  restriction on  a set $A$ by $\dim V$ and $V|_A$
  respectively.
 Observe that
 the restriction of  the linear space $V(\Phi)$ on an bounded open set $A$
is generated by $\phi_\lambda, \lambda\in K_A$. Then an equivalent formulation of  the local linear independence on a bounded open set $A$
 is that
 \begin{equation}\label{locallinearindependence.def}
\sum_{\lambda\in \Lambda} c_\lambda\phi_\lambda(x) =0 \  {\rm  on } \ x\in A\end{equation} implies that  $ c_\lambda=0$ for all $\lambda\in K_A$  \cite{jia92, Sunaicm10}.

Set
\begin{equation}\label{graph.remark.eq3}
S_\Phi(\lambda, \lambda'):=\{x\in D: \ \phi_\lambda (x) \phi_{\lambda'}(x) \ne 0 \}, \  \lambda, \lambda'\in \Lambda,
\end{equation}
and use the abbreviation $S_\Phi(\lambda):=S_{\Phi}(\lambda, \lambda)$ when $\lambda'=\lambda\in \Lambda$.
 One may verify that the generator  $\Phi$ has global linear independence  \eqref{generator.assumption.eq3} if it has local linear independence on  a family of open sets
$T_\theta, \theta\in \Theta$, such that
\begin{equation}
\label{Ttheta.def}
 S_\Phi(\lambda, \lambda')
\cap \big(\cup_{\theta\in \Theta} T_\theta\big)\ne \emptyset
\end{equation}
 for  all pairs $(\lambda, \lambda')\in \Lambda\times \Lambda$ with $S_\Phi(\lambda, \lambda')\ne \emptyset$.
We remark that  a family of open sets
$T_\theta, \theta\in \Theta$,  satisfying \eqref{Ttheta.def}
is not necessarily a covering of the domain $D$, however,
the converse is true, cf. Corollary \ref{decomposition.cor}.

\section{Phase retrievability and graph connectivity} 
\label{connectivity.section}

In this section, we characterize all signals $f\in V(\Phi)$
 that are determined, up to a sign, from their magnitude measurements 
on the  whole domain $D$, i.e.,  ${\mathcal M}_f=\{\pm f\}$.

Given  a signal $f=\sum_{\lambda\in \Lambda} c_\lambda \phi_\lambda\in V(\Phi)$,
we  define an
 undirected graph
\begin{equation}\label{graphG.def}
{\mathcal G}_f:=(V_f, E_f),\end{equation}
where
 \begin{equation}
 V_f:=\{\lambda\in \Lambda: \ c_{\lambda}\neq  0\}\end{equation}
and
 \begin{equation*} E_f := \big\{({\lambda},{\lambda}')\in  V_f\times V_f:  \ { \lambda}\neq {\lambda}' \ {\rm and} \
\phi_{\lambda}\phi_{\lambda'} \not\equiv 0 \big\}.
\end{equation*}
 For a signal $f\in V(\Phi)$, the graph ${\mathcal G}_f$  in \eqref{graphG.def} is well-defined  by \eqref{generator.assumption.eq3},
 and it   was introduced in \cite{CJS17}
  when the generator $\Phi=(\phi(\cdot-k))_{k\in \Z^d}$ is obtained from  shifts of a compactly supported function $\phi$.
    Its vertex set $ V_f$
  contains all innovative positions $\lambda\in \Lambda$ with nonzero amplitude  $c_\lambda$,   and
its edge set $ E_f$
contains all innovative position pairs $(\lambda, \lambda')$  in $V_f\times V_f$ with
basis signals $\phi_\lambda$ and $\phi_{\lambda'}$ having
 overlapped supports, i.e.,  
\begin{equation}\label{graph.remark.eq1}
(\lambda ,\lambda')\in E_f \ {\rm if\ and\ only\ if} \ \lambda, \lambda'\in V_f \ {\rm and} \
(\lambda, \lambda')\in E_\Phi,
\end{equation}
 where   $S_\Phi(\lambda, \lambda'), (\lambda, \lambda')\in \Lambda\times \Lambda$, are given in
 \eqref{graph.remark.eq3} and
 \begin{equation}  \label{graph.remark.eq2}
 E_\Phi:=\{(\lambda, \lambda')\in \Lambda\times \Lambda: \ S_\Phi(\lambda, \lambda')\ne \emptyset\}.
 \end{equation}

  To study phase retrievability  of signals in $V(\Phi)$,
 we recall the local complement property  of  a linear space  of  real-valued  signals
 \cite{CJS17}.

\begin{definition}\label{lcp.def0} 
\rm Let   $A$ be an open subset of the domain $D$.
We say that a linear space $V$  of real-valued signals on the domain $D$  has {\em local complement property} on  $A$ if
for any $ A'\subset  A$, there does not exist  $f, g\in V$ such that $f, g\not\equiv 0$ on $A$, but $f({ x})=0$ for all ${x}\in  A'$ and $g({y})=0$ for all ${ y}\in  A\backslash  A'$.
 \end{definition}

The  local complement property  is the  complement property  in \cite{YC16}
  for ideal sampling functionals on a set, cf. the complement property for frames in Hilbert/Banach spaces (\cite{alaifari16, BCE06, Bandeira14, cahill15}). Local complement property is closely related to local phase retrievability.
In fact, following the argument in \cite 
{YC16}, the  linear space $V$ has the local complement property on $A$ if and only if all signals in $V$ is {\em local phase retrievable} on $A$, i.e., for any $ f, g\in V$ satisfying $|g({x})|=|f({x})|, { x}\in A$, there exists $\delta\in \{-1, 1\}$ such that $g({ x})=\delta f({x})$ for all ${x}\in  A$.

\smallskip

%
%
%
In this section,
 we establish the equivalence between
 phase retrievability of a nonzero signal $f\in V(\Phi)$ and
 connectivity  of  its graph ${\mathcal G}_f$, which is  established in \cite{CJS17} for signals residing in a principal shift-invariant space.

\begin{theorem}\label{pr.thm}
Let  $D$ be a  domain satisfying Assumption \ref{domain.assumption},
  $\Phi: = (\phi_\lambda)_{\lambda\in \Lambda}$  
be  a family of basis functions  satisfying  Assumption \ref{generator.assumption},
${\mathcal T}:=\{T_\theta, \theta\in \Theta\}$ be a family of open sets satisfying
\eqref{Ttheta.def},
and let $V(\Phi)$  be the linear space \eqref{Vphi.def}  generated by $\Phi$.
Assume that for any $T_\theta\in \mathcal T$,   $\Phi$  
has local linear independence on  $T_\theta$ and  $V(\Phi)$ has local complement property on $T_\theta$.
Then for a nonzero signal $f\in V(\Phi)$, ${\mathcal M}_f=\{\pm f\}$ 
 if and only if
 the   graph ${\mathcal G}_f$ in \eqref{graphG.def} is connected.
\end{theorem}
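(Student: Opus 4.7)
The plan is to prove both implications separately, with the "disconnected implies non-uniqueness" direction being the easier constructive half.

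For the backward direction, suppose $\mathcal{G}_f$ decomposes as $\mathcal{G}_1 \sqcup \mathcal{G}_2$ with vertex partition $V_f = U_1 \sqcup U_2$, both $U_j$ nonempty. Define $f_j := \sum_{\lambda \in U_j} c_\lambda \phi_\lambda$; these are nonzero elements of $V(\Phi)$ by global linear independence. The key observation is that $f_1 \cdot f_2 \equiv 0$ on $D$: if $f_1(x_0) f_2(x_0) \neq 0$ at some point $x_0$, we could extract indices $\lambda_1 \in U_1$, $\lambda_2 \in U_2$ with $\phi_{\lambda_1}(x_0)\phi_{\lambda_2}(x_0) \neq 0$, which would force the edge $(\lambda_1,\lambda_2) \in E_f$ and contradict the partition into distinct components. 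Setting $g := f_1 - f_2$ therefore gives $g^2 = f_1^2 + f_2^2 = (f_1+f_2)^2 = f^2$, so $|g| = |f|$; and $g + f = 2f_1 \neq 0$, $g - f = -2f_2 \neq 0$, so $g \in \mathcal{M}_f \setminus \{\pm f\}$.

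For the forward direction, assume $\mathcal{G}_f$ is connected and let $g = \sum_{\lambda \in \Lambda} d_\lambda \phi_\lambda \in \mathcal{M}_f$. The strategy is to localize on each $T_\theta \in \mathcal{T}$. Applying the local complement property to $(f,g)$ on $T_\theta$ and using $|g|=|f|$ yields the dichotomy that either $f \equiv g \equiv 0$ on $T_\theta$, or there exists $\delta_\theta \in \{-1,+1\}$ with $g \equiv \delta_\theta f$ on $T_\theta$. In either case, local linear independence of $\Phi$ on $T_\theta$ then delivers
\begin{equation*}
d_\lambda = \delta_\theta c_\lambda \ \text{for every} \ \lambda \in K_{T_\theta},
\end{equation*}
where in the vanishing case any sign $\delta_\theta$ works since both sides are zero.

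The main obstacle is to glue these local signs into a single global sign, which is exactly where connectivity enters. For every edge $(\lambda,\lambda') \in E_f$, the hypothesis \eqref{Ttheta.def} supplies some $T_\theta$ with $S_\Phi(\lambda,\lambda') \cap T_\theta \neq \emptyset$, so both $\lambda,\lambda' \in K_{T_\theta}$; since moreover $c_\lambda, c_{\lambda'} \neq 0$, local linear independence forces $f \not\equiv 0$ on $T_\theta$, so the genuine sign $\delta_\theta \in \{\pm 1\}$ is determined and gives $d_\lambda/c_\lambda = d_{\lambda'}/c_{\lambda'} = \delta_\theta$. Defining $\epsilon_\lambda := d_\lambda/c_\lambda$ on $V_f$ thus yields a $\{\pm 1\}$-valued function constant along edges of $\mathcal{G}_f$, hence constant on $V_f$ by connectivity, giving $d_\lambda = \delta c_\lambda$ for $\lambda \in V_f$. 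For $\lambda \in \Lambda \setminus V_f$ one has $c_\lambda = 0$, and applying \eqref{Ttheta.def} with $\lambda'=\lambda$ (using $\phi_\lambda \not\equiv 0$, so $S_\Phi(\lambda) \neq \emptyset$) places $\lambda$ in some $K_{T_\theta}$, whence the local relation forces $d_\lambda = 0 = \delta c_\lambda$. Hence $g = \delta f$. The delicate point to verify carefully is that this sign-propagation uses only the combinatorial hypothesis \eqref{Ttheta.def} on intersecting supports and does not require the sets $T_\theta$ to cover $D$.
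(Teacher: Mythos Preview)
Your proposal is correct and follows essentially the same approach as the paper. The only minor difference is in the backward (``disconnected $\Rightarrow$ nonuniqueness'') direction: the paper invokes a nonseparability characterization (Lemma~\ref{nonseparable.lem}) to reach a contradiction, whereas you directly exhibit the witness $g=f_1-f_2$ with $|g|=|f|$ but $g\neq\pm f$; this is a cosmetic streamlining, and your forward direction matches the paper's sign-propagation argument essentially verbatim.
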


We  remark that the local complement assumption in Theorem \ref{pr.thm} is satisfied when
$\Phi$ has local linear independence on all open sets.

\begin{proposition}\label{llcimplieslc.pr}
 Let  $\Phi: = (\phi_\lambda)_{\lambda\in \Lambda}$  
satisfy Assumption \ref{generator.assumption}.
If  $\Phi$ has local linear independence on all open sets, then
 there exist ${\mathcal T}:=\{T_\theta, \theta\in \Theta\}$ satisfying  \eqref{Ttheta.def} 
such that $V(\Phi)$ has local complement property on every $T_\theta\in {\mathcal T}$. 
\end{proposition}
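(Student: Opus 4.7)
The plan is to construct, for each pair $(\lambda, \lambda') \in \Lambda \times \Lambda$ with $S_\Phi(\lambda, \lambda') \neq \emptyset$, an open set $T_{\lambda, \lambda'} \subset S_\Phi(\lambda, \lambda')$ on which every basis function $\phi_\mu$, $\mu \in \Lambda$, is either identically zero or nowhere zero. Collecting these into $\mathcal{T} := \{T_{\lambda, \lambda'}\}$ makes \eqref{Ttheta.def} automatic, since each $T_{\lambda,\lambda'}$ is itself contained in $S_\Phi(\lambda,\lambda')$. The remaining task is to show that this ``constant local support'' feature, together with the hypothesized local linear independence of $\Phi$ on every open set, forces local complement property on each $T_{\lambda,\lambda'}$.

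For the construction, fix $x_0 \in S_\Phi(\lambda, \lambda')$ and choose an open $U \ni x_0$ with $U \subset B(x_0, r_0)$ on which both $\phi_\lambda$ and $\phi_{\lambda'}$ remain nonzero (by continuity). The support condition \eqref{generator.assumption.eq2} and the triangle inequality give $\phi_\mu \equiv 0$ on $U$ for every $\mu$ outside $\Lambda_0 := \Lambda \cap B(x_0, 2r_0)$, and $\Lambda_0$ is finite by \eqref{generator.assumption.eq1} and \eqref{domain.assumption.eq2}. For each $\mu \in \Lambda_0$ the boundary $B_\mu := \partial\{x \in D: \phi_\mu(x) \neq 0\}$ is closed with empty interior (its interior equals the set itself, which is open), so the \emph{finite} union $\bigcup_{\mu \in \Lambda_0} B_\mu$ also has empty interior and no Baire appeal is needed. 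Pick $x_1 \in U \setminus \bigcup_{\mu \in \Lambda_0} B_\mu$. For each $\mu \in \Lambda_0$, $x_1$ lies either in the open set $\{\phi_\mu \neq 0\}$ or in the open set $D \setminus \overline{\{\phi_\mu \neq 0\}}$, so $\phi_\mu$ is monochromatic (nowhere zero or identically zero) on some open neighborhood of $x_1$. Intersecting these finitely many neighborhoods with $U$ yields an open set $T_{\lambda,\lambda'} \ni x_1$ on which every $\phi_\mu$, $\mu \in \Lambda$, is monochromatic, and since $\phi_\lambda(x_1)\phi_{\lambda'}(x_1) \neq 0$ we have $T_{\lambda,\lambda'} \subset S_\Phi(\lambda,\lambda')$.

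To verify local complement property on $T := T_{\lambda,\lambda'}$, suppose a failure: $A' \subset T$ and $f, g \in V(\Phi)$ with $f, g \not\equiv 0$ on $T$, $f(x) = 0$ for $x \in A'$, and $g(y) = 0$ for $y \in T \setminus A'$. If $A'$ is dense in $T$, continuity of $f$ gives $f \equiv 0$ on $T$, a contradiction; otherwise $O := T \setminus \overline{A'}$ is a nonempty open subset on which $g$ vanishes. Writing $g = \sum_\mu c_\mu \phi_\mu$, the local linear independence of $\Phi$ on $O$ forces $c_\mu = 0$ for every $\mu \in K_O$. The monochromatic property of $T$ gives $K_O = K_T$ (any $\mu \in K_T$ has $\phi_\mu$ nowhere zero on $T$, hence on $O$), so $c_\mu = 0$ for every $\mu \in K_T$; since $\phi_\mu \equiv 0$ on $T$ for $\mu \notin K_T$, this yields $g \equiv 0$ on $T$, contradicting $g \not\equiv 0$ on $T$.

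The main obstacle I expect is the construction in the second paragraph, namely stabilizing the pattern $K^*(x) := \{\mu : \phi_\mu(x) \neq 0\}$ on an open subset of $S_\Phi(\lambda,\lambda')$; the boundary-avoidance argument achieves this elementarily thanks to the finiteness of $\Lambda_0$. A subtle point is that local linear independence must be applied to the a priori unknown open set $O \subset T$, so the hypothesis ``on every open set'' (rather than on a prescribed family) is used in an essential way.
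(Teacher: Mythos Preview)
Your proof is correct and shares the paper's core mechanism: build open sets on which every basis function $\phi_\mu$ is either nowhere zero or identically zero, then use local linear independence on open subsets to force the complement property. The implementations differ. The paper indexes by \emph{maximal} subsets $\theta\subset\Lambda$ with $S_\Phi(\theta):=\bigcap_{\lambda\in\theta}S_\Phi(\lambda)\ne\emptyset$ and sets $T_\theta=S_\Phi(\theta)$; maximality immediately gives $\phi_{\lambda'}\equiv 0$ on $T_\theta$ for every $\lambda'\notin\theta$, which is exactly your monochromaticity, obtained more directly. Your boundary-avoidance argument is a valid alternative and makes the finiteness of the relevant $\Lambda_0$ do the work explicitly. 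For the complement property itself, the paper passes through the equivalent ``$|f|=|g|$ on $T_\theta$ implies $f=\pm g$ on $T_\theta$'' characterization and analyzes $(f+g)(f-g)=0$, whereas you argue straight from Definition~\ref{lcp.def0} via the density/open-subset dichotomy; both routes ultimately invoke local linear independence on an open subset $O\subset T_\theta$ together with $K_O=K_{T_\theta}$. One cosmetic point: your parenthetical ``its interior equals the set itself, which is open'' does not actually explain why $\partial\{\phi_\mu\ne 0\}$ has empty interior; the standard one-line reason is that any interior point of $\overline{O}\setminus O$ for open $O$ would have a neighborhood disjoint from $O$, contradicting membership in $\overline{O}$.
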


\begin{proof}   
Define
$S_{\Phi}(\theta)=\cap_{\lambda\in \theta} S_{\Phi}(\lambda)$ for  a set $\theta\subset \Lambda$. We say that $\theta\subset \Lambda$ is maximal if  $S_{\Phi}(\theta)\ne \emptyset$ and $S_{\Phi}(\theta')=\emptyset$ for all $\theta'\supsetneq 
\theta$.
By \eqref{generator.assumption.eq1} and \eqref{generator.assumption.eq2}, any maximal set  contains finitely many elements.
Denote the family of all maximal sets  by $\Theta$ and define
$T_\theta= S_\Phi(\theta), \theta \in \Theta$.  Clearly ${\mathcal T}:=\{T_\theta, \theta\in \Theta\}$ satisfies
\eqref{Ttheta.def}, because any $\theta\subset \Lambda$ with $S_\Phi(\theta)\ne \emptyset$ is a subset of some maximal set in  $\Theta$.

Now it  remains to prove that $V(\Phi)$ has local complement property on $T_\theta, \theta \in \Theta$. Take an arbitrary $\theta \in \Theta$
and two signals $f,g\in V(\Phi)$ satisfying $|f(x)|=|g(x)|$  for all $x\in T_\theta$.
Then
\begin{equation}\label{llcimplieslc.pr.pf.eq1}
(f+g)(x) (f-g)(x)=0\ {\rm  for\ all} \ x\in T_\theta.
\end{equation} Write $f+g=\sum_{\lambda\in \Lambda} c_\lambda\phi_{\lambda}$ and
$f-g=\sum_{\lambda\in \Lambda} d_\lambda\phi_{\lambda}$, and  set $B_1=\{x\in T_\theta: \  (f+g)(x)\ne 0\}$ and $B_2=\{x\in T_\theta: \ (f-g)(x)\ne 0\}$.
Then
\begin{equation}\label{llcimplieslc.pr.pf.eq2}
\Big(\sum_{\lambda\in \theta} c_\lambda\phi_{\lambda}(x)\Big)  
\Big( \sum_{\lambda\in \theta} d_\lambda\phi_{\lambda}(x)
\Big)=0\ {\rm  for\ all} \ x\in T_\theta,
\end{equation}
and
\begin{equation}\label{llcimplieslc.pr.pf.eq3}
 \ \phi_{\lambda}(x)\ne 0 \ {\rm for \ all}  \ x\in T_\theta \ {\rm and } \ \lambda\in \theta
\end{equation}
by assumption \eqref{Ttheta.def}, \eqref{llcimplieslc.pr.pf.eq1} and the construction of maximal sets.
By \eqref{llcimplieslc.pr.pf.eq2}, we have that either $f-g=0$ on $B_1$, or $f+g=0$ on $B_2$, or $f-g=f+g=0$ on $T_\theta$.
This together with \eqref{llcimplieslc.pr.pf.eq3} and the local linear independence on $B_1$ or $ B_2$ or $T_\theta$ implies that
either   $d_\lambda=0$  for all $\lambda \in \theta$, or $c_\lambda=0$ for all $\lambda\in \theta$, or $c_\lambda=d_\lambda=0$ for all $\lambda\in \theta$.
  Therefore either
$f=g$ on $T_\theta$, or $f=-g$ on $T_\theta$, or $f=g=0$ on $T_\theta$. This completes the proof.
\end{proof}

Applying  Theorem \ref{pr.thm} and Proposition \ref{llcimplieslc.pr}, we have the following  corollary, which is established
in \cite{CJS17} when the generator $\Phi$ is obtained from shifts of a compactly supported function.

\begin{corollary}\label{pr.cor} Let  $D$ be a domain satisfying Assumption \ref{domain.assumption},
  $\Phi$  
be  a family of basis functions  satisfying  Assumption \ref{generator.assumption},
and let $V(\Phi)$  be the linear space \eqref{Vphi.def}  generated by $\Phi$.
If    $\Phi$  
has local linear independence on any open sets, then
 a nonzero signal $f\in V(\Phi)$  satisfies ${\mathcal M}_f=\{\pm f\}$ if and only if
 the   graph ${\mathcal G}_f$ in \eqref{graphG.def} is connected.
\end{corollary}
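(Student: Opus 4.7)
The plan is to derive the corollary as a direct consequence of the two results that immediately precede it, namely Theorem \ref{pr.thm} and Proposition \ref{llcimplieslc.pr}. The hypothesis that $\Phi$ has local linear independence on every open set is strictly stronger than the hypothesis of Theorem \ref{pr.thm}, so the corollary should follow by constructing a suitable family $\mathcal{T}$ and then invoking the theorem on that family.

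The first step is to apply Proposition \ref{llcimplieslc.pr} to the generator $\Phi$. Since by hypothesis $\Phi$ has local linear independence on every open set, the proposition yields a family $\mathcal{T} = \{T_\theta : \theta \in \Theta\}$ of open sets satisfying the covering condition \eqref{Ttheta.def} on overlap supports, and with the property that $V(\Phi)$ has local complement property on each $T_\theta$. The second step is to observe that, because $T_\theta$ is itself an open set, the blanket assumption of local linear independence on all open sets immediately gives local linear independence of $\Phi$ on each $T_\theta$. Consequently, every hypothesis of Theorem \ref{pr.thm} is in place: the domain assumption, the generator assumption, the covering family $\mathcal{T}$, local linear independence on each $T_\theta$, and local complement property of $V(\Phi)$ on each $T_\theta$.

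The third and final step is simply to quote Theorem \ref{pr.thm} to conclude that for any nonzero $f \in V(\Phi)$, the equality $\mathcal{M}_f = \{\pm f\}$ holds if and only if the graph $\mathcal{G}_f$ defined in \eqref{graphG.def} is connected. There is no real obstacle here; the whole content of the corollary is that the two preceding results compose cleanly under the stronger assumption of universal local linear independence, so the proof is essentially a one-line citation of Proposition \ref{llcimplieslc.pr} followed by Theorem \ref{pr.thm}.
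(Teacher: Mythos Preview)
Your proposal is correct and follows exactly the same approach as the paper, which states the corollary as an immediate consequence of applying Theorem~\ref{pr.thm} and Proposition~\ref{llcimplieslc.pr} without giving any further argument. Your explicit verification that each hypothesis of Theorem~\ref{pr.thm} is satisfied (in particular, that local linear independence on each $T_\theta$ is inherited from the blanket assumption) is precisely the intended reasoning.
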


\subsection{Proof of Theorem  \ref{pr.thm}}
\label{pr.thm.pf.section}
The necessity in Theorem \ref{pr.thm}
 holds under a weak assumption on the generator $\Phi$.

 \begin{proposition}\label{necessary.pr}
 Let  $D$ be a domain satisfying Assumption \ref{domain.assumption},
  $\Phi: = (\phi_\lambda)_{\lambda\in \Lambda}$  
be  a family of basis functions  satisfying  Assumption \ref{generator.assumption},
 $V(\Phi)$  be the linear space \eqref{Vphi.def}  generated by $\Phi$, and let $f$ be a nonzero signal in $V(\Phi)$.
If  ${\mathcal M}_f=\{\pm f\}$, then the   graph ${\mathcal G}_f$ in \eqref{graphG.def} is connected.
 \end{proposition}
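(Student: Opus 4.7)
I would prove the proposition by contrapositive: assume $\mathcal{G}_f$ is disconnected, and construct some $g \in \mathcal{M}_f$ with $g \notin \{\pm f\}$.

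Writing $f=\sum_{\lambda\in\Lambda}c_\lambda\phi_\lambda$ with $V_f=\{\lambda: c_\lambda\neq 0\}$, disconnectedness of $\mathcal{G}_f$ lets me partition $V_f=V_1\sqcup V_2$ with both $V_j$ nonempty and no edge crossing between them. By the edge rule \eqref{graph.remark.eq1}--\eqref{graph.remark.eq2}, this means $\phi_\lambda\phi_{\lambda'}\equiv 0$ on $D$ whenever $\lambda\in V_1, \lambda'\in V_2$. I then set
\[
f_1:=\sum_{\lambda\in V_1}c_\lambda\phi_\lambda,\qquad f_2:=\sum_{\lambda\in V_2}c_\lambda\phi_\lambda,
\]
so that $f=f_1+f_2$ and, pointwise on $D$, at most one of $f_1(x)$ and $f_2(x)$ can be nonzero (if $\phi_\lambda(x)\neq 0$ for some $\lambda\in V_1$, then all $\phi_{\lambda'}(x)=0$ for $\lambda'\in V_2$, and vice versa). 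The candidate alternative is
\[
g:=f_1-f_2\in V(\Phi).
\]

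The verification of $|g|=|f|$ on $D$ is immediate from the disjoint-support observation: at every $x$, either $f_1(x)=0$, in which case $|f(x)|=|f_2(x)|=|g(x)|$, or $f_2(x)=0$, in which case $|f(x)|=|f_1(x)|=|g(x)|$. So $g\in\mathcal{M}_f$. It remains to rule out $g=\pm f$. By the global linear independence part of Assumption \ref{generator.assumption} (i.e., uniqueness of the parametric representation \eqref{generator.assumption.eq3}), the coefficient sequences of $g$ and $f$ agree iff $c_\lambda=-c_\lambda$ for every $\lambda\in V_2$, forcing $V_2=\emptyset$; similarly $g=-f$ forces $V_1=\emptyset$. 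Both are excluded since the two vertex classes are nonempty, so $g\notin\{\pm f\}$, contradicting $\mathcal{M}_f=\{\pm f\}$.

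This direction is the easy half of Theorem \ref{pr.thm}, and nothing subtle is used beyond the combinatorial definition of $\mathcal{G}_f$ and the global linear independence of $\Phi$; neither local linear independence nor the local complement property on $\mathcal{T}$ is needed. The only point to be careful about is the pointwise disjoint-support claim for $f_1, f_2$, which must be derived strictly from pairwise support disjointness $\phi_\lambda\phi_{\lambda'}\equiv 0$ across $V_1\times V_2$ (not from some stronger global support separation), and the invocation of uniqueness of representation to conclude $g\neq\pm f$. The main obstacle in the full theorem is instead the sufficiency direction — that is where local complement property and local linear independence on the family $\mathcal{T}$ enter to propagate phase-sign agreement across edges of a connected $\mathcal{G}_f$ — but that is not required for the present proposition.
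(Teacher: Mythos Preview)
Your argument is correct and is essentially the paper's own proof: both partition $V_f$ into two nonempty pieces with no crossing edges, form the corresponding signals $f_1,f_2$, and exploit $f_1f_2\equiv 0$ together with global linear independence. The only cosmetic difference is that the paper packages the last step via the nonseparability characterization in Lemma~\ref{nonseparable.lem}, whereas you write out $g=f_1-f_2\in\mathcal M_f\setminus\{\pm f\}$ directly.
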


To prove Proposition \ref{necessary.pr}, we recall a characterization in \cite{YC16} on phase retrievability.

\begin{lemma}\label{nonseparable.lem}  For a nonzero signal $f$ in a linear space $V$, ${\mathcal M}_f=\{\pm f\}$
 if and only if it is nonseparable, i.e.,
 there
does not exist nonzero signals $f_0$ and $f_1\in V$ such that
\begin{equation}\label{separable.def} f=f_0+f_1 \ \ {\rm and}\  \ f_0f_1=0.
\end{equation}
\end{lemma}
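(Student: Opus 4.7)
The claim is the equivalence, for a nonzero $f$ in a linear space $V$ of real-valued signals, between the phase-retrievability property ${\mathcal M}_f=\{\pm f\}$ and the nonseparability property \eqref{separable.def}. I plan to prove each direction by an explicit construction, using the identities $f = f_0+f_1$, $g := f_0-f_1$, and conversely $f_0 := (f+g)/2$, $f_1 := (f-g)/2$, which are dual under the real-valuedness assumption.

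For the ``only if'' direction, I would argue by contrapositive: assume $f$ is separable, so $f=f_0+f_1$ with $f_0,f_1\in V$ nonzero and $f_0(x)f_1(x)=0$ for all $x\in D$. Set $g:=f_0-f_1\in V$. Using $f_0 f_1\equiv 0$ and that the signals are real-valued, expand
\[
|g(x)|^2 = f_0(x)^2 + f_1(x)^2 = |f(x)|^2,\quad x\in D,
\]
so $g\in{\mathcal M}_f$. If $g=f$ then $f_1=0$, and if $g=-f$ then $f_0=0$, each contradicting nonzeroness. Hence ${\mathcal M}_f\supsetneq\{\pm f\}$.

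For the ``if'' direction, again by contrapositive: assume ${\mathcal M}_f\neq\{\pm f\}$, so there exists $g\in V$ with $|g(x)|=|f(x)|$ for all $x\in D$ and $g\neq\pm f$. Define $f_0:=(f+g)/2$ and $f_1:=(f-g)/2$, both of which lie in $V$ by linearity. Then $f_0+f_1=f$, and
\[
f_0(x)f_1(x) = \tfrac{1}{4}\bigl(f(x)^2 - g(x)^2\bigr) = \tfrac{1}{4}\bigl(|f(x)|^2 - |g(x)|^2\bigr) = 0
\]
for all $x\in D$, where the first equality uses that $f,g$ are real-valued. Moreover $f_0\neq 0$ since $g\neq -f$, and $f_1\neq 0$ since $g\neq f$. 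Thus $f$ is separable.

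There is essentially no obstacle: the only ingredient beyond trivial algebra is the real-valuedness of signals, which turns $f^2=|f|^2$ and $g^2=|g|^2$ into a pointwise equality once $|f|=|g|$. I would simply be careful to flag this at the critical step so that the reader sees why the argument would fail for complex-valued signals.
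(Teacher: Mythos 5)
Your proof is correct, and both directions are the standard sum/difference argument ($g=f_0-f_1$ one way, $f_0=(f+g)/2$, $f_1=(f-g)/2$ the other), with the real-valuedness correctly flagged as the essential ingredient. The paper itself does not prove this lemma — it simply recalls it from the cited reference [YC16] — so there is no in-paper proof to diverge from; your argument is the canonical one.
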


%
\begin{proof}[Proof of Proposition \ref{necessary.pr}]Let $f\in V(\Phi)$ be a nonzero signal satisfying ${\mathcal M}_f=\{\pm f\}$,
and write $f=\sum_{\lambda \in \Lambda} c_{\lambda}\phi_\lambda$, where $c_{\lambda}\in \R, \lambda\in \Lambda$.
Suppose, on the contrary, that the graph ${\mathcal G}_f$ is disconnected.  Then  there exists  a  nontrivial connected component  $W$ such that both $W$ and $V_f\backslash W$ are nontrivial, and   
 no edges exist between vertices in $W$ and in $V_f\backslash W$.
Write
 \begin{equation}\label{necessarypr.thm.pf.eq1}
 f= 
\sum_{k\in V_f} c_{\lambda}\phi_\lambda
 = \sum_{{ \lambda}\in W}c_{\lambda}\phi_\lambda+\sum_{{ \lambda}\in V_f\backslash W}c_{\lambda}\phi_\lambda 
 =: f_0+f_1.\end{equation}
From the global linear independence \eqref{generator.assumption.eq3} and nontriviality of the sets $W$  and $V_f\backslash W$, we obtain
\begin{equation} \label{necessarypr.thm.pf.eq2}
f_0\not\equiv 0 \ \ {\rm  and}\  \  f_1\not\equiv 0.\end{equation}
Applying \eqref{necessarypr.thm.pf.eq1} and \eqref {necessarypr.thm.pf.eq2},  and using the  characterization
in Lemma
\ref{nonseparable.lem},  we obtain that $$f_0(x_0)f_1(x_0)\neq 0$$ for some $x_0\in D$. This implies
the existence of ${\lambda}\in W$ and  ${\lambda}'\in  V_f\backslash W$ such that
 $c_\lambda \phi_\lambda(x_0)\neq 0$ and  $c_{\lambda'}\phi_{\lambda'}(x_0)\neq 0$.
   Hence $({\lambda}, {\lambda}')$ is an edge between $ \lambda\in W$ and ${ \lambda}'\in V_f\backslash W$,  which contradicts to the construction of the set $W$.  
\end{proof}
   \smallskip

Now we prove the sufficiency in Theorem \ref{pr.thm}. Let $f=\sum_{\lambda\in \Lambda} c_\lambda \phi_\lambda\in V(\Phi)$ have its graph ${\mathcal G}_f$ being connected, and take $g=\sum_{\lambda\in \Lambda} d_\lambda \phi_\lambda\in {\mathcal M}_f$. Then
for any $\theta\in \Theta$,
\begin{equation}\label{pr.thm.pf.eq20-}
|g(x)|=|f(x)|,\ x\in T_\theta.
\end{equation}
 For any $\theta\in \Theta$, there exists
$\delta_\theta \in \{-1, 1\}$ by  \eqref{pr.thm.pf.eq20-} and the local complement property
on $T_\theta$ such that
\begin{equation*}
g(x)=\delta_\theta f(x),\  x\in T_\theta.
\end{equation*}
This together with the local linear independence on $T_\theta$ implies that
\begin{equation}\label{pr.thm.pf.eq20} d_\lambda=\delta_\theta c_\lambda
\end{equation}
for all $\lambda\in \Lambda$ with $S_\Phi(\lambda)\cap T_\theta\ne \emptyset$.
 Using  \eqref{Ttheta.def} 
and applying \eqref{pr.thm.pf.eq20}, there exist $\delta_\lambda\in \{-1, 1\}, \lambda\in \Lambda$ such that
  \begin{equation} \label{pr.thm.pf.eq21} d_\lambda=\delta_\lambda c_\lambda
  \end{equation}
for all $\lambda\in \Lambda$,  and
  \begin{equation} \label{pr.thm.pf.eq22} \delta_\lambda=\delta_{\lambda'}
  \end{equation}
  for any edge $(\lambda, \lambda')$ in the graph ${\mathcal G}_f$.  Combining
  \eqref{pr.thm.pf.eq21} and \eqref{pr.thm.pf.eq22}, and  applying connectivity of the graph ${\mathcal G}_f$, we can find  $\delta\in \{-1, 1\}$ such that
  \begin{equation}  d_\lambda=c_\lambda=0 \ {\rm for \ all} \ \lambda\not\in V_f \ {\rm and} \ d_\lambda=\delta c_\lambda \ {\rm
  for \ all} \ \lambda\in V_f.
  \end{equation}
  Thus
  $g(x)=\delta f(x)$ for all $ x\in D$.
   This completes the proof of the sufficiency.

\section{Phase nonretrievability and landscape decomposition}
\label{decomposition.section}

Given a signal $f\in V(\Phi)$, the  graph ${\mathcal G}_f$ in \eqref{graphG.def}
is not necessarily to be connected and hence
  there may exist   signals  $g\in V(\Phi)$, other than $\pm f$,   belonging to ${\mathcal M}_f$. 
 In this section, we characterize the  set ${\mathcal M}_f$ of all signals $g\in V(\Phi)$ that have the same magnitude measurements  on  the domin $D$ as $f$ has, and then we provide the answer to  Question \ref{question1}.

Take  $f=\sum_{\lambda \in \Lambda}
c_\lambda \phi_\lambda\in  V(\Phi)$,
let ${\mathcal G}_i=(V_i, E_i), i\in I$, be connected components of  the  graph ${\mathcal G}_f$, 
and define
 \begin{equation}\label{graphdecomp.eq2} f_i= \sum_{\lambda \in V_i} c_\lambda \phi_\lambda,\ i\in I.
\end{equation}
Then  \eqref{graphdecomp.eq1} holds by the definition of ${\mathcal G}_i, i\in I$, and  the signal $f$ has the decomposition \eqref{decomposition.def1}, \eqref{decomposition.def2} and \eqref{decomposition.def3}  
   by Theorem \ref{pr.thm}. 
By \eqref{decomposition.def1} and \eqref{decomposition.def3}, signals $g=\sum_{i\in I} \delta_i f_i$ with $\delta_i\in \{-1, 1\}, i\in I$, have the same magnitude measurements  on  the domain $D$ as $f$ has. In the following theorem, we show that the converse is also true.

 \begin{theorem}\label{generalphaseretrieval.thm}
  Let  the domain $D$, the generator
$\Phi: = (\phi_\lambda)_{\lambda\in \Lambda}$,  
the family ${\mathcal T}:=\{T_\theta, \theta\in \Theta\}$  of open sets,
  and the linear space $V(\Phi)$  be as in Theorem \ref{pr.thm}.
Take  $f\in V(\Phi)$ and let $f_i\in V(\Phi), i\in I$, be as in 
\eqref{graphdecomp.eq2}.
 Then  $g\in V(\Phi)$  belongs to ${\mathcal M}_f$
 if and only if
 \begin{equation}\label{generalphaseretrieval.thm.eq1}
 g=\sum_{i\in I} \delta_i f_i \ \  {\rm for \ some }   \ \delta_i\in \{-1, 1\}, i\in I.
 \end{equation}   \end{theorem}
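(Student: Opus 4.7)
The plan is to prove both directions of the equivalence. The ``if'' direction will follow immediately from the mutually disjoint support property \eqref{decomposition.def1}; the ``only if'' direction will be a refinement of the sufficiency argument already carried out for Theorem \ref{pr.thm}, now propagating sign consistency separately along each connected component of $\mathcal{G}_f$.

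For the sufficiency of \eqref{generalphaseretrieval.thm.eq1}, I would fix an arbitrary $x\in D$. The mutually disjoint support property \eqref{decomposition.def1} forces at most one $i_0\in I$ with $f_{i_0}(x)\neq 0$; hence
\[ |g(x)|=\Big|\sum_{i\in I}\delta_i f_i(x)\Big|=|f_{i_0}(x)|=\Big|\sum_{i\in I}f_i(x)\Big|=|f(x)|, \]
with the case $f_i(x)=0$ for every $i$ being trivial. Thus $g\in \mathcal{M}_f$.

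For the necessity, write $f=\sum_{\lambda}c_\lambda\phi_\lambda$ and $g=\sum_{\lambda}d_\lambda\phi_\lambda$ and reason locally. For each $\theta\in\Theta$ the equality $|g|=|f|$ on $T_\theta$ together with the local complement property on $T_\theta$ (equivalently, local phase retrievability on $T_\theta$, as recalled in Section \ref{connectivity.section}) yields a sign $\delta_\theta\in\{-1,1\}$ with $g=\delta_\theta f$ on $T_\theta$, and the local linear independence on $T_\theta$ then forces
\[ d_\lambda=\delta_\theta c_\lambda\ \ \text{for every}\ \lambda\in K_{T_\theta}. \]
The degenerate case $f\equiv 0$ on $T_\theta$ is harmless: local linear independence on $T_\theta$ forces $c_\lambda=d_\lambda=0$ on $K_{T_\theta}$, so any chosen $\delta_\theta$ is consistent with the above identity.

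To conclude, I would propagate these signs across $\Lambda$. Applying \eqref{Ttheta.def} to the diagonal pair $(\lambda,\lambda)$ shows that every $\lambda\in\Lambda$ lies in $K_{T_\theta}$ for at least one $\theta$, so whenever $\lambda\notin V_f$ one immediately gets $d_\lambda=\delta_\theta\cdot 0=0$; no innovative positions outside $V_f$ can appear in $g$. For $\lambda\in V_f$, define $\delta_\lambda:=d_\lambda/c_\lambda\in\{-1,1\}$; for any edge $(\lambda,\lambda')$ in $\mathcal{G}_f$, the set $S_\Phi(\lambda,\lambda')$ is nonempty, so \eqref{Ttheta.def} provides a $\theta$ with $\lambda,\lambda'\in K_{T_\theta}$ and hence $\delta_\lambda=\delta_{\lambda'}=\delta_\theta$. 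By connectivity of each component $\mathcal{G}_i=(V_i,E_i)$, the value $\delta_\lambda$ is constant on $V_i$; denoting this common value by $\delta_i$ and summing over $i$ yields $g=\sum_{i\in I}\delta_i f_i$. The main delicate point, and essentially the only step beyond the proof of Theorem \ref{pr.thm}, is the synchronization of the locally defined signs $\delta_\theta$ across different $T_\theta$; the decomposition of $\mathcal{G}_f$ into connected components is precisely what controls how far a single sign can be propagated.
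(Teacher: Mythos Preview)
Your proof is correct and follows essentially the same route as the paper: sufficiency from the disjoint-support property \eqref{decomposition.def1}, and necessity by invoking the local complement property and local linear independence on each $T_\theta$ to obtain $d_\lambda=\delta_\theta c_\lambda$ on $K_{T_\theta}$, then using \eqref{Ttheta.def} to propagate signs along edges of $\mathcal G_f$ and concluding via the connected-component decomposition. Your write-up is in fact a bit more explicit than the paper's, which simply refers back to the sufficiency argument of Theorem~\ref{pr.thm} for the local step.
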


The conclusion in  Theorem \ref{generalphaseretrieval.thm} can be 
understood as that  the landscape of  any signal $g\in {\mathcal M}_f$ is  a combination of
 islands of the original signal $f$ or their  reflections.  As an application to Theorem \ref{generalphaseretrieval.thm}, we have the following result about the cardinality of the set ${\mathcal M}_f$.

\begin{corollary} Let  the domain $D$, the generator
$\Phi$,  
the family ${\mathcal T}$  of open sets
  and the linear space $V(\Phi)$  be as in Theorem \ref{pr.thm}.
  Then for $f\in V(\Phi)$,
  $$\# {\mathcal M}_f= 2^{\# I},$$
  where $I$ is given in \eqref{graphdecomp.eq1}.
\end{corollary}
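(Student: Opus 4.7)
The plan is to show that the set $\mathcal{M}_f$ is parametrized bijectively by the sign sequences $(\delta_i)_{i\in I} \in \{-1,1\}^I$. Theorem \ref{generalphaseretrieval.thm} already supplies the surjective side: every $g\in\mathcal{M}_f$ arises as $g=\sum_{i\in I}\delta_i f_i$ for some choice of signs. Hence the only task left is to verify that distinct sign sequences produce distinct signals, i.e., that the map
\[
\Psi:(\delta_i)_{i\in I}\in\{-1,1\}^I \longmapsto \sum_{i\in I}\delta_i f_i\in V(\Phi)
\]
is injective. The cardinality claim $\#\mathcal{M}_f=2^{\#I}$ will then be immediate.

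For injectivity, suppose $\sum_{i\in I}\delta_i f_i=\sum_{i\in I}\delta_i' f_i$ for two sign sequences. Expanding each $f_i$ via \eqref{graphdecomp.eq2} and using the fact that the vertex sets $V_i$, as connected components of $\mathcal{G}_f$, are pairwise disjoint subsets of $V_f=\{\lambda\in\Lambda:c_\lambda\neq 0\}$, I would regroup the difference as a single series
\[
\sum_{i\in I}\sum_{\lambda\in V_i}(\delta_i-\delta_i')\,c_\lambda\,\phi_\lambda=0.
\]
The global linear independence of $\Phi$ guaranteed by Assumption \ref{generator.assumption} (via \eqref{generator.assumption.eq3}) forces $(\delta_i-\delta_i')c_\lambda=0$ for every $\lambda\in V_i$ and every $i\in I$. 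Since by the very definition of $V_f$ each coefficient $c_\lambda$ with $\lambda\in V_i\subset V_f$ is nonzero, we conclude $\delta_i=\delta_i'$ for all $i\in I$, proving injectivity of $\Psi$.

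The main (and only real) obstacle is ensuring that each connected component $V_i$ contains at least one $\lambda$ with $c_\lambda\neq 0$, so that the relation $(\delta_i-\delta_i')c_\lambda=0$ actually pins down $\delta_i=\delta_i'$. This is automatic because the graph $\mathcal{G}_f$ is built with vertex set $V_f$ itself, so every $\lambda$ belonging to any $V_i$ satisfies $c_\lambda\neq 0$ by construction. The only degenerate case to mention is $f\equiv 0$, in which $V_f=\emptyset$, $I=\emptyset$, and the empty sum convention gives $\mathcal{M}_f=\{0\}$, so $\#\mathcal{M}_f=1=2^{0}$, consistent with the formula. In the (possibly) infinite case, the identity $\#\mathcal{M}_f=2^{\#I}$ should be read as an equality of cardinalities, and the injectivity of $\Psi$ established above yields it directly.
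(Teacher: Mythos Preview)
Your proof is correct and follows the natural route implicit in the paper: the corollary is stated without proof as an immediate application of Theorem \ref{generalphaseretrieval.thm}, and you have supplied precisely the missing detail, namely that the parametrization $\Psi:(\delta_i)_{i\in I}\mapsto\sum_{i\in I}\delta_i f_i$ is injective, which you deduce cleanly from the global linear independence \eqref{generator.assumption.eq3} and the nonvanishing of the coefficients $c_\lambda$ for $\lambda\in V_i$. An equally valid alternative for injectivity would be to invoke the disjoint-support property \eqref{decomposition.def1} together with $f_i\not\equiv 0$, but your argument via linear independence is just as direct.
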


   To prove 
   Theorem \ref{generalphaseretrieval.thm}, we need the uniqueness
   of a landscape decompositions satisfying  \eqref{decomposition.def1}, \eqref{decomposition.def2}
   and \eqref{decomposition.def3}.

 \begin{theorem}\label{decomposition.thm00}
 Let the generator $\Phi$ and the  space  $V(\Phi)$ be as in Theorem \ref{generalphaseretrieval.thm}.
 Then for any $f\in V(\Phi)$ there exists a unique decomposition
 satisfying \eqref{decomposition.def1}, \eqref{decomposition.def2} and \eqref{decomposition.def3}.
  \end{theorem}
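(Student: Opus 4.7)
The plan is to split into existence and uniqueness, using connected components of $\mathcal{G}_f$ for the former and the local complement property for the latter.

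For existence, I would take the candidates $f_i$ defined in \eqref{graphdecomp.eq2} using the unique decomposition $\mathcal{G}_f=\bigcup_{i\in I}\mathcal{G}_i$ into connected components. Property \eqref{decomposition.def3} is immediate from $V_f=\bigsqcup_i V_i$ and the expansion $f=\sum_{\lambda\in V_f} c_\lambda \phi_\lambda$. Property \eqref{decomposition.def1} reduces to showing that for distinct components, $\phi_\lambda\phi_{\lambda'}\equiv 0$ whenever $\lambda\in V_i$ and $\lambda'\in V_{i'}$; this is precisely the absence of an edge between $V_i$ and $V_{i'}$ in $\mathcal{G}_f$, so $f_if_{i'}$ vanishes termwise. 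For \eqref{decomposition.def2}, I would observe that $\mathcal{G}_{f_i}$ coincides with the induced subgraph $\mathcal{G}_i$, which is connected by construction, and then invoke Theorem \ref{pr.thm} to conclude $\mathcal{M}_{f_i}=\{\pm f_i\}$.

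For uniqueness, consider any alternative decomposition $f=\sum_{j\in J} g_j$ with $g_j=\sum_{\lambda\in W_j} d_{j,\lambda}\phi_\lambda$, where $W_j:=V_{g_j}$, satisfying \eqref{decomposition.def1}--\eqref{decomposition.def3}. Fix any $T_\theta\in\mathcal{T}$. Since $g_jg_{j'}\equiv 0$ on $T_\theta$ for $j\neq j'$, the local complement property on $T_\theta$ forbids two of them from being simultaneously not identically zero on $T_\theta$; hence at most one $g_j$ is nonvanishing on $T_\theta$. Local linear independence on $T_\theta$ translates this into the coefficient statement that for every $\lambda\in K_{T_\theta}$, at most one $j\in J$ has $d_{j,\lambda}\neq 0$. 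Because $\{T_\theta\}$ satisfies \eqref{Ttheta.def}, every $\lambda$ with $\phi_\lambda\not\equiv 0$ lies in some $K_{T_\theta}$, so the sets $W_j$ are pairwise disjoint. Global linear independence of $\Phi$ applied to $f=\sum_j g_j$ then yields $d_{j,\lambda}=c_\lambda$ whenever $\lambda\in W_j$ and $\bigsqcup_j W_j=V_f$.

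It remains to identify the partition $\{W_j\}$ with the connected components $\{V_i\}$. From \eqref{decomposition.def2} and Theorem \ref{pr.thm}, each $\mathcal{G}_{g_j}$ is connected, and it equals the subgraph of $\mathcal{G}_f$ induced by $W_j$; thus each $W_j$ sits inside a single component $V_i$. Conversely, if some $V_i$ intersects two distinct $W_j,W_{j'}$, then, because $V_i$ is connected, there exist $\lambda\in W_j$ and $\lambda'\in W_{j'}$ with $(\lambda,\lambda')\in E_f$, hence $S_\Phi(\lambda,\lambda')\neq\emptyset$; condition \eqref{Ttheta.def} supplies a $T_\theta$ meeting $S_\Phi(\lambda,\lambda')$, so both $\lambda$ and $\lambda'$ lie in $K_{T_\theta}$, and by the local linear independence computation above, neither $g_j$ nor $g_{j'}$ vanishes identically on $T_\theta$, contradicting the local complement conclusion. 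Therefore $\{W_j\}=\{V_i\}$, and together with $d_{j,\lambda}=c_\lambda$ this gives $g_j=f_i$ for the matched index, proving uniqueness.

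The main obstacle I expect is the passage from the pointwise nonoverlap $g_jg_{j'}\equiv 0$ to the combinatorial disjointness of the coefficient supports $W_j$. The local complement property only forbids simultaneous nonvanishing on an open set, so one must carefully combine it with local linear independence on each $T_\theta$ and the covering condition \eqref{Ttheta.def} to promote a pointwise vanishing statement to a statement about every innovative position $\lambda\in\Lambda$; once this translation is made, the matching of $W_j$ to connected components follows from Theorem \ref{pr.thm} applied to the summands.
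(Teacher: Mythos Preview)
Your proof is correct. Existence is handled exactly as in the paper. For uniqueness the two arguments diverge in the order and the mechanism of the matching.

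The paper does not invoke the local complement property directly on the pair $g_j,g_{j'}$. Instead it notices that $g_jg_{j'}\equiv 0$ forces $|g_j+g_{j'}|=|g_j-g_{j'}|$ and reuses the computation from the sufficiency proof of Theorem~\ref{pr.thm} on each $T_\theta$ to deduce, for every pair $(\lambda,\lambda')$ with $S_\Phi(\lambda,\lambda')\ne\emptyset$, that either $(d_{j,\lambda},d_{j,\lambda'})=(0,0)$ or $(d_{j',\lambda},d_{j',\lambda'})=(0,0)$. This yields disjoint supports $W_j$ and, more importantly, shows that every edge of $\mathcal G_f$ stays inside a single $W_j$, so each connected component $V_i$ is contained in some $W_j$. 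The paper then finishes with a counting argument: from $g_j=\sum_{i\in I_j} f_i$ and~\eqref{decomposition.def1} it gets $\#\mathcal M_{g_j}\ge 2^{\#I_j}$, which combined with $\mathcal M_{g_j}=\{\pm g_j\}$ forces $\#I_j=1$.

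Your route is more elementary and symmetric: you apply the local complement property on $T_\theta$ directly to the product $g_jg_{j'}$ to get that at most one $g_j$ survives on each $T_\theta$, use local linear independence to disjoin the $W_j$, and then close the matching in both directions by invoking the \emph{necessity} of Theorem~\ref{pr.thm} (connectivity of $\mathcal G_{g_j}$ gives $W_j\subset V_i$) together with one more application of the local complement conclusion (to get $V_i\subset W_j$). This avoids both the detour through $g_j\pm g_{j'}$ and the cardinality-of-$\mathcal M_{g_j}$ argument; the paper's approach, on the other hand, never explicitly appeals to the necessity direction of Theorem~\ref{pr.thm}.
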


\begin{proof} 
  Write $f=\sum_{\lambda \in \Lambda} c_\lambda \phi_\lambda$.
  First we prove the existence of a decomposition   satisfying \eqref{decomposition.def1}, \eqref{decomposition.def2} and \eqref{decomposition.def3}.
 Define $f_i, i\in I$, as in
 \eqref{graphdecomp.eq2}. 
  Then the decomposition \eqref{decomposition.def3} holds by \eqref{graphdecomp.eq1} and \eqref{graphdecomp.eq2}, and the nonseparability property
  \eqref{decomposition.def2} of $f_i, i\in I$  follows from Theorem \ref{pr.thm} 
   and the connectivity of ${\mathcal G}_i, i\in I$.
  Recall that  there are no edges between vertices in different connected components ${\mathcal G}_i, i\in I$.
This leads to the mutually disjoint support property \eqref{decomposition.def1}.

Now we prove the uniqueness of the decomposition
  \eqref{decomposition.def1}, \eqref{decomposition.def2} and \eqref{decomposition.def3}.
Let $0\ne g_j=\sum_{\lambda \in \Lambda} d_{j, \lambda} \phi_\lambda\in V(\Phi), j\in J,$
satisfy
 \begin{equation}\label{decomposition.thm.pf.eq1}
 f=\sum_{j\in J} g_j,
 \end{equation}
 \begin{equation} \label{decomposition.thm.pf.eq3}
{\mathcal M}_{g_j}=\{\pm g_j\},\ j\in J,  
 \end{equation}
 and
 \begin{equation} \label{decomposition.thm.pf.eq2}
 g_jg_{j'}=0 \ {\rm for \ all\ distinct} \ j, j'\in J.
 \end{equation}
Then it suffices to find  a  partition $I_j, j\in J$, of the set $I$
  such that
  \begin{equation}  \label{decomposition.thm.pf.eq3+}
g_j=\sum_{i\in I_j } f_i\end{equation}
where $f_i, i\in I$, are given in   \eqref{graphdecomp.eq2}, and that \begin{equation}  \label{decomposition.thm.pf.eq3++}
I_j {\rm \ only \  contains \  exactly \  one \  element \ for\  any}  \   j \in J.\end{equation}
First we  prove \eqref{decomposition.thm.pf.eq3+}.
For any distinct $j, j'\in J$ and  $(\lambda, \lambda')\in \Lambda\times \Lambda$ with
 $S_\Phi(\lambda, \lambda')\ne\emptyset$,  following the argument used in the sufficiency of Theorem
 \ref{pr.thm} with $f$ and $g$ replaced by  $g_j\pm g_{j'}$ we obtain from \eqref{decomposition.thm.pf.eq2} that
   \begin{equation*} \label{decomposition.thm.pf.eq5*}
 {\rm either} \ (d_{j,\lambda}, d_{j, \lambda'})=(0, 0)\ {\rm  or}\  (d_{j',\lambda}, d_{j', \lambda'})=(0, 0).
 \end{equation*}
 This together with  \eqref{decomposition.thm.pf.eq1} 
 implies the existence of $j\in J$ such that
 \begin{equation}\label{decomposition.thm.pf.eq4}
 d_{j, \lambda}= c_\lambda, \ d_{j, \lambda'} = c_{\lambda'}
 \end{equation}
 and
 \begin{equation} \label{decomposition.thm.pf.eq5}
 d_{j',\lambda}=  d_{j', \lambda'}= 0\ {\rm for\ all}\ j'\ne j.
 \end{equation}
Observe that $S_\Phi(\lambda)\ne \emptyset, \lambda\in \Lambda$. Applying \eqref{decomposition.thm.pf.eq4} and \eqref{decomposition.thm.pf.eq5}  with $\lambda'= \lambda\in \Lambda$, we can find a mutually disjoint partition $W_j, j\in J$, of the  set $V_f$  such that
  \begin{equation}  \label{decomposition.thm.pf.eq5+}
 g_j=\sum_{\lambda\in W_j} c_\lambda\phi_\lambda.
 \end{equation}
Applying \eqref{decomposition.thm.pf.eq4} and \eqref{decomposition.thm.pf.eq5} with $(\lambda, \lambda')$ being an edge in ${\mathcal G}_f$, 
we obtain that for any $i\in I$ there exists $j\in J$ such that $V_i\subset W_j$. This together with
\eqref{graphdecomp.eq2},
\eqref{decomposition.thm.pf.eq5+} and the observation $\cup_{i\in I} V_i=\cup_{j\in J} W_j=V_f$
 proves  \eqref{decomposition.thm.pf.eq3+}.

  Now we prove  \eqref{decomposition.thm.pf.eq3++}. By  \eqref{decomposition.def1}
  and  \eqref{decomposition.thm.pf.eq3+} we have that
  $${\mathcal M}_{g_j}\supset \Big\{\sum_{i\in  I_j} \delta_i f_i, \delta_i\in \{-1, 1\}\Big\},$$
which implies that
$\#  {\mathcal M}_{g_j}\ge 2^{\# I_j}$. This together with \eqref{decomposition.thm.pf.eq3} proves
\eqref{decomposition.thm.pf.eq3++}.
  \end{proof}

Now we start to prove Theorem \ref{generalphaseretrieval.thm}.
\begin{proof}[Proof of Theorem \ref{generalphaseretrieval.thm}]  The sufficiency is obvious. Now the necessity. Let $f, g\in V(\Phi)$ have the same magnitude measurements on the domain $D$, i.e., ${\mathcal M}_f={\mathcal M}_g$. Write
$f=\sum_{\lambda\in \Lambda} c_\lambda \phi_\lambda$ and $g=\sum_{\lambda\in \Lambda} d_\lambda \phi_\lambda$.
Then following the argument used in the sufficiency of Theorem \ref{pr.thm},
 we can find $\delta_{\lambda, \lambda'}\in \{-1, 1\}$ for any pair $(\lambda, \lambda')$ with  $S_\Phi({\lambda, \lambda'})\ne \emptyset$
 such that
 \begin{equation}  \label{generalphaseretrieval.thm.pf.eq2}
 (d_{\lambda}, d_{\lambda'})= \delta_{\lambda, \lambda'}   (c_\lambda, c_{\lambda'}).
 \end{equation}
Applying \eqref{generalphaseretrieval.thm.pf.eq2} with $\lambda'=\lambda$ and recalling that $S_\Phi(\lambda)\ne \emptyset$, we obtain
\begin{equation} \label {generalphaseretrieval.thm.pf.eq3}
d_\lambda= \delta_\lambda c_\lambda, \ \lambda \in \Lambda,
\end{equation}
for some $\delta_\lambda\in \{-1, 1\}$.
This concludes that
\begin{equation}
\delta_\lambda=\delta_{\lambda, \lambda'}=\delta_{\lambda'}
\end{equation}
for any edge $(\lambda, \lambda')$ of the graph ${\mathcal G}_f$.
Therefore the signs $\delta_\lambda$ are the same in any connected component of the graph ${\mathcal G}_f$. This together with  \eqref{graphdecomp.eq1}, \eqref{graphdecomp.eq2} and \eqref{generalphaseretrieval.thm.pf.eq3} completes the proof.
\end{proof}

The union of $T_\theta, \theta\in \Theta$, is not necessarily the whole domain $D$. Following the argument used in the proof of Theorems \ref{pr.thm} and \ref{generalphaseretrieval.thm},
 we have the following corollary.

\begin{corollary}\label{decomposition.cor}
  Let  the domain $D$, the generator
$\Phi$,  
the family of open sets  ${\mathcal T}=\{T_\theta, \theta\in \Theta\}$ and the linear space $V(\Phi)$  be as in Theorem \ref{generalphaseretrieval.thm}.
Then
\begin{equation} {\mathcal M}_f={\mathcal M}_{f, D_{\mathcal T}} \ {\rm for \ all}\ f\in V(\Phi),
\end{equation}
where $D_{\mathcal T}=\cup_{\theta\in \Theta} T_\theta$.
\end{corollary}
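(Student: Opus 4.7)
The inclusion $\mathcal{M}_f \subset \mathcal{M}_{f, D_{\mathcal{T}}}$ is immediate from the definitions since $D_{\mathcal{T}} \subset D$. The substance of the corollary is the reverse inclusion $\mathcal{M}_{f, D_{\mathcal{T}}} \subset \mathcal{M}_f$, so the plan is to run the proof of the sufficiency in Theorem \ref{pr.thm} verbatim, but using only that magnitude equality holds on each $T_\theta$ rather than on all of $D$.

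First, given $f = \sum_{\lambda \in \Lambda} c_\lambda \phi_\lambda \in V(\Phi)$ and $g = \sum_{\lambda \in \Lambda} d_\lambda \phi_\lambda \in \mathcal{M}_{f, D_{\mathcal{T}}}$, I would fix an arbitrary $\theta \in \Theta$. Since $|g(x)| = |f(x)|$ on $T_\theta \subset D_{\mathcal{T}}$, the local complement property on $T_\theta$ yields some $\delta_\theta \in \{-1, 1\}$ with $g(x) = \delta_\theta f(x)$ for all $x \in T_\theta$. Then the local linear independence on $T_\theta$ forces $d_\lambda = \delta_\theta c_\lambda$ for every $\lambda \in \Lambda$ with $S_\Phi(\lambda) \cap T_\theta \neq \emptyset$. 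This is exactly the step carried out in the proof of Theorem \ref{pr.thm} leading to \eqref{pr.thm.pf.eq20}.

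Next I would use property \eqref{Ttheta.def}: since $S_\Phi(\lambda) \neq \emptyset$ for every $\lambda \in \Lambda$, each $\lambda$ is covered by some $T_\theta$, so we obtain global sign assignments $\delta_\lambda \in \{-1, 1\}$ with $d_\lambda = \delta_\lambda c_\lambda$ for all $\lambda \in \Lambda$. Moreover, for any edge $(\lambda, \lambda')$ in $\mathcal{G}_f$ we have $S_\Phi(\lambda, \lambda') \neq \emptyset$, so by \eqref{Ttheta.def} there is a $\theta$ with $S_\Phi(\lambda, \lambda') \cap T_\theta \neq \emptyset$; applying the paragraph above on that $T_\theta$ gives $\delta_\lambda = \delta_{\lambda'} = \delta_\theta$. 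Hence the signs are constant on every connected component of $\mathcal{G}_f$.

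Finally, writing the connected component decomposition \eqref{graphdecomp.eq1} and the associated $f_i$ from \eqref{graphdecomp.eq2}, the constancy of signs on components gives $g = \sum_{i \in I} \delta_i f_i$ for some $\delta_i \in \{-1, 1\}$. By Theorem \ref{generalphaseretrieval.thm} this is precisely the characterization of $\mathcal{M}_f$, so $g \in \mathcal{M}_f$. The main (mild) obstacle is really just bookkeeping: one must verify that \eqref{Ttheta.def}, applied to the diagonal pairs $\lambda' = \lambda$ and to edge pairs, gives enough coverage to (i) assign a sign to every $\lambda \in V_f$, and (ii) propagate that sign consistently across every edge of $\mathcal{G}_f$. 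No new ideas beyond the proofs of Theorems \ref{pr.thm} and \ref{generalphaseretrieval.thm} are needed; the point is that those proofs never used $|g|=|f|$ on any point outside $\cup_\theta T_\theta$.
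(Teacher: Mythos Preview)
Your proposal is correct and follows essentially the same approach as the paper's own proof: both rerun the sign-propagation argument from the sufficiency of Theorem~\ref{pr.thm} (and Theorem~\ref{generalphaseretrieval.thm}), observing that this argument only ever uses $|g|=|f|$ on the sets $T_\theta$, and conclude $g=\sum_{i\in I}\delta_i f_i\in\mathcal{M}_f$. The paper's proof is simply a terse one-line pointer to those earlier arguments, whereas you have spelled the steps out; the only cosmetic difference is that the paper finishes by directly noting $|g(x)|=|f(x)|$ on $D$ from \eqref{decomposition.def1}, while you cite the (obvious) sufficiency direction of Theorem~\ref{generalphaseretrieval.thm}.
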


\begin{proof} Let $f, g\in V(\Phi)$ satisfy $|f(x)|=|g(x)|, x\in T_\theta$ for all $\theta\in \Theta$.
Write $f=\sum_{i\in I} f_i$ as in  \eqref{decomposition.def1}, \eqref{decomposition.def2} and  \eqref{decomposition.def3}. From the argument used in the proof of Theorems  \ref{pr.thm} and \ref{generalphaseretrieval.thm}, we have
that $g=\sum_{i\in I} \delta_i f_i$ for some $\delta_i\in \{-1, 1\}$. Therefore $|g(x)|=|f(x)|$ for all $x\in D$.
\end{proof}

\section{Phaseless sampling and reconstruction }
\label{phaseless.section}

  To study  phaseless sampling and reconstruction of signals in $V(\Phi)$,
we recall the concept of a phase retrievable frame \cite{BCE06,  bswx18, han2017, wangxu14}.

 \begin{definition} \rm
 We say that ${\mathcal F}=\{f_m\in \R^n, 1\le m\le M\}$ is
 a {\em phase retrievable frame} for $\R^n$ if
any vector $v\in \R^d$ is determined, up to a sign, by its measurements
$|\langle v, f_m\rangle|, f_m\in {\mathcal F}$, and that ${\mathcal F}$ is a
{\em minimal phase retrieval frame} for $\R^n$
if any true subset of ${\mathcal F}$ is not a phase retrievable frame.
\end{definition}

It is known that a  minimal phase retrieval frame  for $\R^n$ contains at least $2n-1$ vectors and at most $n(n+1)/2$  vectors
\cite{BCE06,  CJS17, han2017}.
In this section, we construct a discrete set  $\Gamma$  with finite density such that
\begin{equation}\label{samplingproblem} {\mathcal M}_{f, \Gamma}={\mathcal M}_f\ {\rm for\ all} \  f\in V(\Phi).
\end{equation}

 \begin{theorem}\label{samplingset2.thm}   Let  the domain $D$, the generator
$\Phi: = (\phi_\lambda)_{\lambda\in \Lambda}$,
the family ${\mathcal T}=\{T_\theta, \theta\in \Theta\}$ of open sets, 
and the linear space $V(\Phi)$  be as in Theorem \ref{pr.thm}.
 Set \begin{equation}\label{rs.def}
R_\Lambda(r):=\sup_{x\in D} \#\big(\Lambda\cap B(x, r)\big), \ r\ge 0.
\end{equation}
Take  discrete sets $\Gamma_\theta\subset T_\theta, \theta\in \Theta$, so that for any $\theta\in \Theta$,
$\{\Phi_{\theta}(\gamma), \gamma\in \Gamma_\theta\}$ forms a minimal phase retrievable frame for $\R^{\# K_\theta}$,
and define
\begin{equation}\label{finite.cor.eq4}
 \Gamma:=\cup_{\theta \in \Theta} \Gamma_\theta,
 \end{equation}
where $\Phi_\theta=(\phi_\lambda)_{\lambda\in K_\theta}$ and
$$K_\theta=\{\lambda\in \Lambda: S_\Phi(\lambda)\cap T_\theta\ne\emptyset\}.$$
Then \eqref{samplingproblem} holds for the above discrete set $\Gamma$. Moreover if
\begin{equation}\label{finite.cor.eq0}
N_{\mathcal T}:=\sup_{\lambda\in \Lambda} \#\{\theta: \ T_\theta\cap  S_\Phi(\lambda)\ne \emptyset\}<\infty,
\end{equation}
then the set $\Gamma$ has finite upper density
\begin{equation}\label{finite.cor.eq3}
D_+(\Gamma)\le  \frac{R_\Lambda(2r_0)(R_\Lambda(2r_0)+1)}{2} N_{\mathcal T} D_+(\Lambda),\end{equation}
 where $r_0$  is given in  \eqref{generator.assumption.eq2}.    \end{theorem}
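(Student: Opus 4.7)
The plan is to verify the two assertions of the theorem in turn: first the sampling identity \eqref{samplingproblem}, then the density bound \eqref{finite.cor.eq3}.

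For \eqref{samplingproblem}, I would fix $f = \sum_{\lambda \in \Lambda} c_\lambda \phi_\lambda \in V(\Phi)$ and $g = \sum_{\lambda \in \Lambda} d_\lambda \phi_\lambda \in {\mathcal M}_{f, \Gamma}$ and reduce the problem to showing $|g| = |f|$ on every $T_\theta$; Corollary \ref{decomposition.cor} then upgrades this to $|g| = |f|$ on all of $D$. By the definition of $K_\theta$, $\phi_\lambda \equiv 0$ on $T_\theta$ whenever $\lambda \notin K_\theta$, so for any $\gamma \in \Gamma_\theta \subset T_\theta$,
\[
f(\gamma) = \sum_{\lambda \in K_\theta} c_\lambda \phi_\lambda(\gamma) = \langle c_\theta, \Phi_\theta(\gamma)\rangle, \qquad g(\gamma) = \langle d_\theta, \Phi_\theta(\gamma)\rangle,
\]
with $c_\theta = (c_\lambda)_{\lambda \in K_\theta}$ and $d_\theta = (d_\lambda)_{\lambda \in K_\theta}$. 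The hypothesis $|f(\gamma)| = |g(\gamma)|$ on $\Gamma_\theta$ combined with the phase retrievability of $\{\Phi_\theta(\gamma):\gamma \in \Gamma_\theta\}$ in $\R^{\#K_\theta}$ yields $d_\theta = \delta_\theta c_\theta$ for some $\delta_\theta \in \{-1,1\}$, whence $g = \delta_\theta f$ on $T_\theta$. The reverse inclusion ${\mathcal M}_f \subset {\mathcal M}_{f, \Gamma}$ is immediate from $\Gamma \subset D$.

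For the density estimate, I would estimate $\#(\Gamma \cap B(x, r))$ by a two-layer counting argument. The classical size bound for minimal phase retrievable frames gives $\#\Gamma_\theta \le \#K_\theta(\#K_\theta + 1)/2$, while the support condition \eqref{generator.assumption.eq2} combined with the smallness of each $T_\theta$ (which, under the canonical construction from the proof of Proposition \ref{llcimplieslc.pr}, sits inside a ball of radius $r_0$) confines $K_\theta$ to a ball of radius $2r_0$, yielding $\#K_\theta \le R_\Lambda(2r_0)$. Each $\theta$ contributing to $\Gamma \cap B(x, r)$ satisfies $T_\theta \cap B(x, r) \ne \emptyset$; picking any $\lambda(\theta) \in K_\theta$ then places $\lambda(\theta) \in B(x, r + 3r_0)$, and \eqref{finite.cor.eq0} says each $\lambda \in \Lambda$ appears as $\lambda(\theta)$ for at most $N_{\mathcal T}$ distinct $\theta$. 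Assembling
\[
\#(\Gamma \cap B(x, r)) \le \frac{R_\Lambda(2r_0)(R_\Lambda(2r_0)+1)}{2}\, N_{\mathcal T}\, \#\bigl(\Lambda \cap B(x, r + 3r_0)\bigr),
\]
dividing by $\mu(B(x, r))$, taking $\limsup_{r \to \infty}\sup_{x \in D}$, and absorbing the $+3r_0$ expansion via the regularity \eqref{density.thm.eq1} recovers \eqref{finite.cor.eq3}.

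The main obstacle I anticipate is justifying the uniform bound $\#K_\theta \le R_\Lambda(2r_0)$: the statement does not explicitly demand that each $T_\theta$ lie in a ball of radius $r_0$, so one must either invoke the intended canonical construction of ${\mathcal T}$ (as in Proposition \ref{llcimplieslc.pr}) or replace the constant by a diameter-dependent analogue. Once this is granted, the bipartite counting of contributing $\theta$'s against $\lambda$'s via \eqref{finite.cor.eq0} and the density manipulation using Assumption \ref{domain.assumption} are routine.
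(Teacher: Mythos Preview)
Your argument for \eqref{samplingproblem} matches the paper's exactly. The gap is in the density estimate, and it is precisely the obstacle you flag at the end: you never justify $\#K_\theta \le R_\Lambda(2r_0)$ under the stated hypotheses. The theorem does \emph{not} assume that each $T_\theta$ lies in a small ball, nor that ${\mathcal T}$ is the canonical family of Proposition~\ref{llcimplieslc.pr}; it only inherits from Theorem~\ref{pr.thm} the local linear independence and the \emph{local complement property} on each $T_\theta$. Your fallback of ``replacing the constant by a diameter-dependent analogue'' would not recover \eqref{finite.cor.eq3} as stated.

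The paper closes this gap using the local complement property directly: if $\lambda,\lambda'\in K_\theta$ had $S_\Phi(\lambda,\lambda')=\emptyset$, then $\phi_\lambda\pm\phi_{\lambda'}$ would have the same magnitude on $T_\theta$ but not agree up to a sign there, contradicting local complement. Hence every pair in $K_\theta$ satisfies $B(\lambda,r_0)\cap B(\lambda',r_0)\ne\emptyset$, so $K_\theta$ lies in a single ball of radius $2r_0$ and $\#K_\theta\le R_\Lambda(2r_0)$, with no smallness of $T_\theta$ assumed. The same issue recurs in your counting step: from $T_\theta\cap B(x,r)\ne\emptyset$ alone you cannot place an arbitrary $\lambda(\theta)\in K_\theta$ inside $B(x,r+3r_0)$ without a diameter bound on $T_\theta$. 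The paper instead uses the minimality of the phase retrieval frame to get $\Phi_\theta(\gamma)\ne 0$ for every $\gamma\in\Gamma_\theta$, so each such $\gamma$ lies in some $S_\Phi(\lambda)$ with $\lambda\in K_\theta$; then a $\gamma\in\Gamma_\theta\cap B(x,r)$ forces that $\lambda\in B(x,r+r_0)$, and the bipartite count via $N_{\mathcal T}$ goes through with $r+r_0$ in place of your $r+3r_0$.
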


As an application of Theorem \ref{samplingset2.thm}, 
we have the following phaseless sampling theorem, which is  established in \cite{YC16, CJS17} for signals residing in a principal shift-invariant space generated by a compactly supported function.

\begin{corollary}   \label{samplingset2.cor} Let $ D, \Lambda,  {\mathcal T}, \Phi, V(\Phi)$ and $\Gamma$ be as in Theorem \ref{samplingset2.thm}. Then
any  signal $f\in V(\Phi)$ with ${\mathcal M}_f=\{\pm f\}$ is determined, up to a sign, from its phaseless samples on the discrete set $\Gamma$ with finite  density.
\end{corollary}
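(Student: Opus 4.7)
The plan is to derive Corollary \ref{samplingset2.cor} as a direct specialization of Theorem \ref{samplingset2.thm}. The theorem gives a discrete set $\Gamma \subset D$ with two features: first, ${\mathcal M}_{f,\Gamma} = {\mathcal M}_f$ for every $f \in V(\Phi)$; second, provided $N_{\mathcal T} < \infty$, the set $\Gamma$ has finite upper density bounded by \eqref{finite.cor.eq3}. The corollary is then immediate: for any $f \in V(\Phi)$ satisfying the phase retrievability hypothesis ${\mathcal M}_f = \{\pm f\}$, one has
\[
{\mathcal M}_{f,\Gamma} \;=\; {\mathcal M}_f \;=\; \{\pm f\}.
\]
Unpacking the definition \eqref{mfgamma.def} of ${\mathcal M}_{f,\Gamma}$, this equality says precisely that any $g \in V(\Phi)$ whose phaseless samples $|g(\gamma)|$ agree with $|f(\gamma)|$ on $\Gamma$ must satisfy $g = \pm f$. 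That is the stability-free uniqueness statement asserted in the corollary.

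For the density assertion, I would simply cite the bound \eqref{finite.cor.eq3} from Theorem \ref{samplingset2.thm}, noting that the hypothesis $N_{\mathcal T} < \infty$ is implicit in the phrase ``$\Gamma$ as in Theorem \ref{samplingset2.thm}'' whenever finite density is claimed; the factor $R_\Lambda(2r_0)(R_\Lambda(2r_0)+1)/2$ is finite because of the compact support assumption \eqref{generator.assumption.eq2} combined with the density condition \eqref{generator.assumption.eq1}, and $D_+(\Lambda) < \infty$ by Assumption \ref{generator.assumption}.

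There is no real obstacle here: the substantive work — constructing $\Gamma_\theta$ as a minimal phase retrievable frame on each $T_\theta$, patching the local phase-retrieval data via local complement property, and bookkeeping the density — has all been absorbed into Theorem \ref{samplingset2.thm}. The corollary merely records the specialization of that theorem to the phase-retrievable regime ${\mathcal M}_f = \{\pm f\}$, where the conclusion simplifies from ``${\mathcal M}_{f,\Gamma}$ coincides with ${\mathcal M}_f$'' to the cleaner statement ``$f$ is determined up to a sign by its phaseless samples on $\Gamma$''. Hence the proof will be one or two lines, consisting of the chain of equalities above plus a citation of the density bound.
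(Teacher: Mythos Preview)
Your proposal is correct and matches the paper's treatment exactly: the corollary is stated immediately after Theorem \ref{samplingset2.thm} as a direct application with no separate proof given, precisely because the chain ${\mathcal M}_{f,\Gamma}={\mathcal M}_f=\{\pm f\}$ and the density bound \eqref{finite.cor.eq3} are all that is needed.
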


 We  remark that the existence of  discrete sets $\Gamma_\theta, \theta \in \Theta$, in Theorem \ref{samplingset2.thm} follows from
 the local complement property on  $T_\theta, \theta\in \Theta$, for the linear space $V(\Phi)$, by applying the argument in \cite[Theorem A.4]{CJS17}.

  \begin{proposition}\label{lcpandprf.pr}
  Let  the domain $D$, the generator
$\Phi: = (\phi_\lambda)_{\lambda\in \Lambda}$,
the family ${\mathcal T}=\{T_\theta, \theta\in \Theta\}$ of open sets, 
and the linear space $V(\Phi)$  be as in Theorem \ref{pr.thm}.
Assume that $\Phi$ has local linear independence on open sets $T_\theta, \theta\in \Theta$. Then  for any $\theta\in \Theta$,
the linear space $V(\Phi)$  generated by $\Phi$
    has local complement property on  $T_\theta$ if and only if there exists  a finite set $\Gamma_\theta\subset T_\theta$ such that
     $\{\Phi_\theta(\gamma), \gamma\in \Gamma_\theta\}$ is a  minimal phase retrievable frame for $\R^{\# K_\theta}$.
\end{proposition}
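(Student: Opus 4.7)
The plan is to use the local linear independence assumption on $T_\theta$ to identify the restriction $V(\Phi)|_{T_\theta}$ with $\R^{\#K_\theta}$ through the coefficient map $c=(c_\lambda)_{\lambda\in K_\theta}\mapsto c^{T}\Phi_\theta$. Under this identification, every $f\in V(\Phi)$ is represented on $T_\theta$ by a unique vector $c\in\R^{\#K_\theta}$, and the evaluation $f(\gamma)$ becomes the inner product $\langle c,\Phi_\theta(\gamma)\rangle$. Thus LCP on $T_\theta$, via the equivalent local phase retrievability (discussed after Definition \ref{lcp.def0}), becomes exactly the statement: for any $c,d\in\R^{\#K_\theta}$ with $|\langle c,\Phi_\theta(x)\rangle|=|\langle d,\Phi_\theta(x)\rangle|$ for all $x\in T_\theta$, one has $d=\pm c$.

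For the \emph{sufficiency} direction I would argue: assume that $\{\Phi_\theta(\gamma),\gamma\in\Gamma_\theta\}$ is a phase retrievable frame for $\R^{\#K_\theta}$ and take $f,g\in V(\Phi)$ with $|f(x)|=|g(x)|$ on $T_\theta$. Writing the coefficient vectors $c,d$ of $f|_{T_\theta},g|_{T_\theta}$ via local linear independence, evaluation at $\gamma\in\Gamma_\theta\subset T_\theta$ yields $|\langle c,\Phi_\theta(\gamma)\rangle|=|\langle d,\Phi_\theta(\gamma)\rangle|$. The frame's phase retrievability then forces $d=\pm c$, hence $g=\pm f$ on $T_\theta$, which is local phase retrievability and therefore LCP on $T_\theta$.

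For the \emph{necessity} direction, which is the main obstacle, I would construct $\Gamma_\theta$ by a finite greedy selection, mirroring the argument in \cite{CJS17}. The starting observation is that local linear independence together with LCP yields a continuous complement property: for every $A'\subset T_\theta$, one cannot find nonzero $c,d\in\R^{\#K_\theta}$ with $\langle c,\Phi_\theta(x)\rangle=0$ on $A'$ and $\langle d,\Phi_\theta(x)\rangle=0$ on $T_\theta\setminus A'$. Using the standard finite-dimensional criterion, a finite collection of vectors in $\R^{\#K_\theta}$ is a PR frame if and only if for every splitting of its index set into two parts, at least one part spans $\R^{\#K_\theta}$. I would first pick $\gamma_1,\ldots,\gamma_n\in T_\theta$ (with $n=\#K_\theta$) so that $\Phi_\theta(\gamma_1),\ldots,\Phi_\theta(\gamma_n)$ are linearly independent, which is possible since local linear independence forces the family $\{\Phi_\theta(x):x\in T_\theta\}$ to span $\R^{\#K_\theta}$.

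The iterative step is where the work lies: if after choosing points $\gamma_1,\ldots,\gamma_k$ the finite complement property fails for some index split $(S,S^c)$, then neither $\{\Phi_\theta(\gamma_i):i\in S\}$ nor $\{\Phi_\theta(\gamma_i):i\in S^c\}$ spans $\R^{\#K_\theta}$, so one can pick nonzero $c\perp\{\Phi_\theta(\gamma_i):i\in S\}$ and $d\perp\{\Phi_\theta(\gamma_i):i\in S^c\}$. Setting $A'=\{x\in T_\theta:\langle c,\Phi_\theta(x)\rangle=0\}$, LCP guarantees a point $\gamma\in T_\theta$ with $\langle c,\Phi_\theta(\gamma)\rangle\ne 0$ and $\langle d,\Phi_\theta(\gamma)\rangle\ne 0$ (otherwise the split $(A',T_\theta\setminus A')$ would contradict the continuous complement property), and adjoining this $\gamma$ strictly reduces the number of violating splits. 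Since there are only finitely many subsets of any finite collection, and a minimal PR frame in $\R^n$ has at most $n(n+1)/2$ vectors, the procedure terminates after finitely many steps. A final pruning step — discarding vectors whose removal preserves the complement property — yields a minimal PR frame $\{\Phi_\theta(\gamma),\gamma\in\Gamma_\theta\}$, completing the proof.
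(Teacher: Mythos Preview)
Your identification of $V(\Phi)|_{T_\theta}$ with $\R^{\#K_\theta}$ via local linear independence and your sufficiency argument are both correct and are exactly what is needed. (The paper itself does not prove the proposition but refers to \cite[Theorem~A.4]{CJS17}.)

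The genuine gap is in the termination of your greedy construction for the necessity direction. When you adjoin a new point $\gamma_{k+1}$, the number of index splits doubles from $2^{k}$ to $2^{k+1}$; the old violating split $(S,S^{c})$ is replaced by $(S\cup\{k{+}1\},S^{c})$ and $(S,S^{c}\cup\{k{+}1\})$, and your choice of $\gamma_{k+1}$ only guarantees that the span on one side gains one dimension, not that either side now spans all of $\R^{\#K_\theta}$. Thus the assertion that ``adjoining this $\gamma$ strictly reduces the number of violating splits'' is not justified, and the appeal to ``finitely many subsets of any finite collection'' is circular because the collection itself is growing. The bound $n(n{+}1)/2$ on minimal phase retrievable frames is not a termination argument either: it bounds the size of a minimal PR frame only once you already have one.

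A clean fix, in line with the paper's own use of the space $W_\theta$ in the proof of Theorem~\ref{samplingset2.thm}, is to choose $\Gamma_\theta=\{\gamma_{1},\ldots,\gamma_{M}\}\subset T_\theta$ so that the rank-one matrices $\Phi_\theta(\gamma_{i})\Phi_\theta(\gamma_{i})^{T}$ span the finite-dimensional space
\[
W_\theta:=\Span\big\{\Phi_\theta(x)\Phi_\theta(x)^{T}:\ x\in T_\theta\big\}.
\]
If $c,d\in\R^{\#K_\theta}$ satisfy $|\langle c,\Phi_\theta(\gamma_{i})\rangle|=|\langle d,\Phi_\theta(\gamma_{i})\rangle|$ for all $i$, then the symmetric matrix $cc^{T}-dd^{T}$ is orthogonal (in the trace inner product) to each $\Phi_\theta(\gamma_{i})\Phi_\theta(\gamma_{i})^{T}$, hence to all of $W_\theta$, hence to $\Phi_\theta(x)\Phi_\theta(x)^{T}$ for every $x\in T_\theta$. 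This gives $|\langle c,\Phi_\theta(x)\rangle|=|\langle d,\Phi_\theta(x)\rangle|$ on $T_\theta$, and local phase retrievability (equivalent to LCP) forces $d=\pm c$. Thus $\{\Phi_\theta(\gamma_{i})\}$ is a PR frame obtained in at most $\dim W_\theta\le \#K_\theta(\#K_\theta{+}1)/2$ steps, after which your pruning to a minimal PR frame is fine.
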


 We finish this section with the proof of Theorem \ref{samplingset2.thm}.

      \begin{proof}[Proof of Theorem \ref{samplingset2.thm}] 
           First we prove  \eqref{samplingproblem}.
      By \eqref{mfgamma.eq2}, it suffices to prove
     \begin{equation}\label{samplingset2.thm.pf.eq0}
     {\mathcal M}_{f, \Gamma} \subset {\mathcal M}_f.
     \end{equation}
Take  $g=\sum_{\lambda\in \Lambda}d_\lambda\phi_\lambda\in {\mathcal M}_{f, \Gamma}$,  and write $f=\sum_{\lambda\in \Lambda}c_\lambda\phi_\lambda$. Then for any $\theta\in \Theta$,
$$\Big|\sum_{\lambda\in K_\theta} c_\lambda \phi_\lambda(\gamma)\Big|=|f(\gamma)|=|g(\gamma)|=
\Big|\sum_{\lambda\in K_\theta} d_\lambda \phi_\lambda(\gamma)\Big|\ \  {\rm for \ all}\  \gamma\in \Gamma_\theta.$$
This together with the phase retrieval frame property of $\Phi_\theta(\gamma), \gamma\in \Gamma_\theta$, implies that
\begin{equation}
d_\lambda=\delta_\theta c_\lambda, \  \lambda\in K_\theta
\end{equation}
for some $\delta_\theta\in \{-1, 1\}$. Hence for any $\theta\in \Theta$,
\begin{equation}
|g(x)|=|f(x)|,  \  x\in T_\theta.
\end{equation}
This together with Corollary \ref{decomposition.cor} implies that $g\in {\mathcal M}_f$. This proves \eqref{samplingset2.thm.pf.eq0}.

      To prove \eqref{finite.cor.eq3}, we claim that for any $\theta\in \Theta$,
\begin{equation}\label{finite.cor.pf.eq5}
S_{\Phi}(\lambda, \lambda')\ne\emptyset \ {\rm for \ all} \ \lambda, \lambda'\in K_\theta.
\end{equation}
Suppose on the contrary that the above claim does not hold, then there exist
$\lambda_0, \lambda_0^\prime\in K_\theta$ with $S_{\Phi}(\lambda_0, \lambda_0^\prime)=\emptyset$. Thus
$\phi_{\lambda_0}\pm \phi_{\lambda_0^\prime}\in V(\Phi)$ have the same  magnitude measurements  on $T_\theta$, which contradicts to the local complement property of the space $V(\Phi)$ on $T_\theta, \theta\in \Theta$.

Applying  Claim \eqref{finite.cor.pf.eq5} and Assumption \ref{generator.assumption}, we obtain
\begin{equation}\label{finite.cor.pf.eq6}
B(\lambda, r_0)\cap B(\lambda', r_0)\ne \emptyset \ {\rm for \ all} \ \lambda, \lambda'\in K_\theta.
\end{equation}
This  implies that
\begin{equation}\label{finite.cor.pf.eq7}\#K_\theta\le R_\Lambda(2r_0), \ \theta\in \Theta.\end{equation}
Let
$W_{\theta}$ be the linear space of  symmetric matrices spanned by
outer products  $\Phi_{\theta}(x) (\Phi_{\theta}(x))^T, x\in T_\theta$.
Then
\begin{equation}\label{finite.cor.pf.eq7+}
\dim W_{\theta}\le \frac{\# K_{\theta} (\#  K_\theta+1)}{2}.\end{equation}
 Observe that for any $f\in V(\Phi)$, there exists a unique vector $c_\theta=(c_\lambda)_{\lambda\in K_\theta}$ such that
 $$ |f(x)|^2= c_\theta^T \Phi_{\theta}(x) (\Phi_{\theta}(x))^T c_\theta, \  x\in T_\theta.$$
This together the minimality of the phase retrieval frame $\{\Phi_\theta(\gamma), \gamma\in \Gamma_\theta\}$ implies that
 \begin{equation} \label{samplingdensitysufficiency.thm.pf.eq2--}
 \#\Gamma_{\theta}\le \dim W_{\theta}.
 \end{equation}
Combining \eqref{finite.cor.pf.eq7}, \eqref{finite.cor.pf.eq7+} and \eqref{samplingdensitysufficiency.thm.pf.eq2--}, we obtain
\begin{equation}\label{finite.cor.pf.eq9}\# \Gamma_\theta \le \frac{R_\Lambda(2r_0) (R_\Lambda(2r_0)+1)}{2}\ \ {\rm for \ all} \ \theta\in \Theta.\end{equation}

By the minimality of the phase retrieval frame $\{\Phi_\theta(\gamma), \gamma\in \Gamma_\theta\}$, we have $\Phi_\theta(\gamma)\ne 0$ for all $\gamma\in \Gamma_\theta$, which implies that
\begin{equation}\label{finite.cor.pf.eq8}
\Gamma_\theta\subset \big(\cup_{\lambda\in K_\theta} S_\Phi(\lambda)\big)\cap T_\theta
\end{equation}
Then for any $x\in D$ and $r\ge 0$,
we obtain from  \eqref {finite.cor.eq0}, \eqref{finite.cor.pf.eq9}, \eqref{finite.cor.pf.eq8}   and Assumption \ref{generator.assumption} that
\begin{eqnarray} \label {finite.cor.pf.eq10}
\#(\Gamma\cap B(x, r))   &\hskip-0.08in \le & \hskip-0.08in  \Big(\max_{\theta\in \Theta} \# \Gamma_\theta\Big) \nonumber\\
& & \times
\#\{\theta\in \Theta:  \big(\cup_{\lambda\in K_\theta} S_\Phi(\lambda)\big)\cap T_\theta\cap B(x, r)\ne \emptyset\}\nonumber\\
& \hskip-0.08in \le &  \hskip-0.08in \frac{ R_\Lambda(2r_0) (R_\Lambda(2r_0)+1))}{2}  \Big(\max_{\lambda\in \Lambda} \#\{\theta\in \Theta:  S_\Phi(\lambda)\cap T_\theta\ne \emptyset\}\Big)\nonumber\\
& & \times
\#\{\lambda\in \Lambda: S_\Phi(\lambda)\cap B(x, r)\ne \emptyset\}\nonumber\\
& \hskip-0.08in \le & \hskip-0.08in  \frac{ R_\Lambda(2r_0) (R_\Lambda(2r_0)+1))}{2}   N_{\mathcal T}
\#(\Lambda\cap B(x, r+r_0)).
\end{eqnarray}
This together with  \eqref {density.thm.eq1}  and definition of the density \eqref{generator.assumption.eq1} of a discrete set proves
\eqref{finite.cor.eq3}.
      \end{proof}

\section{Stable Reconstruction  from Phaseless Samples}
\label{stablereconstruction.section}

Let
 ${\mathcal T}=\{T_\theta:\ \theta\in \Theta\}$ satisfy \eqref{Ttheta.def}
 and
 $\Gamma=\cup_{\theta\in \Theta} \Gamma_\theta$ with $\Gamma_\theta\subset T_\theta, \theta\in \Theta$
 be as in Theorem \ref{samplingset2.thm}.
In this section,  we propose the following three-step algorithm, MAPS for abbreviation,
  to construct an approximation
  \begin{equation}\label{geta.def}
  g_\eta=\sum_{\lambda\in \Lambda} d_{\eta; \lambda} \phi_\lambda\end{equation}
  to the original signal $f\in V(\Phi)$ in   magnitude measurements  from its noisy phaseless samples
 \begin{equation}\label{zepsilon.def}
 z_\eta(\gamma)= |f(\gamma)|+\eta(\gamma),\  \gamma\in \Gamma,
 \end{equation}
  taken on a discrete set $\Gamma$ and
 corrupted by
  a bounded noise  $\eta=(\eta(\gamma))_{\gamma\in \Gamma}$.
  \smallskip

\begin{tcolorbox}[colback=white,colframe=black]
 \begin{itemize}
 \item [{0.}] Select a phase adjustment threshold value $M_0\ge 0$ and set
 $K_\theta=\{\lambda\in \Lambda: \ S_\Phi(\lambda)\cap T_\theta\ne\emptyset\}$. 
\item[{1.}]
For $\theta \in \Theta$,
 let \begin{equation}\label{lomas.def1}
{c}_{\eta, \theta}=(c_{\eta, \theta; \lambda})_{\lambda\in \Lambda}
\end{equation}
take zero components except that
 $(c_{\eta, \theta; \lambda})_{ \lambda \in K_{\theta}}$ is a   solution of the  local  {\bf m}inimization problem
\begin{eqnarray} \label{lomas.def2}
& & \min_{(d_\lambda)_{ \lambda\in K_{\theta}}}  \sum_{{ \gamma} \in \Gamma_\theta} \Big|\Big|\sum_{\lambda \in K_{\theta}} d_\lambda \phi_\lambda({\gamma})\Big|-z_{\eta}({ \gamma} )\Big|^2
\nonumber\\
 & = & \min_{\delta_{\gamma} \in \{-1, 1\}, \gamma\in \Gamma_\theta}
 \min_{(d_\lambda)_{ \lambda\in K_{\theta}}}  \sum_{{ \gamma} \in \Gamma_\theta} \Big|\sum_{\lambda \in K_{\theta}} d_\lambda \phi_\lambda({\gamma})-\delta_\gamma z_{\eta}({ \gamma} )\Big|^2.
%
\end{eqnarray}

\item [{2.}] {\bf A}djust  {\bf p}hases of vectors ${c}_{\eta, \theta}, \theta\in \Theta$, so that
the resulting vectors $ \delta_{\eta, \theta}{ c}_{\eta, \theta}$
with $\delta_{\eta, \theta}\in \{-1, 1\}$ have their inner product at least $-M_0$,
\begin{equation} \label{lomas.def3}
 \langle   \delta_{\eta, \theta} { c}_{\eta, \theta},  \delta_{\eta, \theta'} { c}_{\eta,\theta'}\rangle=
 \delta_{\eta, \theta} \delta_{\eta, \theta'}\sum_{\lambda\in K_\theta\cap K_{\theta'}}
 { c}_{\eta, \theta; \lambda} { c}_{\eta, \theta'; \lambda}
 \ge -M_0 
\end{equation}
 for   all $ \theta, \theta' \in \Theta$.
 \item[{3.}] {\bf S}ew  vectors $\delta_{\eta, \theta}{c}_{\eta, \theta}, \theta\in \Theta$, together to obtain
\begin{equation}\label{lomas.def4}
d_{\eta; \lambda}=\frac{\sum_{\theta\in \Theta } \delta_{\eta, \theta}{ c}_{\eta, \theta; \lambda}\chi_{K_\theta}(\lambda) }{\sum_{\theta\in \Theta }\chi_{K_\theta}(\lambda)}, \  \lambda\in \Lambda,
\end{equation}
where $\chi_E$ is the indicator function on a set $E$.
 \end{itemize}
\end{tcolorbox}

The prior versions of the above MAPS algorithm are used in \cite{YC16, CJS17} to reconstruct signals in a principal shift-invariant space from their noisy phaseless samples. As shown in the following remark that complexity of the proposed  MAPS algorithm depends almost linearly on the size of
the original signal.

\begin{remark} {\rm Take a signal
$f=\sum_{\lambda\in \Lambda_0} c_\lambda \phi_\lambda\in V(\Phi)$ with component vector $(c_\lambda)_{\lambda\in \Lambda_0}$ supported in  $\Lambda_0\subset \Lambda$, and
define
$\Theta_0=\{\theta\in \Theta: K_\theta\cap \Lambda_0\ne \emptyset\}$. By
\eqref{lomas.def4}, in the first step of the proposed MAPS algorithm,  it suffices to solve  local minimization problems \eqref{lomas.def2} with $\theta\in \Theta_0$.
Observe that
\begin{equation}\label{complexity.rem.eq1}
\# \Theta_0=\#\big(\cup_{\lambda\in \Lambda_0} \{\theta\in \Theta, \lambda\in K_\theta\}\big)
\le  N_{\mathcal T}   N  
\end{equation}
by \eqref{finite.cor.eq0}, where  $N=\#\Lambda_0$  is the size of supporting component vector
 of the original signal $f$.  This together with \eqref{finite.cor.pf.eq7} and \eqref{finite.cor.pf.eq9}
implies that
 the number of additions and multiplications required in the first step  is $O(N)$.
  By \eqref{lomas.def4}, in the second step it suffices to verify the phase adjustment condition \eqref{lomas.def3}
  for all $\theta, \theta'\in \Theta_0$ with $K_\theta\cap K_{\theta'}\ne \emptyset$.
     For any $\theta\in \Theta$, we obtain from
   \eqref{finite.cor.eq0}  and \eqref{finite.cor.pf.eq7} that
    \begin{eqnarray}\label{complexity.rem.eq2}
    \#\{\theta'\in \Theta: \ K_\theta\cap K_{\theta'}\ne \emptyset\} & \hskip-0.08in \le & \hskip-0.08in \# \big( \cup_{\lambda\in K_\theta} \{\theta'\in \Theta: \ \lambda\in K_{\theta'}\}\big)\nonumber\\
    & \hskip-0.08in \le & \hskip-0.08in N_{\mathcal T} \# K_\theta\le   N_{\mathcal T} R_\Lambda(2r_0).
    \end{eqnarray}
 Therefore the number of additions and multiplications required in the second step  is $O(N)$
 by  \eqref{finite.cor.pf.eq7}, \eqref{complexity.rem.eq1} and \eqref{complexity.rem.eq2}.
 By \eqref{finite.cor.eq0}, the number of additions and multiplications required in the third step of the proposed MAPS algorithm is $O(N)$.
 Combining the above arguments, we conclude that the number of additions and multiplications required in
 the proposed MAPS algorithm to reconstruct an approximation  $g_\eta$ of the original signal $f$ is about $O(N)$.
}\end{remark}

 For a bounded signal $f$ on the domain $D$, we denote its $L^\infty$ norm  by $\|f\|_\infty:=\sup_{x\in D} |f(x)|$,
 and for  a phase retrievable frame ${\mathcal F}=\{f_m\in \R^n, 1\le m\le M\}$, we use 
\begin{eqnarray} \label{phil2.def}
\big\|{\mathcal F}\|_{\rm P} & = & \inf_{T \subset \{1, \cdots, M\}}
 \max\Bigg(
\inf_{ \|v\|_2=1} \Big(\sum_{m\in T} |\langle v, f_m\rangle |^2\Big)^{1/2},\nonumber\\
& & \qquad \qquad \qquad\qquad
\inf_{ \|v\|_2=1} \Big(\sum_{m\not\in T} |\langle v, f_m\rangle |^2\Big)^{1/2} \Bigg) 
\end{eqnarray}
to describe the stability to reconstruct a vector $v$ from its phaseless frame measurements $|\langle v, f_m\rangle|, 1\le m\le M$.
In the next theorem, we show that
  the signal  $g_\eta$ reconstructed from the proposed  MAPS algorithm
 with the
 phase adjustment threshold value $M_0$ properly chosen
provides an approximation to the original signal in magnitude measurements.

\begin{theorem}\label{stability.thm} Let  the domain $D$, the generator
$\Phi: = (\phi_\lambda)_{\lambda\in \Lambda}$,
the family ${\mathcal T}=\{T_\theta, \theta\in \Theta\}$ of open sets, 
and the linear space $V(\Phi)$  be as in Theorem \ref{pr.thm}.
Assume that
the generator $\Phi$ is uniformly bounded in the sense that \begin{equation}
\|\Phi\|_\infty:=\sup_{\lambda\in \Lambda} \|\phi_\lambda\|_\infty<\infty,
\end{equation}
and the sampling set $\Gamma=\cup_{\theta \in \Theta} \Gamma_\theta$
are so chosen that
$\Gamma_\theta\subset T_\theta$ for all  $\theta\in \Theta$,
$\Phi_{\theta, \Gamma_\theta}=\{\Phi_\theta(\gamma), \gamma\in \Gamma_\theta\}, \theta\in \Theta$, are  phase retrievable frames, and
  \begin{equation} \max_{\theta\in \Theta} \# \Gamma_\theta  (\|\Phi_{\theta, \Gamma_\theta}\|_{P})^{-2}<\infty.
 \end{equation}
Given a signal $f\in V(\Phi)$
and a bounded noise  $\eta=(\eta(\gamma))_{\gamma\in \Gamma}$, let
$g_\eta$ be the reconstructed signal from noisy phaseless samples $z_\eta(\gamma), \gamma\in \Gamma$ in \eqref{zepsilon.def}
via the MAPS algorithm
 \eqref{geta.def}--\eqref{lomas.def4},
where \begin{equation}  \label{stability.thm.eq1}
M_0= 24 \Big(\max_{\theta\in \Theta} \# \Gamma_\theta  \big(\|\Phi_{\theta, \Gamma_\theta}\|_{P}\big)^{-2} \Big) \|\eta\|_\infty^2 
\end{equation}
and
\begin{equation}\label{etainfinitenorm}
\|\eta\|_\infty:=\sup_{\gamma\in \Gamma} |\eta(\gamma)|<\infty.\end{equation}
 Then
there exist   $f_\eta, h_\eta\in V(\Phi)$
with the same magnitude measurements on the whole domain,
 \begin{equation}  \label{stability.thm.eq3}
{\mathcal M}_{h_\eta}={\mathcal M}_{f_\eta},
\end{equation}
which are
approximations
to the original signal $f$ and
the reconstruction $g_\eta$  respectively,
\begin{equation} \label{stability.thm.eq4}
\|f_\eta-f\|_\infty \le 4\sqrt{6} \Big(\max_{\theta\in \Theta} \sqrt{\# \Gamma_\theta}  \big(\|\Phi_{\theta, \Gamma_\theta}\|_{P}\big)^{-1} \Big)
  R_\Lambda(r_0)  \|\Phi\|_\infty
 \|\eta\|_\infty
\end{equation}
and
\begin{equation} \label{stability.thm.eq5}
\|g_\eta-h_\eta\|_\infty \le 6\sqrt{6} \Big(\max_{\theta\in \Theta} \sqrt{\# \Gamma_\theta}  \big(\|\Phi_{\theta, \Gamma_\theta}\|_{P}\big)^{-1} \Big)
  R_\Lambda(r_0)  \|\Phi\|_\infty
 \|\eta\|_\infty.
\end{equation}
 \end{theorem}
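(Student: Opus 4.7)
The plan is to analyze the three steps of the MAPS algorithm separately, convert local phase retrieval stability into a global statement via the connectivity analysis already developed in Section \ref{decomposition.section}, and finally translate bounds on the component vectors into sup-norm bounds on the signals using the finite support overlap count $R_\Lambda(r_0)$.

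\textbf{Step 1 (local recovery).} For each $\theta\in\Theta$, write the true component vector restricted to $K_\theta$ as $c_\theta=(c_\lambda)_{\lambda\in K_\theta}$. Since $\Phi_{\theta,\Gamma_\theta}=\{\Phi_\theta(\gamma),\gamma\in\Gamma_\theta\}$ is a phase retrievable frame, the quantity $\|\Phi_{\theta,\Gamma_\theta}\|_P$ defined in \eqref{phil2.def} measures its worst-case local $\ell^2$-stability when the measurements are split into any two parts (which is the right notion for the sign ambiguity). Starting from the identity in \eqref{lomas.def2}, one introduces the optimal sign pattern $\delta_\gamma^\star\in\{-1,1\}$ and writes the problem as a linear least squares problem; using $|f(\gamma)|=|{\Phi_\theta(\gamma)^T c_\theta}|$ and $|z_\eta(\gamma)-|f(\gamma)||\le|\eta(\gamma)|\le\|\eta\|_\infty$, a standard phase retrieval stability argument (as in \cite{CJS17}) yields $\delta_\theta^\star\in\{-1,1\}$ with
\[
\|c_{\eta,\theta}-\delta_\theta^\star c_\theta\|_2\ \le\ \frac{2\sqrt{6\,\#\Gamma_\theta}}{\|\Phi_{\theta,\Gamma_\theta}\|_P}\,\|\eta\|_\infty.
\]
This is the crucial local estimate and is the source of the constant $\sqrt{6}$ and the factor $\sqrt{\#\Gamma_\theta}/\|\Phi_{\theta,\Gamma_\theta}\|_P$ in \eqref{stability.thm.eq4}--\eqref{stability.thm.eq5}.

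\textbf{Step 2 (phase coherence via the threshold $M_0$).} For distinct $\theta,\theta'\in\Theta$ with $K_\theta\cap K_{\theta'}\neq\emptyset$, polarization of the Step 1 estimate gives
\[
\bigl|\langle c_{\eta,\theta},c_{\eta,\theta'}\rangle-\delta_\theta^\star\delta_{\theta'}^\star\langle c_\theta,c_{\theta'}\rangle\bigr|\ \le\ 12\,\Bigl(\max_{\theta}\tfrac{\#\Gamma_\theta}{\|\Phi_{\theta,\Gamma_\theta}\|_P^2}\Bigr)\|\eta\|_\infty^2,
\]
which is exactly $M_0/2$ for the choice \eqref{stability.thm.eq1}. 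Consequently, if $\delta_\theta^\star\delta_{\theta'}^\star\langle c_\theta,c_{\theta'}\rangle\geq 0$ then the sign choice $\delta_{\eta,\theta}=\delta_\theta^\star$ for all $\theta$ satisfies the inequality \eqref{lomas.def3}, which proves feasibility. Conversely, whenever the algorithm is \emph{forced} to choose $\delta_{\eta,\theta}\delta_{\eta,\theta'}=-\delta_\theta^\star\delta_{\theta'}^\star$ at a junction, the above bound forces $|\langle c_\theta,c_{\theta'}\rangle|\le M_0$, i.e.\ the true signal is itself only weakly coupled across that junction relative to the noise level. This is the point on which the whole argument turns, and I expect it to be the main obstacle to make precise.

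\textbf{Step 3 (construction of $f_\eta$ and $h_\eta$).} Define an auxiliary undirected graph on $\Theta$ whose edges are exactly those pairs $(\theta,\theta')$ where $\delta_{\eta,\theta}\delta_{\eta,\theta'}=\delta_\theta^\star\delta_{\theta'}^\star$ (``coherent'' junctions). Using connectivity of this graph we produce global signs $\varepsilon_\theta\in\{-1,1\}$ compatible with the MAPS output. On every connected component $C$ of the \emph{incoherent} graph, Step 2 produces a set of $\lambda$ where $|c_\lambda|$ is of order $\sqrt{M_0}$; flipping the signs of $f$ on the $f_i$'s (from the landscape decomposition of Theorem \ref{decomposition.thm00}) that fall inside those components yields $f_\eta\in V(\Phi)$. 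Define $h_\eta$ by performing the mirror sign flip on the corresponding islands of $g_\eta$; by Theorem \ref{generalphaseretrieval.thm} (applied to the common graph component structure we just extracted) $f_\eta$ and $h_\eta$ belong to the same set $\mathcal{M}$, proving \eqref{stability.thm.eq3}.

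\textbf{Step 4 (from $\ell^2$ coefficient bounds to $L^\infty$ signal bounds).} At any point $x\in D$, only those basis functions with $\lambda\in\Lambda\cap B(x,r_0)$ are nonzero, and there are at most $R_\Lambda(r_0)$ such indices. Using $\|\phi_\lambda\|_\infty\le\|\Phi\|_\infty$ together with Cauchy--Schwarz,
\[
|f_\eta(x)-f(x)|\ \le\ \sqrt{R_\Lambda(r_0)}\,\|\Phi\|_\infty\,\Bigl(\sum_{\lambda\in\Lambda\cap B(x,r_0)}|c_\lambda^{(f_\eta)}-c_\lambda|^2\Bigr)^{1/2},
\]
and analogously for $g_\eta-h_\eta$, where the sum in parentheses is bounded by the local Step 1 estimate summed over the at most $R_\Lambda(r_0)$ values of $\theta$ contributing to a given $\lambda$ (by \eqref{finite.cor.eq0} and the construction in \eqref{lomas.def4}). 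Inserting the Step 1 bound gives \eqref{stability.thm.eq4}--\eqref{stability.thm.eq5} with the stated constants $4\sqrt{6}$ and $6\sqrt{6}$; the discrepancy between the two comes from the fact that $h_\eta-g_\eta$ absorbs one extra unit of error that corresponds to the averaging step in \eqref{lomas.def4}.
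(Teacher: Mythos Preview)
Your overall architecture (local estimate $\to$ sign consistency $\to$ global bound) matches the paper, and your Step~1 is essentially the paper's argument (the paper gets the constant $2\sqrt{2}$ rather than $2\sqrt{6}$, but this is cosmetic). The real gap is in Step~3, in how you construct $f_\eta$ and $h_\eta$.

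You propose to obtain $f_\eta$ by flipping the signs of some of the islands $f_i$ of the landscape decomposition of $f$. But then $|f_\eta|=|f|$ everywhere, so $\|f_\eta-f\|_\infty=2\|f_i\|_\infty$ on each flipped island, which is in no way controlled by $\|\eta\|_\infty$; the bound \eqref{stability.thm.eq4} cannot come out of this. The paper's construction is different and is the key idea you are missing: one \emph{thresholds} the coefficients of $f$ and sets
\[
f_\eta=\sum_{|c_\lambda|>2\sqrt{M_0}} c_\lambda\phi_\lambda.
\]
Then $\|f-f_\eta\|_\infty\le 2\sqrt{M_0}\,R_\Lambda(r_0)\|\Phi\|_\infty$, which is \eqref{stability.thm.eq4}. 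The point is that the graph ${\mathcal G}_{f_\eta}$ (not ${\mathcal G}_f$) governs the sign consistency: whenever $K_\theta\cap K_{\theta'}\cap V_{f_\eta}\ne\emptyset$, the inequality in your Step~2 sharpens to $\langle\tilde\delta_{\eta,\theta}c_{\eta,\theta},\tilde\delta_{\eta,\theta'}c_{\eta,\theta'}\rangle>M_0$ (because a coefficient of size $>2\sqrt{M_0}$ sits in the intersection), and together with \eqref{lomas.def3} this \emph{forces} $\delta_{\eta,\theta}\tilde\delta_{\eta,\theta}=\delta_{\eta,\theta'}\tilde\delta_{\eta,\theta'}$. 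Hence $\delta_{\eta,\theta}\tilde\delta_{\eta,\theta}$ is constant along connected components of ${\mathcal G}_{f_\eta}$, and $h_\eta$ is defined by putting that component sign on each island of $f_\eta$. Your ``coherent/incoherent junction'' graph on $\Theta$ tries to capture this, but without the thresholding you cannot conclude that incoherence only occurs where $|c_\lambda|$ is small, nor that the algorithm's choice $\delta_{\eta,\theta}$ is pinned down.

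A smaller issue: in Step~4 you invoke Cauchy--Schwarz to get a factor $\sqrt{R_\Lambda(r_0)}$, and you say the local $\ell^2$ sum is ``summed over the at most $R_\Lambda(r_0)$ values of $\theta$ contributing to a given $\lambda$''. Neither matches the statement: the theorem has $R_\Lambda(r_0)$, not its square root, and the number of $\theta$'s with $\lambda\in K_\theta$ is controlled by $N_{\mathcal T}$, not $R_\Lambda(r_0)$. The paper instead shows directly that $|d_{\eta,\lambda}-\delta_i c_\lambda|\le 3\sqrt{M_0}$ for every $\lambda$ (splitting into $\lambda\in V_{f_\eta}$ and $\lambda\notin V_{f_\eta}$) and then uses the crude $\ell^1$ bound $\sum_\lambda|\phi_\lambda(x)|\le R_\Lambda(r_0)\|\Phi\|_\infty$.
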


 In the  noiseless environment (i.e. $\eta=0$), it follows from
Theorem \ref{stability.thm} that
  the signal reconstructed from the   MAPS algorithm
 with
 phase adjustment threshold value $M_0=0$ has the same magnitude measurements on the whole domain as the original signal, cf. Theorem
  \ref{generalphaseretrieval.thm}.

  \smallskip

By Theorem
  \ref{stability.thm},  we obtain
\begin{eqnarray}\label{stability.thm.eq2*}\big\||g_\eta|-|f|\big\|_\infty
 & \hskip-0.08in \le & \hskip-0.08in
\|g_\eta-h_\eta\|_\infty+ \|f-f_\eta\|_\infty\nonumber\\
& \hskip-0.08in \le & \hskip-0.08in 10
\sqrt{6} \Big(\max_{\theta\in \Theta} \sqrt{\# \Gamma_\theta}  \big(\|\Phi_{\theta, \Gamma_\theta}\|_{P}\big)^{-1} \Big)
  R_\Lambda(r_0)  \|\Phi\|_\infty
 \|\eta\|_\infty.
\end{eqnarray}
Take $\lambda_0\in \Lambda$  so that $\|\phi_{\lambda_0}\|_\infty \ge \|\Phi\|_\infty/2$.
Then for any signal $f\in V(\Phi)$ and $\epsilon\ge 0$,
we have
\begin{equation}\label{stability.thm.eq3*}
 \big||f(\gamma)\pm \epsilon  \phi_{\lambda_0}(\gamma)|-|f(\gamma)|\big|\le  \|\Phi\|_\infty \epsilon, \ \gamma\in \Gamma\end{equation}
and
\begin{eqnarray} \label{stability.thm.eq4*}  & & \max\Big(\big\||f+\epsilon  \phi_{\lambda_0}|-|f|\big\|_\infty, \big\||f-\epsilon  \phi_{\lambda_0}|-|f|\big\|_\infty\Big)\nonumber\\
& \hskip-0.08in =&\hskip-0.08in
\Big\|\max\Big( \big||f+\epsilon  \phi_{\lambda_0}|-|f|\big|,  \big| |f-\epsilon  \phi_{\lambda_0}|-|f|\big|\Big)\Big\|_\infty\nonumber\\
& \hskip-0.08in \ge &\hskip-0.08in \|\epsilon \phi_{\lambda_0}\|_\infty \ge \frac{1}{2} \|\Phi\|_\infty \epsilon.
\end{eqnarray}
 By \eqref{stability.thm.eq2*}, \eqref{stability.thm.eq3*} and \eqref{stability.thm.eq4*}, we conclude that the reconstructed  signal
$ g_{\eta}$ from the proposed MAPS algorithm
  is a suboptimal approximation to the original signal $f$ in magnitude measurements.

\smallskip

Take $g\in V(\Phi)$.
For the noise  $\eta=(\eta(\gamma))_{\gamma\in \Gamma}$ in \eqref{zepsilon.def} given by
 $\eta(\gamma)= |g(\gamma)|-|f(\gamma)|, \gamma\in \Gamma$, one may verify that the
  signal $g_\eta$ reconstructed from the MAPS algorithm
  could have  the same magnitude measurements as the  signal $g$ has, i.e., $g_\eta\in {\mathcal M}_g$.
This together with  \eqref{stability.thm.eq2*} leads to the bi-Lipschitz property for the phaseless sampling operator on $V(\Phi)$.

\begin{corollary}\label{phaselesssamplingstability.cor}
Let  the domain $D$, the generator
$\Phi$,
the family ${\mathcal T}$ of open sets, 
the phaseless sampling set $\Gamma$,
and the linear space $V(\Phi)$  be as in Theorem \ref{stability.thm}.
Then the phaseless sampling operator
$$S: V(\Phi)\ni f\longmapsto (|f(\gamma)|)_{\gamma\in \Gamma}$$
is bi-Lispchitz in magnitude measurements,
i.e., there exist positive constants $C_1$ and $C_2$ such that
\begin{equation}\label{bilichitz}
C_1\big\||g|-|f|\big\|_\infty\le  \|Sf- Sg\|_\infty \le  C_2  \big\||g|-|f|\big\|_\infty
\end{equation}
for all signals $f, g\in V(\Phi)$.
\end{corollary}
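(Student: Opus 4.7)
The plan is to establish the upper bound trivially from $\Gamma\subset D$ and the lower bound by invoking Theorem \ref{stability.thm} on an artfully chosen noise. For the upper inequality, the inclusion $\Gamma\subset D$ at once yields
\[
\|Sf-Sg\|_\infty=\sup_{\gamma\in\Gamma}\bigl||f(\gamma)|-|g(\gamma)|\bigr|\le\bigl\||f|-|g|\bigr\|_\infty
\]
for every $f,g\in V(\Phi)$, so $C_2=1$ works.

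For the lower bound, my plan is to re-read the magnitudes of $g$ as noisy phaseless samples of $f$. Given $f,g\in V(\Phi)$, I would set $\eta(\gamma):=|g(\gamma)|-|f(\gamma)|$ on $\Gamma$, so that the data in \eqref{zepsilon.def} become $z_\eta(\gamma)=|g(\gamma)|$ with $\|\eta\|_\infty=\|Sf-Sg\|_\infty<\infty$. Feeding these into the MAPS algorithm with the threshold $M_0$ from \eqref{stability.thm.eq1} produces some $g_\eta\in V(\Phi)$ which, by the triangle-type estimate \eqref{stability.thm.eq2*} derived from Theorem \ref{stability.thm}, satisfies
\[
\bigl\||g_\eta|-|f|\bigr\|_\infty\le C\|\eta\|_\infty,\qquad C=10\sqrt{6}\,\Bigl(\max_{\theta\in\Theta}\sqrt{\#\Gamma_\theta}\,\bigl(\|\Phi_{\theta,\Gamma_\theta}\|_P\bigr)^{-1}\Bigr)R_\Lambda(r_0)\|\Phi\|_\infty.
\]
If I can arrange $g_\eta\in\mathcal M_g$, i.e.\ $|g_\eta|\equiv|g|$ on $D$, then the target estimate $\||g|-|f|\|_\infty\le C\|Sf-Sg\|_\infty$ is immediate with $C_1=1/C$.

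The only genuine point of proof, which I expect to be the main (mild) obstacle, is verifying that MAPS admits $g_\eta=g$ as a legitimate output on the samples $z_\eta=|g|_{\Gamma}$. Writing $g=\sum_\lambda a_\lambda\phi_\lambda$, for each $\theta\in\Theta$ the local cost in \eqref{lomas.def2} vanishes at $(a_\lambda)_{\lambda\in K_\theta}$, and the minimal phase retrievable frame property of $\{\Phi_\theta(\gamma):\gamma\in\Gamma_\theta\}$ forces every minimizer to equal $\epsilon_\theta(a_\lambda)_{\lambda\in K_\theta}$ for some $\epsilon_\theta\in\{-1,1\}$. Selecting $\delta_{\eta,\theta}:=\epsilon_\theta$ in Step 2 turns each inner product in \eqref{lomas.def3} into $\sum_{\lambda\in K_\theta\cap K_{\theta'}}a_\lambda^2\ge 0\ge -M_0$, so the phase-adjustment constraint holds trivially; the sewing formula \eqref{lomas.def4} then returns $d_{\eta;\lambda}=a_\lambda$ for every $\lambda\in\Lambda$, whence $g_\eta=g\in\mathcal M_g$ tautologically. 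The subtlety, such as it is, lies in exploiting both the freedom to pick any valid local minimizer and any admissible global sign pattern allowed by the MAPS algorithm; once this is in hand, the bi-Lipschitz conclusion follows instantly from the chain $\||g|-|f|\|_\infty=\||g_\eta|-|f|\|_\infty\le C\|\eta\|_\infty=C\|Sf-Sg\|_\infty$.
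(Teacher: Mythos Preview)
Your proposal is correct and follows essentially the same approach as the paper: the paper's justification (given in the paragraph immediately preceding the corollary) is precisely to set $\eta(\gamma)=|g(\gamma)|-|f(\gamma)|$, observe that the MAPS output $g_\eta$ ``could have the same magnitude measurements as the signal $g$ has, i.e., $g_\eta\in\mathcal M_g$'', and then invoke \eqref{stability.thm.eq2*}. You have supplied the verification that the paper leaves to the reader, namely that $g_\eta=g$ is an admissible MAPS output; one cosmetic point is that Theorem~\ref{stability.thm} only assumes the $\Phi_{\theta,\Gamma_\theta}$ are phase retrievable frames (not necessarily minimal), but your uniqueness argument for the local minimizers uses only phase retrievability, so this is harmless.
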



We finish this section with the proof of Theorem \ref{stability.thm}.

\begin{proof} [Proof of Theorem \ref{stability.thm}]
Take $\theta\in \Theta$
and  define
\begin{equation}\label{stability.thm.pf.eq0} g_{\eta, \theta}=\sum_{\lambda\in \Lambda} c_{\eta, \theta; \lambda}\phi_\lambda,\end{equation}
where $c_{\eta, \theta; \lambda}, \lambda\in \Lambda$, are given in \eqref{lomas.def1}.
Then  there exists  a subset $\Gamma_\theta^\prime\subset \Gamma_\theta$ such that
\begin{eqnarray}\label{stability.thm.pf.eq1}  
 & &\Big(\sum_{\gamma\in \Gamma_\theta^\prime} \big|g_{\eta, \theta}(\gamma)- f(\gamma)\big|^2\Big)^{\frac 12}
 +
\Big(\sum_{\gamma\in \Gamma_\theta\backslash \Gamma_\theta^\prime} \big|g_{\eta, \theta}(\gamma)+ f(\gamma)\big|^2\Big)^{\frac 12} \nonumber\\
 &\hskip-0.08in = & \hskip-0.08in \Big(\sum_{\gamma\in \Gamma_\theta^\prime} \big| |g_{\eta, \theta}(\gamma)|- |f(\gamma)|\big|^2\Big)^{\frac 12}
 +
\Big(\sum_{\gamma\in \Gamma_\theta\backslash \Gamma_\theta^\prime} \big||g_{\eta, \theta}(\gamma)|-|f(\gamma)|\big|^2\Big)^{\frac 12} \nonumber\\
& \hskip-0.08in  \le  & \hskip-0.08in \sqrt{2} \Big(\sum_{\gamma\in \Gamma_\theta} \big||g_{\eta, \theta}(\gamma)|-|f(\gamma)|\big|^2\Big)^{\frac 12}
 \nonumber\\
 & \hskip-0.08in \le &\hskip-0.08in
 \sqrt{2}\Big(\sum_{\gamma\in \Gamma_\theta}\big| |g_{\eta, \theta}(\gamma)|-
 z_{\eta} (\gamma)\big|^2\Big)^{\frac 12} + \sqrt{2} \Big(\sum_{\gamma\in \Gamma_\theta}\big|
 |f(\gamma)|-z_{\eta} (\gamma)\big|^2\Big)^{\frac 12}\nonumber\\
\hskip0.21in & \hskip-0.08in \le & \hskip-0.08in 2\sqrt{2}
 \Big(\sum_{\gamma\in \Gamma_\theta}\big|
 |f(\gamma)|-z_{\eta} (\gamma)\big|^2\Big)^{\frac 12}\le  2\sqrt{2} \sqrt{\# \Gamma_\theta}  \|\eta\|_{\infty},
\end{eqnarray}
where the third inequality follows from \eqref{lomas.def2}
and the last inequality holds by   \eqref{zepsilon.def}.
By \eqref{lomas.def1} and the definitions of the sets $K_\theta$ and $\Gamma_\theta, \theta\in \Theta$, we have
\begin{equation}\label{stability.thm.pf.eq1+}
g_{\eta, \theta}(\gamma)\pm f(\gamma)=\sum_{\lambda\in K_\theta} (c_{\eta, \theta; \lambda}\pm  c_\lambda) \phi_\lambda (\gamma), \ \gamma\in \Gamma_\theta.\end{equation}
By \eqref{phil2.def}, \eqref{stability.thm.pf.eq1},
\eqref{stability.thm.pf.eq1+}
and the phase   retrievable frame assumption for $\Phi_{\theta, \Gamma_\theta}$, we obtain that
\begin{equation} \label{stability.thm.pf.eq2}
 \Big(\sum_{ \lambda \in K_{\theta} } |c_{\eta, \theta; \lambda}- \tilde \delta_{\eta, \theta} c_\lambda|^2\Big)^{1/2} 
  \le   2\sqrt{2}  \sqrt{\# \Gamma_\theta} \big(\|\Phi_{\theta, \Gamma_\theta}\|_{\rm P}\big)^{-1}\|\eta\|_\infty
\end{equation}
for some $\tilde \delta_{\eta, \theta}\in \{-1, 1\}$.

Let $\tilde \delta_{\eta, \theta}, \theta\in \Theta$, be as in \eqref{stability.thm.pf.eq2}.
Then for any $\theta, \theta'\in \Theta$,  we have
\begin{eqnarray}\label{stability.thm.pf.eq2+}
  \langle \tilde\delta_{\eta, \theta}{ c}_{\eta, \theta},
\tilde\delta_{\eta, \theta'}{c}_{\eta, \theta'}\rangle
  & \hskip-0.08in = &  \hskip-0.08in \sum_{\lambda \in K_{\theta}\cap K_{\theta'}}
 \tilde \delta_{\eta, \theta}
\tilde\delta_{\eta, \theta'}
{ c}_{\eta, \theta;\lambda}{c}_{\eta, \theta';\lambda}
 \nonumber\\
    & \hskip-0.08in \ge  & \hskip-0.08in     \sum_{\lambda\in K_{\theta}\cap K_{\theta'}}|c_\lambda|^2
-
  \sum_{\lambda\in K_{\theta}\cap K_{\theta'}}
|c_\lambda|  |  \tilde\delta_{\eta, \theta}c_{ \eta, \theta; \lambda}-c_\lambda| \nonumber\\
   & & \hskip-0.08in - \sum_{\lambda\in K_{\theta}\cap K_{\theta'}}
 |\tilde \delta_{\eta, \theta'} c_{\eta,  \theta'; \lambda}- c_\lambda|
|c_\lambda|\nonumber\\
& & \hskip-0.08in - \sum_{\lambda\in K_{\theta}\cap K_{\theta'}}
 |\tilde\delta_{\eta, \theta} c_{ \eta, \theta; \lambda}- c_\lambda|
  |\tilde \delta_{\eta, \theta'}c_{\eta, \theta'; \lambda}-c_\lambda|\nonumber\\
   & \hskip-0.08in\ge  & \hskip-0.08in  \frac{1}{2} \sum_{\lambda\in K_{\theta}\cap K_{\theta'}}|c_\lambda|^2 \nonumber\\
   \hskip-0.08in & \hskip-0.08in & \hskip-0.08in  -
  \frac{3}{2} \sum_{\lambda\in K_{\theta}\cap K_{\theta'}}
  \Big(| \tilde\delta_{\eta, \theta}c_{\eta, \theta; \lambda}-c_\lambda|^2 +
 |\tilde\delta_{\eta, \theta'} c_{\eta,  \theta'; \lambda}- c_\lambda|^2\Big).
 \end{eqnarray}
 This together with \eqref{stability.thm.eq1} and \eqref{stability.thm.pf.eq2}   implies
\begin{equation}\label{stability.thm.pf.eq3}
\langle \tilde\delta_{\eta, \theta}{ c}_{\eta, \theta},
\tilde\delta_{\eta, \theta'}{c}_{\eta, \theta'}\rangle \ge   -24  \# \Gamma_\theta  \big(\|\Phi_{\theta, \Gamma_\theta}\|_{\rm P}\big)^{-2}
 \|\eta\|_\infty^2 \ge -M_0
\end{equation}
for all $\theta, \theta'\in \Theta$.
This proves that phases of  ${ c}_{\eta, \theta}, \theta \in \Theta$, in \eqref{lomas.def1}
can be adjusted so that \eqref{lomas.def3} holds.

Let $\delta_{\eta, \theta}\in \{-1, 1\}, \theta\in \Theta$, be signs  in \eqref{lomas.def3} used for the phase adjustment
of vectors ${ c}_{\eta, \theta}, \theta \in \Theta$, in \eqref{lomas.def1}.
We remark that the above signs are not necessarily the ones in \eqref{stability.thm.pf.eq2}, however  as shown in  \eqref{stability.thm.pf.eq7}  below they are related.
Define
\begin{equation}\label{stability.thm.pf.eq3+0}
f_\eta=\sum_{|c_\lambda|> 2\sqrt{M_0}} c_\lambda \phi_\lambda.
\end{equation}
Then  for $x\in D$, we obtain from \eqref{generator.assumption.eq2} and \eqref{rs.def}  that
\begin{equation*}
 |f(x)-f_\eta(x)|
 \le    2 \sqrt{M_0}
\sum_{\lambda\not\in V_{f_\eta}}  |\phi_\lambda(x)| \le    2 \sqrt{M_0}   R_\Lambda(r_0)  \|\Phi\|_\infty,
\end{equation*}
which proves \eqref{stability.thm.eq4}.

By \eqref{stability.thm.eq1}, \eqref{stability.thm.pf.eq2} and
\eqref{stability.thm.pf.eq2+}, we obtain that
\begin{equation}\label{stability.thm.pf.eq3+}
\langle \tilde\delta_{\eta, \theta}{ c}_{\eta, \theta},
\tilde\delta_{\eta, \theta'}{c}_{\eta, \theta'}\rangle >   M_0
\end{equation}
for all $\theta, \theta'\in \Theta$ with $K_\theta\cap K_{\theta'}\cap V_{f_\eta}\ne \emptyset$.
This together with \eqref{lomas.def3} implies that
\begin{equation*}
\delta_{\eta, \theta}\tilde \delta_{\eta, \theta}=
\delta_{\eta, \theta'} \tilde\delta_{\eta, \theta'}
\end{equation*}
hold for all pairs $(\theta, \theta')$ satisfying
$K_\theta\cap K_{\theta'}\cap V_{f_\eta}\ne \emptyset$.
Hence for $\lambda\in V_{f_\eta}$ there exists
$\delta_\lambda\in \{-1, 1\}$ such that
\begin{equation} \label{stability.thm.pf.eq5}
\delta_{\eta, \theta}\tilde \delta_{\eta, \theta}=\delta_\lambda
\end{equation}
for all $\theta\in \Theta$ satisfying $\lambda \in K_\theta$.
Decompose the graph ${\mathcal G}_{f_\eta}$ into the union of connected components $(V_{\eta, i}, E_{\eta, i}), i\in I_\eta$,
and  the signal $f_\eta$ as in \eqref{decomposition.def1}, \eqref{decomposition.def2} and \eqref{decomposition.def3},
 \begin{equation}\label{feta.demposition}
 f_\eta=\sum_{i\in I_\eta} \sum_{\lambda\in V_{\eta, i}} c_\lambda \phi_\lambda.
 \end{equation}
Observe that for any edge $(\lambda, \lambda')$ of $V_{f_\eta}$, there exists $\theta_0\in \Theta$ such that $\lambda, \lambda'\in K_{\theta_0}$ by
\eqref{Ttheta.def}. Hence
\begin{equation} \label{stability.thm.pf.eq6}
\delta_\lambda= \delta_{\eta, \theta_0}\tilde \delta_{\eta, \theta_0}=\delta_{\lambda'}.
\end{equation}
Combining \eqref{stability.thm.pf.eq5} and \eqref{stability.thm.pf.eq6}, there exists $\delta_i, i\in I_\eta$, such that
\begin{equation} \label{stability.thm.pf.eq7}
\delta_{\eta, \theta}\tilde \delta_{\eta, \theta}=\delta_i
\end{equation}
for all $\theta\in \Theta$ satisfying $K_\theta\cap V_{\eta,i}\ne \emptyset$.
Set
$$h_{\eta}=\sum_{i\in I_\eta}\delta_i \sum_{\lambda\in V_{\eta, i}} c_\lambda \phi_\lambda.$$
Then $f_\eta$ and $h_\eta$ have the same magnitude measurements on the whole domain by
 \eqref{decomposition.def1}, which proves \eqref{stability.thm.eq3}.

For all $\lambda\not\in V_{f_\eta}$, we obtain from \eqref{stability.thm.pf.eq2} that
\begin{equation}    \label{stability.thm.pf.eq8--}
|d_{\eta, \lambda}| \le
\frac{\sum_{ K_\theta \ni \lambda } (|\delta_{\eta, \theta} c_{\eta, \theta; \lambda}- \delta_{\eta, \theta} \tilde \delta_{\eta,\theta} c_\lambda|+|c_\lambda|)}
{\sum_{  K_\theta \ni \lambda}  1}
 \le  3\sqrt{M_0}.
\end{equation}
For any $\lambda\in V_{\eta, i}, i\in I_\eta$,  we get
\begin{eqnarray}
  \label{stability.thm.pf.eq8}
|d_{\eta,\lambda}-\delta_i c_{\lambda}|
 & \le &
\frac{\sum_{  K_\theta \ni \lambda }|  \delta_{\eta, \theta} { c}_{\eta, \theta; \lambda}-\delta_i { c}_\lambda| }{\sum_{ K_\theta \ni  \lambda }
1 }\nonumber\\
& = &
\frac{\sum_{ K_\theta \ni \lambda }|  { c}_{\eta, \theta; \lambda}-\tilde \delta_{\eta, \theta}{c}_\lambda| }{\sum_{K_\theta \ni   \lambda } 1}
 \le \sqrt{M_0}.
\end{eqnarray}
Combining  \eqref{stability.thm.pf.eq8--} and \eqref{stability.thm.pf.eq8}, we obtain
\begin{eqnarray} \label{stability.thm.pf.eq9}
|g_\eta(x)-h_\eta(x)| & \hskip-0.08in \le &  \hskip-0.08in \sum_{\lambda\not\in V_{f_\eta}} |d_{\eta, \lambda}||\phi_\lambda(x)|
+\sum_{i\in I_\eta} \sum_{\lambda\in V_{\eta, i}} |d_{\eta, \lambda}-\delta_i c_\lambda| |\phi_\lambda(x)|\nonumber\\
& \hskip-0.08in \le & \hskip-0.08in 3\sqrt{M_0} \sum_{\lambda\in \Lambda} |\phi_\lambda(x)|\le 3 \sqrt{M_0} \|\Phi\|_\infty \sum_{\lambda\in \Lambda} \chi_{B(\lambda, r_0)}(x)
\nonumber\\
& \hskip-0.08in \le & \hskip-0.08in  3 \sqrt{M_0}   R_\Lambda(r_0)  \|\Phi\|_\infty \ \ {\rm for \ all}\ x\in D,
\end{eqnarray}
which proves \eqref{stability.thm.eq5}.
This  completes the proof.
\end{proof}

\section{Numerical Simulations}
\label{simulation.section}

In this section, we present some numerical simulations to demonstrate the performance of the MAPS algorithm proposed in the last section,
where  signals are
one-dimensional non-uniform cubic splines and two-dimensional piecewise affine functions on a triangulation. 

Denote the positive part of a real number $x$ by $x_+=\max(x, 0)$. In the first simulation, we consider phaseless sampling and reconstruction of  cubic spline signals $f$ on the interval $[a, b]$  with non-uniform knots $a=t_0<t_1<\ldots<t_N=b$, see the left image of Figure \ref{spline_separable.fig} 
where $a=0, b=100$ and $N=100$. Those  signals have the following
parametric representation
\begin{equation}\label{spline.def0}
f(x)= \sum_{n=0}^{N-4} c_n B_{n}(x),\  x\in [a, b],
\end{equation}
 where
 \begin{equation*}
\label{B_spline.eq}
 B_{n}(x)=(t_{n+4}-t_n)\sum_{l=0}^4 \frac{(x-t_{n+l})_+^3}{\prod_{0\le j\le 4, j\ne l} (t_{n+l}-t_{n+j})}, \ \ 0\le n\le N-4
 \end{equation*}
 are cubic B-splines with knots $t_{n+l}, 0\le l\le 4$   \cite{unser99, wahba90}.
 In our simulations, we assume that
$$c_n\in [-1, 1], \  0\le n\le N-4,$$
 are randomly selected,   and
$$t_n= a+  (n+\epsilon_n)\frac{b-a}{N},\ 1\le n\le N-1$$
 for some  $\epsilon_n, 1\le n\le N-1$, being randomly selected  in $[-0.2, 0.2]$.
Then cubic spline signals in the first simulation  
 have $(b-a)/N$ as their rate of innovation.

 Consider the scenario that
phaseless samples of the signal  $f$ in \eqref{spline.def0} on a discrete 
 set
$\Gamma$ are corrupted
by a bounded random noise,
\begin{equation}\label{noisydata2.def}
{ z}_{\eta}({\gamma})= |f({\gamma})|+\eta({ \gamma}), \ {\gamma}\in \Gamma,
\end{equation}
 where $ \eta(\gamma),  \gamma\in \Gamma$,  are randomly selected in the interval $[-\eta, \eta]$ for  some $\eta \ge 0$,
 \begin{equation}\label{spline.samplingset}
 \Gamma:=\cup_{n=0}^{N-1} \Gamma_n:= \bigcup_{n=0}^{N-1} \Big\{t_n+ k \frac{t_{n+1}-t_n}{K+1}\in (t_n, t_{n+1}),\  1\le k\le K\Big\},\end{equation}
     and
  $K\ge 7$ is a  positive integer.

Denote  by $g_\eta$ the reconstructed signal from the above noisy phaseless samples via the proposed MAPS algorithm. Presented on the  top left and right of Figure \ref{spline_quadratic.pic} 
are the reconstructed signal $g_\eta$ via the proposed MAPS algorithm
and the difference  $|g_\eta|-|f_o|$
  between magnitudes of the reconstructed signal  $g_\eta$ and the original signal  $f_o$  plotted on the left  of Figure
    \ref{spline_separable.fig}  respectively, 
     where $\eta=0.01, K=9$ and
the  maximal  error
   $\||g_\eta|-|f_o|\|_\infty$  in
    magnitude measurements  is $0.2104$.
This demonstrates the approximation property in Theorem \ref{stability.thm}. 
Unlike four ``islands"
  decomposition
 \eqref{decomposition.def1}, \eqref {decomposition.def2} and \eqref{decomposition.def3} for the original signal $f_o$, signals $f_\eta$
 and $h_\eta$   used to approximate the original signal $f_o$ and the reconstructed signal $g_\eta$
 in  Theorem \ref{stability.thm}
 have
 five ``islands"
  decomposition
 \eqref{decomposition.def1}, \eqref {decomposition.def2} and \eqref{decomposition.def3}, see the bottom left of Figure
    \ref{spline_quadratic.pic}.
 \begin{figure}[t] 
\begin{center}
\includegraphics[width=60mm, height=38mm]{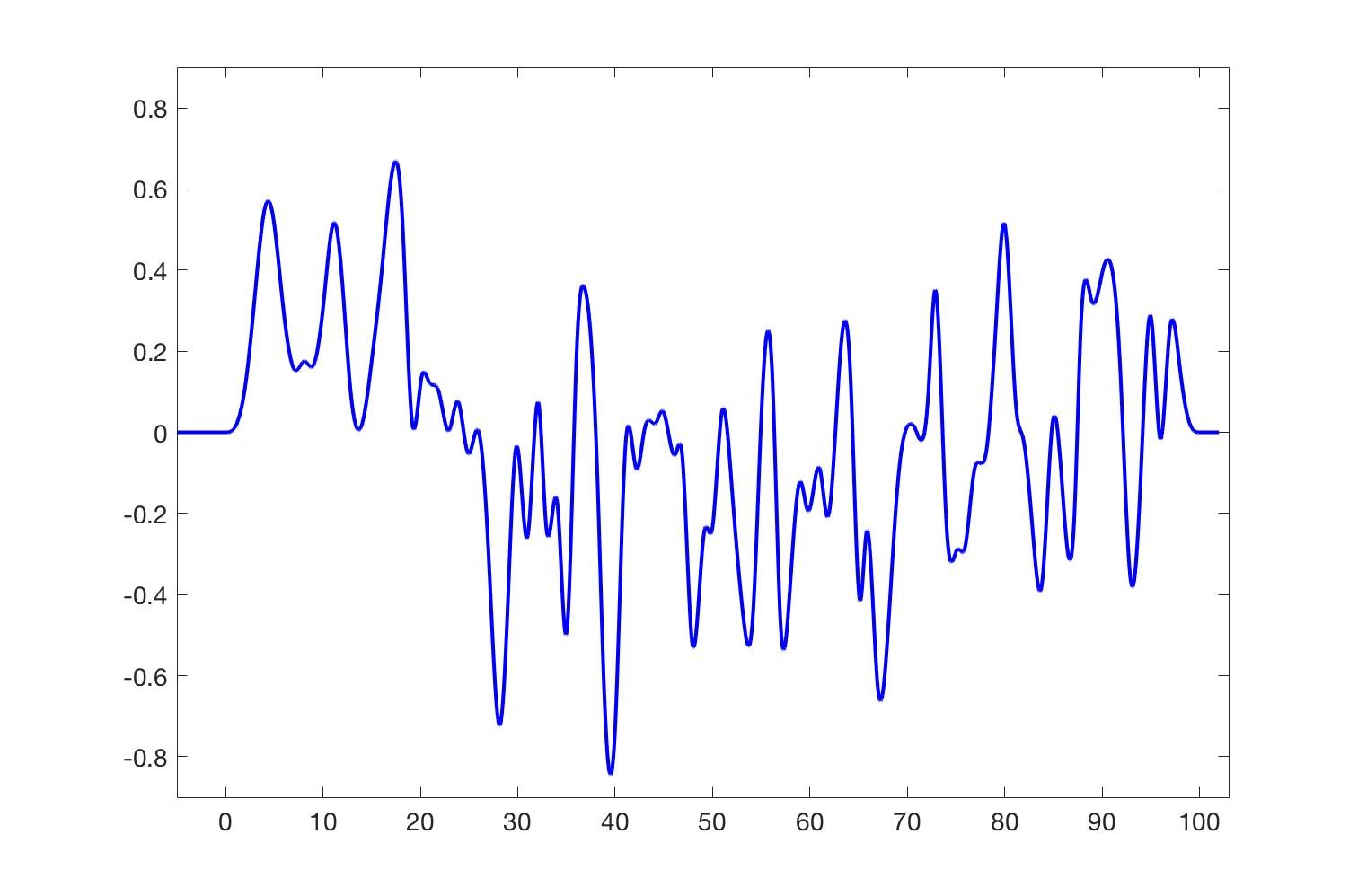}
\includegraphics[width=60mm, height=38mm]{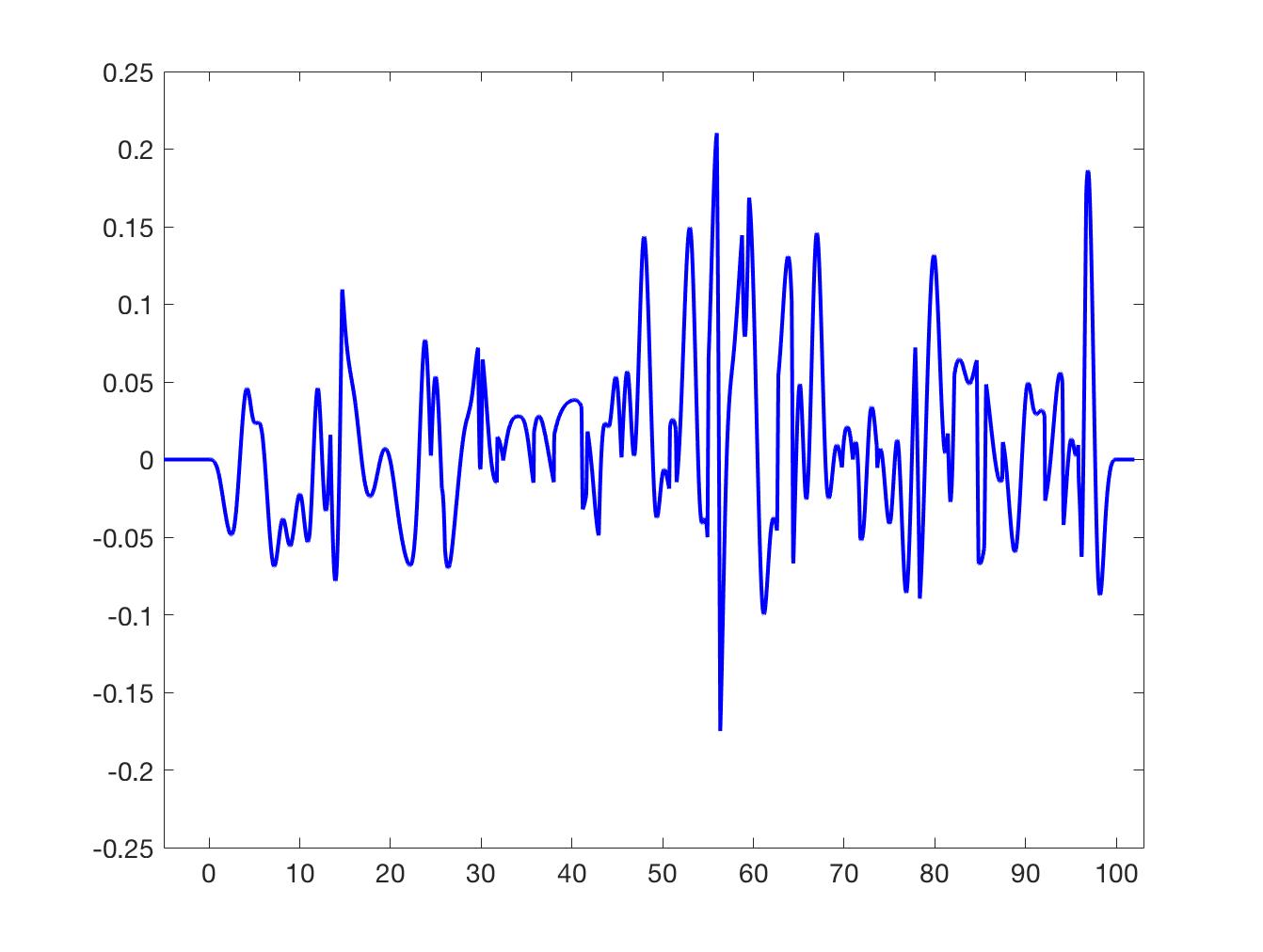}\\
\includegraphics[width=60mm, height=38mm]{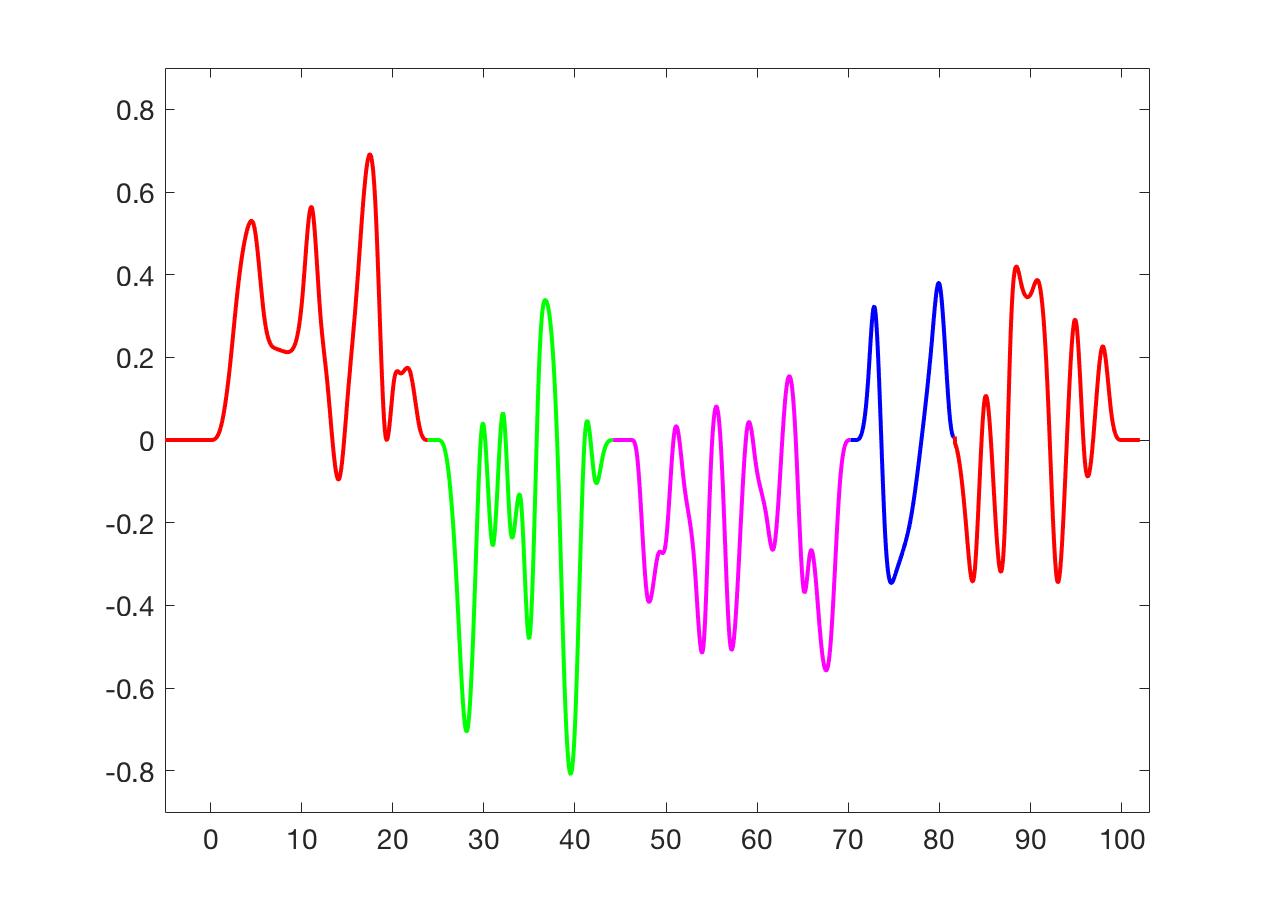}
\includegraphics[width=60mm, height=38mm]{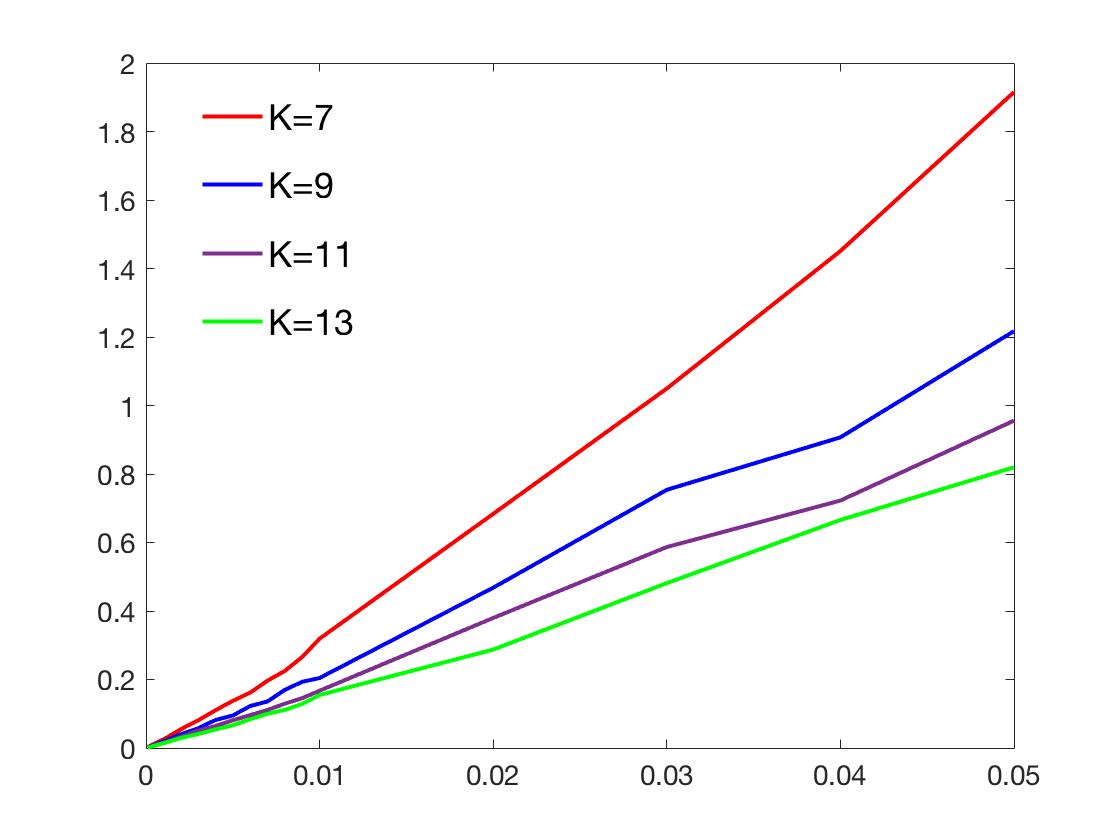}
\end{center}
\begin{center}
\caption{Plotted on the  top left is a  signal $g_\eta$ reconstructed  
 via the proposed MAPS algorithm, while on the top right 
 is the difference $|g_\eta|-|f_o|$
  between magnitude measurements of the reconstructed signal  $g_\eta$ and the original signal  $f_o$ plotted on the left  of Figure
    \ref{spline_separable.fig}. 
 The signal  $h_\eta$ in Theorem \ref{stability.thm} is plotted on the bottom left which has
 five ``islands"
  decomposition
 \eqref{decomposition.def1}, \eqref {decomposition.def2} and \eqref{decomposition.def3}.
 On the bottom right  is the average of maximal reconstruction error  $E_{\eta, K}$ in 200 trials with respect to different noise levels $\eta$ and oversampling rates $K$.
  }\label{spline_quadratic.pic}
\end{center}
\end{figure}

 Performance of the proposed MAPS algorithm depends
 on the noise level $\eta$ and also the oversampling rate $K$, the ratio between
 the density  $K(b-a)/N$ of the  sampling set $\Gamma$ in \eqref{spline.samplingset}  and the  rate  $(b-a)/N$ of innovation of signals in $V(\Phi)$.
 Denote by
 $$E_{\eta, K}:=\||g_\eta|-|f|\|_\infty$$
 the maximal  reconstruction error in   magnitude measurements between the original signal $f$ and the reconstructed signal $g_\eta$ for different noise levels $\eta$ and oversampling rate $K$.
Plotted on the bottom right of  Figure \ref{spline_quadratic.pic} 
are average of the maximal  reconstruction error
 $E_{\eta, K}$
   in 200 trials against the noise level $\eta$ and oversampling rate $K$.  This indicates that the maximal reconstruction error $E_{\eta, K}$
    depends almost linearly on the noise level $\eta$, and  decreases as the oversampling rate $K$ increases,
    cf. \eqref{stability.thm.eq2*} and Theorem \ref{stability.thm}.

\smallskip

Let $D$ be a triangulation composed by  the triangles $T_\theta, \theta\in \Theta$,
and  denote  the set of all inner nodes of the triangulation by $\Lambda$.
In the second simulation, we consider piecewise affine signals
\begin{equation}\label{piecewiseaffine.def}
f(x, y)=\sum_{\lambda\in \Lambda} c_\lambda \phi_\lambda(x, y)\end{equation}
 on   the triangulation $D$,  where the basis signals $\phi_\lambda, \lambda\in \Lambda$ are  piecewise affine  on  triangles  $T_\theta, \theta\in \Theta$ with $\phi_\lambda(\lambda)=1$ and $\phi_\lambda(\lambda')=0$ for all other nodes $\lambda'\ne \lambda$, see  the right image of  Figure \ref{spline_separable.fig}.
From the definition of  basis signals $\phi_\lambda, \lambda\in \Lambda$, a signal  $f$ of the form \eqref{piecewiseaffine.def} has the following interpolation property,
$$f(x, y)=\sum_{\lambda\in \Lambda} f(\lambda) \phi_\lambda(x, y).$$
In the simulation, phaseless samples of a piecewise affine signal  $f$  on a discrete 
 set
$\Gamma=\cup_{\theta\in \Theta} \Gamma_\theta$ are corrupted
by the bounded random noise,
\begin{equation}\label{noisydata2.defcase2}
{ z}_{\eta}({\gamma})= |f({\gamma})|+\eta({ \gamma}), \ {\gamma}\in \Gamma,
\end{equation}
 where $ \eta(\gamma),  \gamma\in \Gamma$,  are randomly selected in the interval $[-\eta, \eta]$ for  some $\eta \ge 0$
 and for every $\theta\in \Theta$, the set $\Gamma_\theta$ contains $7$ points randomly selected inside $T_\theta$.
Shown on the left of Figure
\ref{lambda_set.pic} is a signal
 $g_\eta$  reconstructed from the  noisy phaseless samples  \eqref{noisydata2.defcase2} via the proposed MAPS algorithm, where  $\eta=0.01$, the original piecewise affine signal $f$ is plotted on the right of Figure \ref{spline_separable.fig}, and  the maximal  reconstruction error
 $\||g_\eta|-|f|\|_\infty$  in
    magnitude measurements between the original signal $f$ and the reconstructed signal $g_\eta$  is $0.0360$.

\begin{figure}[H]
\begin{center}
\includegraphics[width=62mm, height=45mm]{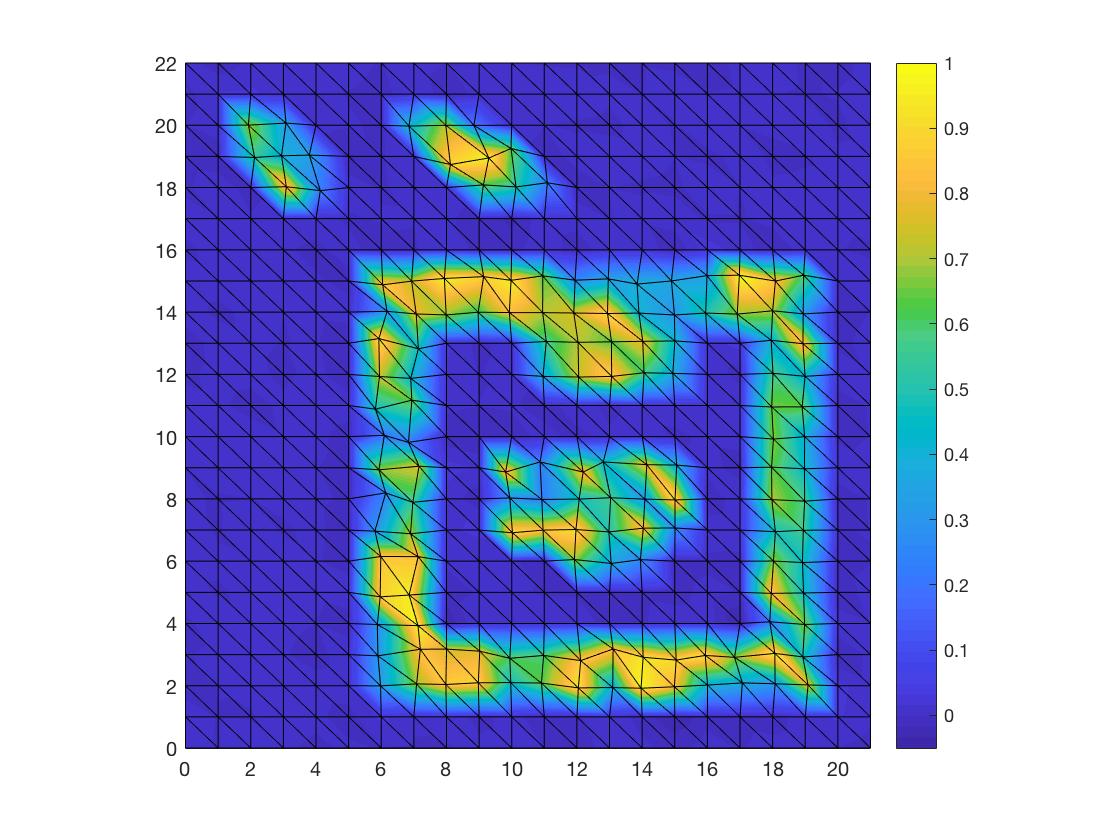}
\includegraphics[width=62mm, height=45mm]{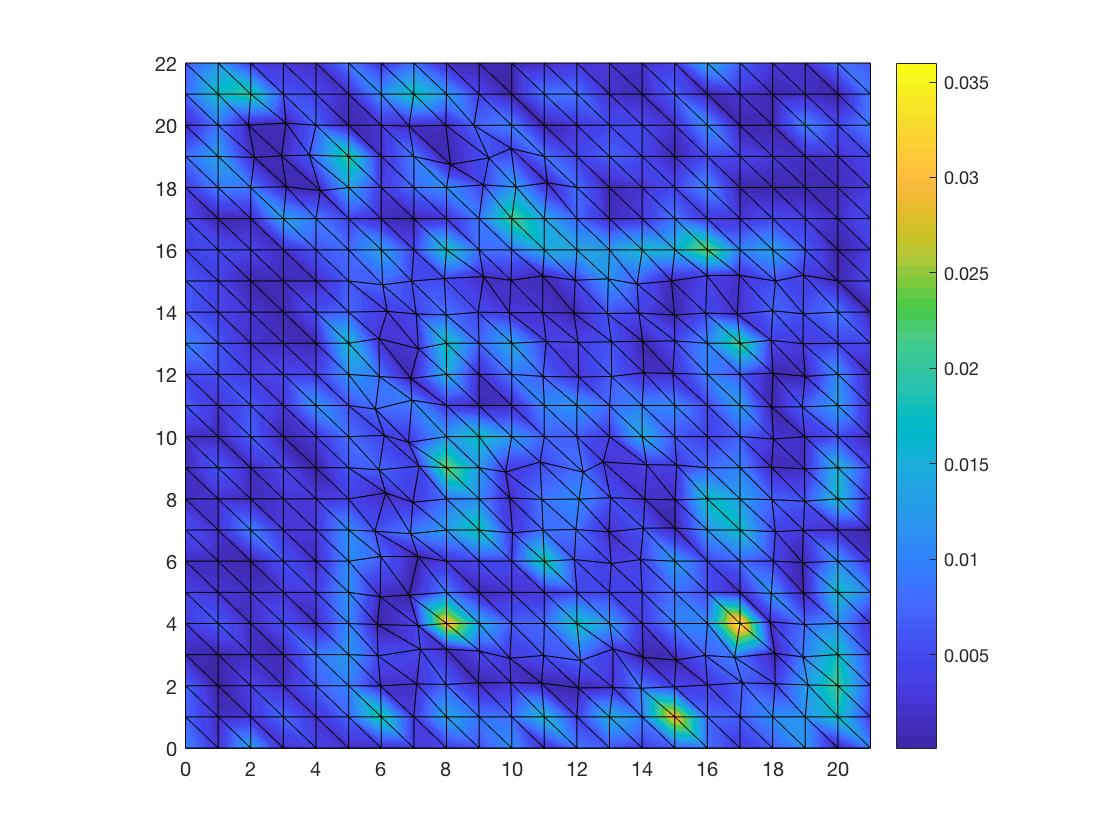}
\caption{Plotted on the left is a reconstructed signal $g_\eta$
via the MAPS algorithm, while on the right is the difference $||g_\eta|-|f||$ between magnitude measurements of the reconstructed signal $g_\eta$ and the original signal $f$ plotted on the right of
 Figure \ref{spline_separable.fig}.  }\label{lambda_set.pic}
\end{center}
\end{figure}

In the simulation, we  consider the performance of the proposed MAPS algorithm to construct piecewise affine  approximation when  the original signal
$f$ of the form \eqref{piecewiseaffine.def}  has evaluations $ f(\lambda), \lambda\in \Lambda$ on their inner nodes being randomly selected
 in $[-1, 1]$. 
Denote by  $g_\eta$ the reconstructed signal from the  noisy phaseless samples  \eqref{noisydata2.defcase2} via the proposed MAPS algorithm
and let  $E_{\eta}:=\||g_\eta|-|f|\|_\infty$  be the maximal reconstruction error  in
 magnitude measurements between the original signal $f$ and the reconstructed signal $g_\eta$ for different noise levels $\eta$.
Shown in Table \ref{separable.tent.table} is the average of maximal reconstruction error $E_{\eta}$ in 200 trials. 
 This confirms  the conclusion in Theorem \ref{stability.thm} that the maximal reconstruction error depends almost linearly on the noise level $\eta\ge 0$.

\begin {table}[H]
\caption {Maximal reconstruction error via the MAPS algorithm}\label{separable.tent.table}
\begin{center}
\begin{tabular}{ |c|c|c|c|c|c|c|c|c|}
\hline
$\eta $
& 0.04 &0.03& 0.02 & 0.01
& 0.008
& 0.004
& 0.002& 0.001 \\
\hline
$E_\eta$ 
& 0.1878 & 0.1366 & 0.0791 &0.0305 
& 0.0226
&	0.0101
  & 0.0050& 0.0025 \\
\hline
\end{tabular}
\end{center}
\end{table}

\begin{appendix}
\section{Density of phaseless sampling sets}
\label{phaseless.appendix}

In the appendix,  we introduce a necessary condition on a discrete set $\Gamma$
such that
${\mathcal M}_{f, \Gamma}={\mathcal M}_f$  for all $f\in V(\Phi)$. We show that that the  density of such a discrete set $\Gamma$
 is no less than the innovation rate of signals  in $V(\Phi)$, see Theorem \ref {density.thm} and Corollary
 \ref{density.cor2}.

\begin{theorem}\label{density.thm}  Let  the domain $D$, the generator
$\Phi: = (\phi_\lambda)_{\lambda\in \Lambda}$,  
the family   ${\mathcal T}=\{T_\theta, \theta\in \Theta\}$ of open sets and the linear space $V(\Phi)$  be as in Theorem \ref{pr.thm}.
If
${\mathcal M}_{f, \Gamma}={\mathcal M}_f$ for all $f\in V(\Phi)$ with ${\mathcal M}_f=\{\pm f\}$, then
\begin{equation}  \label{density.thm.eq2}
D_+(\Gamma)\ge  D_+(\Lambda).
\end{equation}
\end{theorem}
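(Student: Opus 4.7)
The plan is to reduce the phaseless hypothesis to an ordinary (linear) sampling statement on $\Gamma$, namely $\mathcal N_\Gamma=\{0\}$, and then convert this into the density bound by a support and dimension-counting argument.

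First I would exhibit an explicit phase retrievable test signal, take any single basis element $\phi_{\lambda_0}$ (nonzero by Assumption \ref{generator.assumption}). Its graph $\mathcal G_{\phi_{\lambda_0}}$ is a single vertex and hence trivially connected, so Theorem \ref{pr.thm} gives $\mathcal M_{\phi_{\lambda_0}}=\{\pm \phi_{\lambda_0}\}$. For any $h\in \mathcal N_\Gamma$, the signal $\phi_{\lambda_0}+h$ agrees with $\phi_{\lambda_0}$ at every $\gamma\in \Gamma$, so $\phi_{\lambda_0}+h\in \mathcal M_{\phi_{\lambda_0},\Gamma}$; the hypothesis forces this set to equal $\mathcal M_{\phi_{\lambda_0}}$, so $h\in \{0,\,-2\phi_{\lambda_0}\}$. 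Since $\mathcal N_\Gamma$ is a linear subspace of $V(\Phi)$, if it contained $-2\phi_{\lambda_0}$ it would also contain $-4\phi_{\lambda_0}\notin\{0,\,-2\phi_{\lambda_0}\}$, a contradiction. Hence $\mathcal N_\Gamma=\{0\}$; equivalently, the linear sampling map $f\mapsto(f(\gamma))_{\gamma\in\Gamma}$ is injective on $V(\Phi)$.

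Next I would exploit the localization of the generators. Fix $x\in D$ and $R>0$, and consider the finite-dimensional subspace
\begin{equation*}
W_{x,R}:=\Span\{\phi_\lambda:\ \lambda\in \Lambda\cap B(x,R)\}\subset V(\Phi).
\end{equation*}
By global linear independence \eqref{generator.assumption.eq3}, $\dim W_{x,R}=\#(\Lambda\cap B(x,R))$, and by \eqref{generator.assumption.eq2} every $f\in W_{x,R}$ vanishes outside $B(x,R+r_0)$. Injectivity of the sampling map restricted to $W_{x,R}$, together with the fact that only samples inside $B(x,R+r_0)$ can be nonzero, yields the local counting inequality
\begin{equation*}
\#(\Lambda\cap B(x,R))=\dim W_{x,R}\le \#\bigl(\Gamma\cap B(x,R+r_0)\bigr).
\end{equation*}

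Finally, dividing by $\mu(B(x,R))$, taking $\sup_{x\in D}$ and then $\limsup_{R\to\infty}$, I obtain
\begin{equation*}
D_+(\Lambda)\le D_+(\Gamma)\cdot \limsup_{R\to\infty}\sup_{x\in D}\frac{\mu(B(x,R+r_0))}{\mu(B(x,R))},
\end{equation*}
and I would close the argument by showing that the trailing factor equals $1$ using \eqref{density.thm.eq1} with $s=R+r_0$ and $r=r_0$. The main (and really only) subtlety is this last step: assumption \eqref{density.thm.eq1} controls the infimum of $\mu(B(x,s-r))/\mu(B(x,s))$, and I must convert that uniform-in-$x$ control into a uniform bound on the reciprocal ratio. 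Since each $\mu(B(x,R+r_0))/\mu(B(x,R))\ge 1$, an $\varepsilon$-argument, i.e., $\inf_x \mu(B(x,R))/\mu(B(x,R+r_0))>1-\varepsilon$ for large $R$, gives $\sup_x \mu(B(x,R+r_0))/\mu(B(x,R))<(1-\varepsilon)^{-1}$, which suffices to drive the factor down to $1$ and complete the proof of \eqref{density.thm.eq2}.
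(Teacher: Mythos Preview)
Your argument is correct, and its global shape matches the paper's: first establish that the linear sampling map on $V(\Phi)$ is injective (equivalently $\mathcal N_\Gamma=\{0\}$), then use support localization \eqref{generator.assumption.eq2} and a dimension count to get a local counting inequality between $\Lambda$ and $\Gamma$, and finally pass to densities using \eqref{density.thm.eq1}. The paper derives the same counting inequality in the equivalent form $\#(\Gamma\cap B(x,r))\ge \#(\Lambda\cap B(x,r-r_0))$.

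The genuine difference lies in how injectivity is obtained. The paper argues by contradiction: given a nonzero $f\in\mathcal N_\Gamma$, it invokes the landscape decomposition (Theorem \ref{decomposition.thm00}) to write $f=\sum_i f_i$ with each $f_i$ nonseparable, observes that the disjoint-support property forces $f_i(\gamma)=0$ for all $\gamma\in\Gamma$, and then reaches a contradiction with $\mathcal M_{f_i,\Gamma}=\mathcal M_{f_i}=\{\pm f_i\}$. Your route is more elementary: you pick a single generator $\phi_{\lambda_0}$, note that its graph is a single vertex so Theorem \ref{pr.thm} already gives $\mathcal M_{\phi_{\lambda_0}}=\{\pm\phi_{\lambda_0}\}$, and then use linearity of $\mathcal N_\Gamma$ to rule out any nonzero element. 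This avoids Theorem \ref{decomposition.thm00} entirely and makes the dependence on the hypotheses more transparent; the paper's version, on the other hand, fits naturally into the narrative built around the decomposition machinery.
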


\begin{proof}
Take $x\in D$ and $r\ge r_0$.  By \eqref{density.thm.eq1} and \eqref{generator.assumption.eq1}, 
it suffices to prove that 
\begin{equation}\label{density.thm.pf.eq2}
\#(\Gamma\cap B(x, r))\ge \#  (\Lambda\cap B(x, r-r_0)).
\end{equation}
Assume, on the contrary,  that \eqref{density.thm.pf.eq2} does not hold. Then
there exists a nonzero vector $(d_\lambda)_{\lambda\in \Lambda\cap B(x, r-r_0)}$ such that
\begin{equation} \sum_{\lambda\in B(x, r-r_0)\cap \Lambda} d_\lambda \phi_\lambda(\gamma)=0, \ \gamma\in \Gamma\cap B(x, r).
\end{equation}
Recall that $\phi_\lambda, \lambda\in \Lambda$, are supported in $B(\lambda, r_0)$ by Assumption \ref{generator.assumption}.
Hence
\begin{equation}
\sum_{\lambda\in B(x, r-r_0)\cap \Lambda} d_\lambda \phi_\lambda(\gamma)=0, \ \gamma\in \Gamma\backslash B(x, r).
\end{equation}
Therefore
the set
$$W=\Big\{f:=\sum_{\lambda\in \Lambda\cap B(x, r-r_0)} c_\lambda \phi_\lambda: \   f(\gamma)=0,  \ \gamma\in \Gamma\Big\}\subset V(\Phi)$$
contains nonzero signals.
Take a nonzero signal $f\in W$. By  Theorem \ref{decomposition.thm00}, $f=\sum_{i\in I} f_i$  for some
 nonzero  signals   $f_i\in V(\Phi), i\in I$, such that ${\mathcal M}_{f_i}=\{\pm f_i\}, i\in I$, and
$f_if_i'=0$ for all distinct $i, i'\in I$. This together with $f\in W$ implies that
$f_i(\gamma)=0$ for all $\gamma\in \Gamma$ and  $i\in I$.   Hence $0\in {\mathcal M}_{f_i, \Gamma}, i\in I$,
which contradicts with ${\mathcal M}_{f_i, \Gamma}= {\mathcal M}_{f_i}=\{\pm f_i\}, i\in I$.
\end{proof}

From the above argument, we have the following result without the assumption on the family ${\mathcal T}$ of open sets in Theorem \ref{density.thm}.

\begin{corollary}\label{density.cor2}
Let  the domain $D$ and  the generator
$\Phi: = (\phi_\lambda)_{\lambda\in \Lambda}$ satisfy Assumptions \ref{domain.assumption} and \ref{generator.assumption} respectively,
and 
define the linear space $V(\Phi)$  by \eqref{Vphi.def}.
If $\Gamma$ is a discrete set with ${\mathcal M}_{f, \Gamma}={\mathcal M}_f$ for all $f\in V(\Phi)$, then
 $D_+(\Gamma)\ge  D_+(\Lambda)$.
 \end{corollary}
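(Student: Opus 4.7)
The plan is to follow the same contradiction scheme used for Theorem \ref{density.thm}, but to replace the appeal to Theorem \ref{decomposition.thm00} by a direct observation about the zero signal. The crucial point is that Theorem \ref{decomposition.thm00} was only invoked in order to manufacture a nonzero signal $f_i$ that vanishes identically on $\Gamma$ and whose ${\mathcal M}_{f_i}$ is a singleton up to sign; but for the weaker conclusion here, we do not need ${\mathcal M}_{f_i}$ to be a singleton — any nonzero signal in $V(\Phi)$ that vanishes on all of $\Gamma$ already produces the contradiction via the sampling hypothesis applied to itself.

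First, I would fix $x\in D$ and $r\ge r_0$ and aim to establish the key pointwise counting estimate
\begin{equation*}
\#\bigl(\Gamma\cap B(x,r)\bigr)\ge \#\bigl(\Lambda\cap B(x,r-r_0)\bigr),
\end{equation*}
exactly as in the proof of Theorem \ref{density.thm}. Arguing by contradiction, suppose the inequality fails. Then evaluation on the finite set $\Gamma\cap B(x,r)$ gives a linear map from $\R^{\Lambda\cap B(x,r-r_0)}$ to $\R^{\Gamma\cap B(x,r)}$ whose domain has strictly larger dimension than its codomain, so there exists a nonzero vector $(d_\lambda)_{\lambda\in\Lambda\cap B(x,r-r_0)}$ whose associated signal $f=\sum_{\lambda\in\Lambda\cap B(x,r-r_0)} d_\lambda\phi_\lambda$ vanishes on $\Gamma\cap B(x,r)$. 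Since each $\phi_\lambda$ is supported in $B(\lambda,r_0)$ and $\lambda\in B(x,r-r_0)$ forces $B(\lambda,r_0)\subset B(x,r)$, the signal $f$ also vanishes on $\Gamma\setminus B(x,r)$. Hence $f(\gamma)=0$ for every $\gamma\in\Gamma$, and by global linear independence of $\Phi$ (part of Assumption \ref{generator.assumption}), $f\not\equiv 0$.

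Now comes the divergence from the proof of Theorem \ref{density.thm}: instead of decomposing $f$ into islands, I would simply observe that the zero signal satisfies $|0(\gamma)|=0=|f(\gamma)|$ for all $\gamma\in\Gamma$, so $0\in {\mathcal M}_{f,\Gamma}$. By the hypothesis ${\mathcal M}_{f,\Gamma}={\mathcal M}_f$, it follows that $0\in {\mathcal M}_f$, which by the definition \eqref{mf.def} means $|f(x)|=0$ for all $x\in D$, i.e.\ $f\equiv 0$. This contradicts $f\not\equiv 0$ and proves the counting estimate.

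Finally, I would convert the pointwise inequality into the density bound: dividing by $\mu(B(x,r))$, taking suprema over $x\in D$, sending $r\to\infty$, and applying \eqref{density.thm.eq1} (which lets $\mu(B(x,r-r_0))/\mu(B(x,r))\to 1$ uniformly) together with the definitions \eqref{generator.assumption.eq1} of $D_+(\Lambda)$ and $D_+(\Gamma)$, yields $D_+(\Gamma)\ge D_+(\Lambda)$. The only mildly delicate point is the step $0\in{\mathcal M}_{f,\Gamma}\Rightarrow 0\in {\mathcal M}_f$, but this is immediate from the definitions once one notices that the hypothesis on $\Gamma$ is phrased for all $f\in V(\Phi)$ and in particular applies to the $f$ constructed above without any need for local phase retrievability on pieces of $D$.
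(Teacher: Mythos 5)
Your proof is correct and follows the same skeleton as the paper's (the paper proves this corollary only by the remark ``from the above argument,'' referring to the proof of Theorem \ref{density.thm}): the counting estimate $\#(\Gamma\cap B(x,r))\ge \#(\Lambda\cap B(x,r-r_0))$, the construction via the rank argument of a nonzero signal supported in $B(x,r)$ that vanishes on all of $\Gamma$, and the passage to densities via \eqref{density.thm.eq1} are all identical. Your one departure --- replacing the appeal to Theorem \ref{decomposition.thm00} (which needs the family ${\mathcal T}$ and the local complement/local linear independence hypotheses) by the direct observation that $0\in{\mathcal M}_{f,\Gamma}={\mathcal M}_f$ forces $|f|\equiv 0$ on $D$ --- is exactly the modification required to justify the corollary without the assumptions on ${\mathcal T}$, since the hypothesis here is assumed for \emph{all} $f\in V(\Phi)$ rather than only for nonseparable ones; this is evidently what the authors intend by their one-line remark, and you have filled in the detail correctly.
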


We finish this appendix with a remark that the low bound in \eqref{density.thm.eq2} can be reached when  the generator $\Phi=(\phi_\lambda)_{\lambda\in \Lambda}$
satisfies that
\begin{equation}
S_\Phi(\lambda, \lambda')=\emptyset \ {\rm for \ all\ distinct} \ \lambda, \lambda'\in \Lambda.
\end{equation}
As in this case, a signal $f\in V(\Phi)$ is nonseparable if and only if $f=c_\lambda \phi_\lambda$ for some $\lambda\in \Lambda$. Thus the set $\Gamma=\{ a(\lambda), \lambda\in \Lambda\}$ is a phaseless sampling set whose upper  density is the same as the rate of innovation, where
$a(\lambda), \lambda\in \Lambda$, are chosen so that $\phi_\lambda(a(\lambda))\ne 0$.
\end{appendix}

\smallskip

{\bf Acknowledgment}. The authors thank Professor Ingrid Daubechies for her valuable comments and suggestions in the
preparation of the paper. 

\end{document}